  \newcounter{mycounter}
\DeclareFontFamily{U}{bbold}{}\DeclareFontShape{U}{bbold}{m}{n}{%
  <4.25>bbold5<5>bbold5<6>bbold6<7>bbold7<8>bbold8<8.5>bbold9<9.5>bbold10<10>bbold10<12>bbold12}{}
\DeclareMathSymbol{\mathinvertedexclamationmark}{\mathord}{operators}{'074}
\DeclareMathSymbol{\mathexclamationmark}{\mathord}{operators}{'041}
\newcommand{\raisedmathinvertedexclamationmark}{%
  \mathord{\mathpalette\raised@mathinvertedexclamationmark\relax}%
}
\newcommand{\raised@mathinvertedexclamationmark}[2]{%
  \raisebox{\depth}{$\m@th#1\mathinvertedexclamationmark$}%
}
\newcommand{\terminal}{\mathexclamationmark}
\newcommand{\initial}{\raisedmathinvertedexclamationmark}
\renewcommand*\showkeyslabelformat[1]{%
\@ifundefined{hideNextShowKeysLabel}{%
\noexpandarg%
\StrSubstitute{#1}{ }{\textvisiblespace}[\TEMP]%
\parbox[t]{\marginparwidth}{\raggedright\normalfont\small\ttfamily\(\{\){\color{red!50!black}\expandafter\seqsplit\expandafter{\TEMP}}\(\}\)}%
}{}
}
\theoremstyle{definition}
\newtheorem{remark}[theorem]{Remark}
\newtheorem{assumptions}[theorem]{Assumptions}
\newtheorem{notation}[theorem]{Notation}
\setlist[enumerate,1]{label=(\arabic*),font=\normalfont,align=left,leftmargin=0pt,labelindent=0pt,listparindent=\parindent,labelwidth=0pt,itemindent=!,topsep=0pt,parsep=0pt,itemsep=0pt,start=1}
\setlist[enumerate,2]{label=(\alph*),font=\normalfont,labelindent=*,topsep=0pt,leftmargin=*,start=1}
\setlist[itemize]{labelindent=*,leftmargin=*,topsep=5pt,itemsep=3pt}
\setlist[description]{labelindent=*,leftmargin=*,itemindent=-1 em}
\def\D{\DCat}
\def\epsilon{\varepsilon}
\renewcommand{\rho}{\varrho}
\def\ol{\overline}
\def\o{\cdot}
\newcommand{\lin}[1]{\mathsf{lin}(#1)}
\renewcommand{\vec}[1]{{\overset{{}_{\rightarrow}}{#1}} }
\newcommand{\vecru}[1]{{ \underset{\overset{\leftarrow}{}}{#1} }}
\newcommand{\vecr}[1]{{\overset{{}_{\leftarrow}}{#1} }}
\newcommand{\inl}{\mathsf{inl}}
\newcommand{\outl}{\mathsf{outl}}
\newcommand{\outr}{\mathsf{outr}}
\newcommand{\inr}{\mathsf{inr}}
\renewcommand{\t}{\otimes}
\newcommand{\takeout}[1]{\empty}
\newcommand{\alphaeq}{\equiv_\alpha}
\newcommand{\supp}{\mathop{\mathsf{supp}}}
\newcommand{\At}{\mathbb{A}}
\newcommand{\Perm}{\mathrm{Perm}}
\newcommand{\Nom}{\mathbf{Nom}}
\renewcommand{\phi}{\varphi}
\newcommand{\xra}{\xrightarrow}
\newcommand{\Lra}{\Leftrightarrow}
\newcommand{\seq}{\subseteq}
\newcommand{\Min}[1]{\mathsf{Min}(#1)}
\newcommand{\Alg}{\mathbf{Alg}\,}
\newcommand{\Set}{\mathbf{Set}}
\newcommand{\E}{\mathcal{E}}
\newcommand{\Pos}{\mathbf{Pos}}
\newcommand{\ACat}{\mathscr{A}}
\newcommand{\BCat}{\mathscr{B}}
\newcommand{\DCat}{\mathscr{D}}
\newcommand{\MT}{\mathbf{T}}
\newcommand{\id}{\mathit{id}}
\newcommand{\Id}{\mathsf{Id}}
\newcommand{\Coalg}[1]{\mathbf{Coalg}\,#1}
\newcommand{\Vect}[1]{#1\text{-}\mathbf{Vec}}
\newcommand{\JSL}{{\mathbf{JSL}}}
\newcommand{\colim}{\mathop{\mathsf{colim}}}
\newcommand{\under}[1]{{|#1|}}
\newcommand{\cl}{\mathsf{cl}}
\newcommand{\cs}{\mathsf{cs}}
\newcommand{\epito}{\twoheadrightarrow}
\newcommand{\monoto}{\rightarrowtail}
\newcommand{\M}{\mathcal{M}}
\newcommand{\Pow}{\mathcal{P}}
\newcommand{\op}{\mathsf{op}}
\newcommand{\Syn}[1]{\mathsf{Syn}(#1)}
\newcommand{\K}{\mathbb{K}}
\newcommand{\Nat}{\mathds{N}}
\newcommand{\To}{\Rightarrow}
\mathchardef\ordinarycolon\mathcode`\:
\mathchardef\hyph="2D
\newcommand{\dash}{\mathord{-}}
\def\o{\cdot}
\begin{document}

\title[Automata Learning: An Algebraic Approach]{Automata Learning: An Algebraic Approach}         


\author{Henning Urbat}
\authornote{The authors acknowledge support by Deutsche Forschungsgemeinschaft (DFG) under project SCHR~1118/8-2.}          
\affiliation{
  \institution{Friedrich-Alexander-Universität Erlangen-Nürnberg}            
  \streetaddress{Martensstr. 3}
  \city{Erlangen}
  \postcode{91058}
  \country{Germany}                    
}
\email{henning.urbat@fau.de}          

\author{Lutz Schröder}
\authornotemark[1]         
\affiliation{
  \institution{Friedrich-Alexander-Universität Erlangen-Nürnberg}            
  \streetaddress{Martensstr. 3}
  \city{Erlangen}
  \postcode{91058}
  \country{Germany}                    
}
\email{lutz.schroeder@fau.de}         

\begin{abstract}
We propose a generic categorical framework for learning unknown
  formal languages of various types (e.g.~finite or infinite words, weighted and nominal languages). Our approach is parametric
  in a monad $\MT$ that represents the given type of languages and
  their recognizing algebraic structures. Using the concept of an
  automata presentation of $\MT$-algebras, we demonstrate that the
  task of learning a $\MT$-recognizable language can be reduced to
  learning an abstract form of algebraic automaton whose transitions are modeled by a functor. For the important case of adjoint automata, we devise a learning algorithm
  generalizing Angluin's $\mathsf{L}^*$. The algorithm is
  phrased in terms of categorically described extension steps; we
  provide for a termination and complexity analysis based on a
  dedicated notion of finiteness. Our framework applies to structures
  like $\omega$-regular languages that were not
  within the scope of existing categorical accounts of automata
  learning. In addition, it yields new learning algorithms for
  several types of languages for which no such algorithms were
  previously known at all, including sorted languages, nominal
  languages with name binding, and cost functions.
\end{abstract}

\begin{CCSXML}
<ccs2012>
<concept>
<concept_id>10011007.10011006.10011008</concept_id>
<concept_desc>Software and its engineering~General programming languages</concept_desc>
<concept_significance>500</concept_significance>
</concept>
<concept>
<concept_id>10003456.10003457.10003521.10003525</concept_id>
<concept_desc>Social and professional topics~History of programming languages</concept_desc>
<concept_significance>300</concept_significance>
</concept>
</ccs2012>
\end{CCSXML}

\ccsdesc[500]{Software and its engineering~General programming languages}
\ccsdesc[300]{Social and professional topics~History of programming languages}

\keywords{Automata Learning, Monads, Algebras}  

\maketitle

\section{Introduction}
Active automata learning is the task of inferring a finite
representation of an unknown formal language by asking questions to a
teacher. Such learning situations naturally arise, e.g., in software
verification, where the ``teacher'' is some reactive system and one
aims to construct a formal model of it by running suitable tests
\cite{vandr2017}. Starting with Angluin's~\cite{angluin87} pioneering
work on learning regular languages, active learning algorithms have
been developed for countless types of systems and languages, including
$\omega$-regular languages \cite{fcctw08,af16}, tree languages
\cite{dh03}, weighted languages \cite{bm15,hkrss19}, and nominal languages \cite{mssks17}. Most of these
extensions are tailor-made modifications of Angluin's $\mathsf{L}^*$ algorithm and thus
bear close structural analogies. This has motivated recent work
towards a uniform category theoretic understanding of automata learning,
based on modelling state-based systems as \emph{coalgebras} \cite{hss17_2,bkr19}. In the present paper, we propose a
novel \emph{algebraic} approach to automata learning.

Our contributions are two-fold. First, we study the problem of
learning an abstract form of automata originally introduced by Arbib and Manes \cite{am75} in the context of minimization: given an endofunctor $F$ on a category $\D$ and objects $I,O\in \D$, an
\emph{$F$-automaton} consists of an object $Q$ of states and morphisms $\delta_Q$, $i_Q$ and $f_Q$ as shown below, representing transitions, initial states and final states (or outputs).
\begin{equation*}
\vcenter{\xymatrix@R-1em{
& FQ \ar[d]^{\delta_Q} & \\
I \ar[r]_{i_Q} & Q \ar[r]_{f_Q} & O 
}
}
\end{equation*}
Taking $FQ=\Sigma\times Q$ on
$\Set$ with $I=1$ and $O=\{0,1\}$ yields classical deterministic
automata, but also several other notions of automata  (e.g.\ weighted automata, residual
nondeterministic automata, and nominal automata) arise as
instances. As our first main result, we devise a generalized
  $\mathsf{L}^*$ algorithm for \emph{adjoint $F$-automata}, i.e.~automata whose type functor $F$ admits a right adjoint $G$, based on alternating moves along
the initial chain for the functor $I+F$ and the final cochain for the
functor $O\times G$. Our generic algorithm subsumes known $\mathsf{L}^*$-type
algorithms for all the above classes of automata, and its analysis yields
uniform proofs of their correctness and termination. In addition, it also instantiates to a number of new learning algorithms, e.g.~ for sorted automata and for several versions of nominal automata with name binding.

We subsequently show that learning algorithms for $F$-automata (including our generalized $\mathsf{L}^*$ algorithm) apply far
beyond the realm of automata: they can be used to learn
languages representable by \emph{monads} \cite{boj15,uacm17}. Given a
monad $\MT$ on the category $\D$, we model a \emph{language} as a
morphism $L\colon TI \to O$ in $\D$. At this level of generality, one
obtains a concept of \emph{$\MT$-recognizable language} (i.e.\ a language
recognized by a finite $\MT$-algebra) that uniformly captures numerous
automata-theoretic classes of languages. For instance, regular
and $\omega$-regular languages (the languages accepted by
classical finite automata and Büchi automata, respectively) correspond precisely to
$\MT$-recognizable languages for the monads representing semigroups and
Wilke algebras,
\[
\MT I = I^+  \text{ on } \Set\quad\text{and}\quad \MT(I,J) = (I^+, I^{\mathsf{up}} + I^*\times J) 
  \text{ on } \Set^2.
\]
Here $I^{\mathsf{up}}$ denotes the set of ultimately periodic
infinite words over the alphabet $I$. For $\omega$-regular languages, Farzan et
al. \cite{fcctw08} proposed an algorithm that learns a language
$L\seq I^\omega$ of infinite words by learning the set of lassos in
$L$, i.e.~the regular language of \emph{finite} words given by
\[\mathsf{lasso}(L) = \{\, u\$v : u\in I^*, v\in I^+, uv^\omega \in L
\,\}\seq (I+ \{\$\})^*.\]  We show that this idea extends to general
$\MT$-recognizable languages, using the concept of an \emph{automata
  presentation}. Such a presentation allows
 for the \emph{linearization} of $\MT$-recognizable languages, i.e.~a reduction to ``regular'' languages accepted by finite $F$-automata for
suitable~$F$. 

In combination, our results yield a generic strategy for
learning an unknown $\MT$-recognizable language $L\colon TI\to O$: 
\begin{enumerate}
\item
find an automata presentation for the free $\MT$-algebra $T I$;
\item learn the minimal automaton for the linearization of $L$. 
\end{enumerate}
This approach turns out to be applicable
to a wide range of languages. In particular, it covers several settings for which no learning algorithms are known, e.g.~cost functions \cite{colcombet09}.

\smallskip \noindent \textsf{\textbf{Related work.}} A categorical
interpretation of several key concepts in Angluin's $\mathsf{L}^*$ algorithm
for classical automata was first given by Jacobs and Silva
\cite{js14}, and later extended to $F$-automata in a category, i.e.\
to similar generality as in the present paper, by van Heerdt,
Sammartino, and Silva \cite{hss17}. Their main contribution is an abstract categorical
framework (CALF) for correctness proofs of learning algorithms, while a
concrete generic algorithm is not given. Van Heerdt et
al. \cite{hss17_2} also study learning automata with side effects
modelled via monads; this use of monads is unrelated
to the monad-based abstraction of algebraic recognition in the present
paper. Barlocco, Kupke, and Rot~\cite{bkr19} develop a learning
algorithm for \emph{set} coalgebras (with all underlying concepts phrased categorically), parametric in a coalgebraic logic. Its scope is quite different from our generalized $\mathsf{L}^*$ algorithm: via genericity over the
branching type it covers, e.g., labeled
transition systems, but unlike our algorithm it does not apply to, e.g., nominal
automata. The connections between the two approaches are further discussed in \Cref{rem:coalglogic}.


Automata learning can be seen as an interactive version of automata
minimization, which has been extensively studied from a (co-)algebraic
perspective~\cite{am75,goguen75,at90,bkp12,hkrss19,cp17}. In particular, our
chain-based iterative learning algorithm resembles the coalgebraic approach to partition
refinement~\cite{abhks12}.

\section{Preliminaries}\label{sec:preliminaries}
We proceed to recall concepts from category theory and the theory of nominal sets that we will use throughout the paper. Readers should be familiar with basic notions such as functors, (co-)limits and adjunctions; see, e.g., Mac Lane~\cite{maclane}.

\medskip\noindent\textsf{\textbf{Functor (co-)algebras.}} Let $H\colon \D\to \D$ be an endofunctor on a category $\D$. An \emph{$H$-algebra} is a pair $(A,\alpha)$ consisting of an object $A\in \D$ and a morphism $\alpha\colon HA\to A$. A \emph{homomorphism} $h\colon (A,\alpha)\to (B,\beta)$ between $H$-algebras is a morphism $h\colon A\to B$ such that $h\o \alpha= \beta\o Fh$. An $H$-algebra $(A,\alpha)$ is  \emph{initial} if for every $H$-algebra $(B,\beta)$ there is a unique homomorphism $(A,\alpha)\to (B,\beta)$; we generally denote the initial algebra of~$H$ (unique up to isomorphism if it exists) as~$\mu H$. If $\D$ is cocomplete and $H$ preserves filtered colimits, $\mu H$ can be constructed as the colimit of the \emph{initial $\omega$-chain for $H$} \cite{adamek74}:
\[\mu H \;=\; \colim(\,0 \xra{\initial} H0 \xra{H\initial} H^2 0 \xra{H^2\initial} H^30 \to \cdots \,),\]
where $\initial$ is the unique morphism from the initial object $0$ of $\D$ into $H0$, and $H^n$ means $H$ applied $n$ times. Letting $j_n\colon H^n0 \to \mu H$ ($n\in \Nat$) denote the colimit cocone, we obtain the $H$-algebra structure on $\mu H$ as the unique morphism $\alpha\colon H(\mu H)\to \mu H$ satisfying 
\[\alpha\o Hj_n = j_{n+1} \quad \text{for all $n\in \Nat$}.\]
Dually, one has notions of a \emph{coalgebra} for the endofunctor $H$,
a \emph{coalgebra homomorphism}, and a \emph{final
  coalgebra}. Coalgebras provide an abstract notion of state-based
transition system: We think of the base object~$A$ of an $H$-coalgebra
as an object of \emph{states}, and of its structure map
$\alpha:A\to HA$ as assigning to each state a structured collection of
successors. Coalgebra homomorphisms are behaviour-preserving maps, and
final coalgebras have abstracted behaviours as states.

\medskip\noindent\textsf{\textbf{Monad algebras.}} A \emph{monad} $\MT=(T,\mu,\eta)$ on a category $\D$ is given by an endofunctor $T\colon \D\to \D$ and two natural transformations
$\eta\colon \Id_\D \to T$ and $\mu\colon TT\to T$ (the \emph{unit} and \emph{multiplication}) such that the following diagrams commute:
\[
\xymatrix@R-1em{
TTT \ar[r]^{T\mu} \ar[d]_{\mu T} & TT \ar[d]^\mu \\
TT \ar[r]_{\mu} & T
}
\qquad
\xymatrix@R-1em{
T \ar[r]^{T\eta} \ar@{=}[dr] & TT \ar[d]^\mu  & T \ar[l]_{\eta T} \ar@{=}[dl] \\
& T &
}
\]
A $\MT$-algebra is an algebra $(A,\alpha)$ for the endofunctor $T$ for which the following diagrams commute:
\[
\xymatrix@R-1em{
TTA \ar[r]^{\mu_A} \ar[d]_{T\alpha} & TA \ar[d]^\alpha \\
TA \ar[r]_{\alpha} & A
}
\qquad 
\xymatrix@R-1em{
A \ar@{=}[dr] \ar[r]^{\eta_A} & TA \ar[d]^\alpha \\
& A
}
\]
A \emph{homomorphism} of $\MT$-algebras is just a homomorphism of the underlying $T$-algebras. 
For each $X\in \D$, the $\MT$-algebra $\MT X = (TX,\mu_X)$ is called the \emph{free $\MT$-algebra} on $X$.

Monads form a categorical abstraction of algebraic theories \cite{manes76}. In fact, every algebraic theory (given by a finitary signature $\Gamma$ and a set $E$ of equations between $\Gamma$-terms) induces of monad $\MT$ on $\Set$ where $TX$ is the underlying set of the free $(\Gamma,E)$-algebra on $X$ (i.e.~the set of all $\Gamma$-terms over $X$ modulo equations in $E$), and the maps $\eta_X\colon X\to TX$ and $\mu_X\colon TTX\to TX$ are given by inclusion of variables and flattening of terms, respectively. Then the categories of $\MT$-algebras and $(\Gamma,E)$-algebras are isomorphic. Conversely, every monad $\MT$ on $\Set$ with $T$ preserving filtered colimits arises from some algebraic theory $(\Gamma,E)$ in this way.

Similarly, every ordered algebraic theory \cite{bloom76}, given by a signature $\Gamma$ and a set $E$ of inequations $s\leq t$ between $\Gamma$-terms,  yields a monad $\MT$ on the category $\Pos$ of posets whose algebras are ordered $\Gamma$-algebras (i.e.~$\Gamma$-algebras on a poset with monotone operations) satisfying the inequations in $E$.

\medskip\noindent \textsf{\textbf{Free monads.}} Let $H\colon \D\to \D$ be an endofunctor on a category $\D$ with coproducts, and suppose that, for each $X\in \D$, the initial algebra $\mu(X+H)$ for the functor $X+H$ exists. Then $H$ induces a monad $\MT_H$, the \emph{free monad over~$H$}~\cite{barr70}. It is given on objects by
$T_H X = \mu(X+H)$; its action on morphisms and the unit and multiplication are defined via initiality of the algebras $\mu(X+H)$. Then the categories of $\MT_H$-algebras and $H$-algebras are isomorphic: If $B+H(T_H B) \xra{[i_B,\alpha_B]} T_HB$ denotes the $B+H$-algebra structure of $T_HB=\mu(B+H)$, the isomorphism is given on objects by
\[(T_HB\xra{\beta} B)\quad\mapsto\quad  (HB\xra{Hi_B} H(T_H B) \xra{\alpha_B} T_HB \xra{\beta} B)  \]
and on morphisms by $h\mapsto h$.

\medskip\noindent\textsf{\textbf{Factorization systems.}} A \emph{factorization system} $(\E,\M)$ in a category $\D$ is given by two classes $\E$ and $\M$ of morphisms such that (i) $\E$ and $\M$ are closed under composition and contain all isomorphisms, (ii) every morphism $f$ has a factorization $f=m\o e$ with $e\in \E$ and $m\in \M$, and (iii) the \emph{diagonal fill-in} property holds: given a commutative square $m\o f = g\o e$ with $e\in \E$ and $m\in \M$, there exists a unique morphism $d$ with $f= d\o e$ and $g=m\o d$.
The morphisms $m$ and $e$ in (i) are unique up to isomorphism and are called the \emph{image} and \emph{coimage} of $f$. Categories of (co-)algebras typically inherit factorizations from their underlying category:
\begin{enumerate} 
\item If $H\colon \D\to \D$ is an endofunctor with $H(\E)\seq \E$, the factorization system $(\E,\M)$ for $\D$ lifts to the category of $H$-algebras, that is, every $H$-algebra homomorphism uniquely factorizes into a homomorphism in $\E$ followed by a homomorphism in $\M$. Dually, if $H(\M)\seq \M$, then the category of $H$-coalgebras has a factorization system lifting $(\E,\M)$.
\item If $\MT$ is a monad on $\D$ with $T(\E)\seq \E$, the factorization system $(\E,\M)$ for $\D$ lifts to the category of $\MT$-algebras.
\end{enumerate}
A factorization system $(\E,\M)$ is \emph{proper} if every morphism in $\E$ is epic and every morphism in $\M$ is monic. Whenever a proper factorization system $(\E,\M)$ is fixed, \emph{quotients} and \emph{subobjects} in $\D$ are represented by morphisms in $\E$ and $\M$, respectively. In particular, in the situation of (1) and (2) above, we represent \emph{quotient \mbox{(co-)}algebras} and \emph{sub\mbox{(co-)}algebras} by homomorphisms in $\E$ and $\M$, respectively.

\medskip\noindent\textsf{\textbf{Closed categories.}} A \emph{symmetric monoidal category} is a category $\D$ equipped with a functor $\otimes\colon \D\times \D\to \D$ (\emph{tensor product}), an object $I_\D\in \D$ (\emph{tensor unit}), and isomorphisms
\[ (X\t Y) \t Z \cong X\t (Y\t Z),\, X\t Y\cong Y\t X,\, I_\D\t X \cong X \cong X\t I_\D, \]
natural in $X,Y,Z\in \D$, satisfying coherence laws \cite[Chapter VII]{maclane}. $\D$ is \emph{closed} if the endofunctor $X\t (\dash)\colon \D\to \D$ has a right adjoint (denoted by $[X,\dash]$) for every $X\in \D$, i.e.~there is a natural isomorphism
$\D(X\t Y,Z) \cong \D(Y,[X,Z])$.

\medskip\noindent\textsf{\textbf{Nominal sets.}} Fix a countably infinite set $\At$ of \emph{names}, and let $\Perm(\At)$ be the group of all permutations $\pi\colon\At\to\At$ with $\pi(a)=a$ for all but finitely many $a$. A \emph{nominal set} \cite{pitts2013} is a set $X$ with a group action $\o\colon \Perm(\At)\times X\to X$ subject to the following property: for each $x\in X$ there is a finite set $S\seq \At$ (a \emph{support} of $x$) such that every $\pi\in\Perm(\At)$ that leaves all elements of $S$ fixed satisfies $\pi\o x = x$. This implies that $x$ has a least support $\supp(x)\seq \At$. The idea is that $x$ is a syntactic object with bound and free variables (e.g.~a $\lambda$-term modulo $\alpha$-equivalence), and that $\supp(x)$ is its set of free variables. A nominal set $X$ is \emph{orbit-finite} if the number of orbits (i.e. equivalence classes of the relation $x\equiv y$ iff $x=\pi\o y$ for some $\pi$) is finite. A map $f\colon X\to Y$ between nominal sets is \emph{equivariant} if $f(\pi\o x)=\pi\o f(x)$ for $x\in X$ and $\pi\in \Perm(\At)$. 
\section{Automata in a Category}\label{sec:automata}
We next develop the abstract categorical notion of automaton that
underlies our generic learning algorithm.
\begin{notation}
For the rest of this paper, let us fix
\begin{enumerate}
\item a category $\D$ with a proper factorization system $(\E,\M)$,
\item an endofunctor $F\colon \D\to \D$, and
\item two objects $I,O\in \D$.
\end{enumerate}
\end{notation}

\begin{definition}[Automaton (cf.~\cite{am75,at90})]\label{def:automaton}
An \emph{($F$-)automaton} is given by an object $Q\in \D$ of states and three morphisms 
\[\delta_Q\colon FQ\to Q,\quad i_Q\colon I\to Q, \quad f_Q\colon Q\to O,\]
representing transitions, initial states, and final states (or outputs), respectively. A \emph{homomorphism} between automata $(Q,\delta_Q,i_Q,f_Q)$ and $(Q',\delta_{Q'},i_{Q'} ,f_{Q'})$ is a morphism $h\colon Q\to Q'$ in $\D$ such that the following diagrams commute:
\[
\xymatrix@R-1em{
FQ \ar[r]^{\delta_Q} \ar[d]_{Fh} & Q \ar[d]^h \\
FQ' \ar[r]_{\delta_{Q'}} & Q' 
}
\qquad
\xymatrix@R-1em{
I \ar[r]^{i_Q} \ar[dr]_{i_{Q'}} & Q \ar[d]^h \ar[r]^{f_Q} & O \\
& Q' \ar[ur]_{f_{Q'}} &
}
\]
\end{definition}

\begin{example}[$\Sigma$-automata]\label{ex:categories}
Suppose that $(\D,\t,I_\D)$ is a symmetric monoidal closed category. Choosing the data
\[F=\Sigma\t (\dash),\quad I=I_\D, \quad\text{and}\quad O\in \D \text{ (arbitrary)}\] for a fixed input alphabet $\Sigma\in \D$ yields Goguen's notion of a \emph{$\Sigma$-automaton} \cite{goguen75}. In our applications, we shall work with the categories $\Set$ (sets and functions), $\Pos$ (posets and monotone maps), $\JSL$ (join-semilattices with $\bot$ and semilattice homomorphisms preserving $\bot$), $\Vect{\K}$ (vector spaces over field $\K$ and linear maps) and $\Nom$ (nominal sets and equivariant maps). The factorization systems and monoidal structures are given in the table below. In the fourth row, $\t$ is the usual tensor product of vector spaces representing bilinear maps. Similarly, in the third row, $\t$ is the tensor product of semilattices representing bimorphisms \cite{bn76}, i.e. semilattice morphisms $h\colon A\t B\to C$ correspond to maps $h'\colon A\times B\to C$ preserving $\vee$ and $\bot$ in each component. 
\begin{table}[ht]
\begin{tabular}[ht]{| l l l l l|}
\hline
  $\D$ & $(\E,\M)$ & $\t$ & $I_\D$ & $O$ \\
\hline
  $\Set$ & (surjective, injective) & $\times$ & $1$ & $\{0,1\}$ \\
  $\Pos$ & (surjective, embedding) & $\times$ & $1$ & $\{0<1\}$\\
  $\JSL$ & (surjective, injective) & $\t$ & $\{0<1\}$ & $\{0<1\}$ \\
  $\Vect{\K} $ & (surjective, injective)  & $\t$ & $\K$ & $\K$ \\
  $\Nom$ & (surjective, injective) & $\times$ & $1$ & $\{0,1\}$ \\
\hline
\end{tabular}
 \caption{Symmetric monoidal closed categories}
 \label{table:categories}
\end{table}

\noindent We choose the input alphabet $\Sigma\in \D$ to be a finite set, a discrete finite poset, a free semilattice on a finite set, a finite-dimensional vector space, and the nominal set $\At$ of atoms, respectively, and the output object $O\in \D$ as shown in the last column. Then $\Sigma$-automata are precisely classical
deterministic automata \cite{rs59}, ordered automata \cite{pin15},
semilattice automata \cite{kp08}, linear weighted automata \cite{dkv},
and nominal automata \cite{bkl14}. See
\Cref{ex:automata} and \ref{ex:nom-aut} for further details.
\end{example}

\begin{example}[Tree automata]\label{ex:treeaut}
Let $\Gamma$ be a signature and $F_\Gamma Q = \coprod_{n\in \Nat}\coprod_{\gamma\in \Gamma_n} Q^n$ on $\Set$ the induced polynomial functor, with $\Gamma_n$ the set of $n$-ary operations in $\Gamma$. Choosing $I=\emptyset$ and $O=2$, an $F_\Gamma$-automaton is a {(bottom-up) tree automaton} over $\Gamma$~\cite{tata2007}, shortly a \emph{$\Gamma$-automaton}. For the analogous functor $F_\Gamma$ on $\Pos$ and $O=\{0<1\}$, we obtain \emph{ordered $\Gamma$-automata}.
\end{example}
In the following, we focus on \emph{adjoint automata}, i.e.~automata whose transition type $F$ is a left adjoint:

\begin{assumptions}\label{asm}
For the rest of this section and in \Cref{sec:learningautomata}, our data is required to satisfy the following conditions:
\begin{enumerate}
\item\label{A1} $\D$ is complete and cocomplete; in particular, $\D$ has an initial object $0$ and a terminal object $1$.
\item\label{A3} The unique morphism $\initial\colon 0\to I$ lies in $\M$, and the unique morphism $!\colon O\to 1$ lies in $\E$.
\item\label{A5} The functor $F\colon \D\to \D$ has a right adjoint $G\colon \D\to \D$.
\item\label{A4} The functor $F$ preserves quotients ($F(\E)\seq \E$).

\end{enumerate}
\end{assumptions}
\begin{example}\label{ex:asm}
Every symmetric monoidal closed category $\D$ with $F=\Sigma\t \dash$ satisfies Assumption  \ref{A5}: closedness asserts precisely that $F$ has the right adjoint $G=[\Sigma,\dash]$. The categories $\D$ of \Cref{table:categories} also satisfy the remaining assumptions.
\end{example}

\begin{remark}\label{rem:algcoalg}
The key feature of our adjoint setting is that automata can be dually viewed as \emph{algebras} and \emph{coalgebras} for suitable endofunctors. In more detail:
\begin{enumerate}
\item An automaton $Q$ corresponds precisely to an algebra  
\[(\,F_I Q\xra{\alpha_Q} Q\,) \;=\; (\,I+FQ \xra{[i_Q,\delta_Q]} Q\,)\] for the endofunctor
$F_I = I + F$
equipped with an output morphism $f_Q\colon Q\to O$. Since $F_I$ preserves filtered colimits (using that the left adjoint $F$ preserves all colimits and the functor $I+(\dash)$ preserves filtered colimits), the
initial algebra $\mu F_I$ for $F_I$ emerges as
the colimit of the initial $\omega$-chain:
\[\mu F_I \;=\; \colim(\,0 \xra{\initial} F_I0 \xra{F_I\initial} F_I^2 0 \xra{F_I^2\initial} F_I^30 \to \cdots \,).\]
The colimit injections and the $F_I$-algebra structure on $\mu F_I$ are denoted by  
\[j_n\colon F_I^n0 \to \mu F_I \quad(n\in \Nat)\qquad\text{and}\qquad \alpha\colon F_I(\mu F_I)\to \mu F_I.\] 
 For any automaton $Q$ (viewed as an $F_I$-algebra), we write 
\[e_Q\colon \mu F_I \to Q\] for the unique $F_I$-algebra homomorphism from  $\mu F_I$ into $Q$.
\item Dually, replacing $\delta_Q\colon FQ\to Q$ by its adjoint transpose $\delta^@_Q\colon Q\to GQ$, an automaton can be presented as a coalgebra
\[(\,Q\xra{\gamma_Q} G_O Q\,) \;=\; (\,Q\xra{\langle f_Q,\delta^@_Q\rangle} O\times GQ\,)\] for the endofunctor $G_O = O\times G$
equipped with an initial state $i_Q\colon I\to Q$. Since $G_O$ preserves cofiltered limits, the
final coalgebra $\nu G_O$ arises as the limit of the final $\omega^\op$-cochain:
\[ \nu G_O = \lim (\,1 \xleftarrow{\terminal} G_O1 \xleftarrow{G_O\terminal} G_O^21 \xleftarrow{G_O^2\terminal} G_O^3 1\leftarrow  \cdots\,).\]
The limit projections and the $G_O$-coalgebra structure on $\nu G_O$  are denoted by
\[j_k'\colon \nu G_O \to G_O^k1\quad(k\in\Nat)\qquad\text{and}\qquad \nu G_O\xra{\gamma} G_O(\nu G_O).\] 
 For any automaton $Q$ (viewed as a $G_O$-coalgebra), we write 
\[m_Q\colon Q \to \nu G_O\] for the unique $G_O$-coalgebra homomorphism into $\nu G_O$.
\end{enumerate}
\end{remark}

\begin{definition}[Language]\label{def:language}
\begin{enumerate}
\item A \emph{language} is a morphism \[L\colon \mu F_I \to O.\]
\item The language \emph{accepted} by an automaton $Q$ is defined by
\[L_Q \;=\; (\,  \mu F_I \xra{e_Q} Q \xra{f_Q} O \,).\]
\end{enumerate}
\end{definition}

\begin{example}[$\Sigma$-automata, continued]\label{ex:automata}
\begin{enumerate}
\item In the setting of \Cref{ex:categories}, the initial algebra $\mu F_I$ and
  the initial chain for the functor $F_I = I_\D+\Sigma\t \dash$ can be
  described as follows \cite{goguen75}. Let
  $\Sigma^n = \Sigma\t \Sigma\t \cdots \t \Sigma$ denote the $n$th
  tensor power of $\Sigma$ (where $\Sigma^0=I_\D$), and put
  \[\Sigma^{<n} = \coprod_{m<n} \Sigma^m\;\;(n\in \Nat) \quad\text{and}\quad
  \Sigma^* = \coprod_{n\in \Nat} \Sigma^n.\] Then $\mu F_I$ is carried by the object~$\Sigma^*$ of words, and the initial chain is given by the coproduct injections
  \[\Sigma^{<0} \monoto \Sigma^{<1} \monoto \Sigma^{<2} \monoto \Sigma^{<3} \monoto
  \cdots.\]
\item For the functor $G_O = O \times [\Sigma,\dash]$ the final coalgebra $\nu G_O$ is carried by the object
  $[\Sigma^*,O]$ of languages and we have the final cochain
  \[[\Sigma^{<0},O] \leftarrow [\Sigma^{<1},O] \leftarrow
  [\Sigma^{<2},O] \leftarrow [\Sigma^{<3},O] \leftarrow \cdots \] with connecting morphisms given by restriction. To see this, consider the contravariant functor $P=[\dash,O]\colon \D\to\D^\op$. It is not difficult to verify that $P$ is a left adjoint (with right adjoint $P^{\op}$) and that there is a natural isomorphism 
\begin{equation*}PF_I \cong G_O^{\op} P.\end{equation*}
If  $\Alg{F_I}$ and $\Coalg{G_O}$ denote the categories of $F_I$-algebras and $G_O$-coalgebras, it follows \cite[Theorem 2.4]{hj98} that $P$ lifts to a left adjoint
$\ol P \colon \Alg{F_I}\to (\Coalg{G_O})^{\op}$
given by \[(\,F_IQ\xra{\alpha_Q} Q\,) \quad\mapsto\quad (\,PQ \xra{P\alpha_Q} PF_IQ \cong G_OPQ\,). \]  
Since left adjoints preserve initial objects, $\ol P$ maps the initial algebra $\mu F_I$ to the final coalgebra $\nu G_O$, i.e. one has $\nu G_O=P(\mu F_I)$ with the coalgebra structure
\[ \gamma = (\, \nu G_O= P(\mu F_I)  \xra{P\alpha} PF_I(\mu F_I) \cong G_OP(\mu F_I) = G_O(\nu G_O) \,). \]
Moreover, applying $P$ to the initial chain for $F_I$ yields the final cochain for $G_O$:
\[(\,1\xleftarrow{\terminal} G_O1 \xleftarrow{G_O\terminal} G_O^21 \cdots \,) = (\, P0\xleftarrow{P\initial} PF_I0 \xleftarrow{PF_I\initial} PF_I^2 0 \cdots \,).\]
Since $\mu F_I=\Sigma^*$ and $P=[\dash,O]$, we obtain the above description of $\nu G_O$ and of the final cochain for $G_O$.
 
%

\item For the categories of \Cref{table:categories}, the categorical notion of (accepted) language given
  in \Cref{def:language} thus specializes to the familiar ones. For
  illustration, let us spell out the case $\D=\Set$. A
  $\Sigma$-automaton in $\Set$ is precisely a classical deterministic automaton: it
  is given by a set $Q$ of states, a transition map
  $\delta_Q\colon \Sigma\times Q \to Q$, a map $i_Q\colon 1\to Q$
  (representing an initial state $q_0=i_Q(\ast)$), and a map
  $f_Q\colon Q\to 2$ (representing a set 
  $f_Q^{-1}[1]$ of final states). From (1) and (2) we obtain the well-known
  description of the initial algebra for $F_I=1+\Sigma\times \dash$ as
  the set $\Sigma^*$ of finite words over $\Sigma$ (with algebra
  structure $\alpha\colon 1+\Sigma\times \Sigma^*\to \Sigma^*$ given
  by $\ast\mapsto \epsilon$ and $(a,w)\mapsto wa$) and of the
  final coalgebra for $G_O=2\times [\Sigma,\dash]$ as the set
  $[\Sigma^*,2]\cong \Pow\Sigma^*$ of all languages $L\seq\Sigma^*$
  \cite{rutten}. The unique $F_I$-algebra homomorphism $e_Q\colon \Sigma^*\to Q$ maps a
  word $w\in \Sigma^*$ to the state of $Q$ reached on input $w$. Thus, the
  language $L_Q=f_Q\o e_Q$ accepted by $Q$ is the usual concept: $w$
  lies in $L_Q$ if and only if $Q$ reaches a final state on input $w$.
\end{enumerate}
\end{example}

\begin{example}[Nominal automata]\label{ex:nom-aut}
  Our notion of automaton (Definition~\ref{def:automaton}) has several natural instantiations to the
  category $\Nom$ of nominal sets and equivariant maps.
  \begin{enumerate}
  \item The simplest instance was already mentioned in \Cref{ex:categories}: 
    a $\Sigma$-automaton in $\Nom$
    corresponds precisely to a \emph{nominal deterministic
      automaton}~\cite{bkl14}. For simplicity, we choose the alphabet $\Sigma=\At$. A nominal automaton is given by a nominal set $Q$
     of states, an equivariant transition map
     $\delta_Q\colon \At\times Q \to Q$, an equivariant map
     $i_Q\colon 1\to Q$ (representing an equivariant initial
     state $q_0\in Q$), and an equivariant map $f_Q\colon Q\to 2$
     (representing an equivariant subset $F\seq Q$ of final
     states). The initial algebra $\At^*$ is the nominal set
     of words over $\At$ with group action
     $\pi\o (a_1\ldots a_n) = (\pi\o a_1)\ldots (\pi \o a_n)$
      for
      $a_1\ldots a_n\in \At^*$ and $\pi\in \Perm(\At)$.
     Thus, a language $L\colon \At^*\to 2$ corresponds to an
     equivariant set of words over $\At$. 
     
     Nominal automata with orbit-finite state space are known to be expressively equivalent to Kaminski and Francez' \cite{fk94} \emph{deterministic finite memory automata}.
		
  \item Now $\Nom$ carries a further symmetric monoidal closed structure, the
    \emph{separated product}~$\ast$ given on objects by
    \[X \ast Y=\{\,(x,y)\in X\times Y\;:\;
    x\,\#\,y\,\},\] where $x\,\#\,y$ means that $\supp(x)\cap \supp(y)=\emptyset$. The right
    adjoint of $F=\At \ast (-)$ is the \emph{abstraction functor}
    $G=[\At](-)$ \cite{pitts2013}
     which maps a nominal set~$X$ to the quotient of
     $\At\times X$ modulo
     the equivalence relation~$\sim$ defined by
     $(a,x)\sim(b,y)$ iff
     $(ac)\cdot x=(bc)\cdot y$ for some (equivalently, all)
     $c\in \At$ with $c\,\#\,a,b,x,y$.  We
     write $\langle a\rangle x$ for the equivalence class
     of $(a,x)$,
     which we think of as the result of binding the
     name~$a$ in~$x$. $F$-automata are precisely the \emph{separated automata} recently introduced by Moerman and Rot \cite{mr19}. 
\item By combining the adjunctions of (1) and (2), we obtain the adjoint pair of functors $F\dashv G$ with \[F=\At\times(-)+\At\ast(-),\quad G=[\At,\dash]\times[\At](-).\] The ensuing
    notion of automaton coincides with one used in Kozen et
    al.'s~\cite{KozenEA15} coalgebraic representation of nominal
    Kleene algebra~\cite{GabbayCiancia11}. Such automata have two
    types of transitions, \emph{free} transitions ($[\At,\dash]$) and
    \emph{bound} transitions ($[\At](-)$). They accept \emph{bar
      languages}~\cite{SchroderEA17}: putting
    $\bar\At=\At\cup\{\langle a\mid a\in\At\}$ (changing the original
    notation from $|a$ to $\langle a$ for compatibility with
    \emph{dynamic sequences} as discussed next), a \emph{bar string}
    is just a word over $\bar\At$. We consider $\langle a$ as
    binding~$a$ to the
    right. 
    This gives rise to the expected notions of free names and
    $\alpha$-equivalence $\alphaeq$. 
A
   bar string is \emph{clean} if its bound names are mutually
    distinct and distinct from all its free names. Simplifying
    slightly, we define a \emph{bar language} to be an equivariant set
    of bar strings modulo $\alpha$-equivalence, i.e.\ an equivariant
    subset of $\bar\At^*/\alphaeq$.  The initial algebra $\mu F_1$ is
    the nominal set of clean bar strings. A language in our sense is
    thus an equivariant set of clean bar strings; such languages are
    in bijective correspondence with bar
    languages~\cite{SchroderEA17}.
		
  \item\label{ex:dynseq} We note next that $[\At](-)$ is itself a left adjoint, our first
    example of a left adjoint that is not of the form $\Sigma\t \dash$
    for a closed structure $\t$. The right adjoint~$R$ is given on
    objects by
    $RX=\{\,f\in [\At,X]\;:\; a\,\#\,
    f(a) \text{ for all $a\in \At$}\,\}$~\cite{pitts2013}. We extend the above notion of automaton
    with this feature, i.e.\ we now work with the adjoint pair $F\dashv G$ given by
    \[ F=\At\times(-)+\At\ast(-)+[\At](-),\quad
    G=[\At,\dash]\times [\At](-)\times R.\] The initial algebra $\mu F_1$
    now consists of words built from three types of letters; we denote
    the new type of letters induced by the new summand $[\At](-)$
    in~$F$ by $a\rangle$ (for $a\in\At$). Recalling that words grow to
    the right, we see that $a\rangle$ binds to the left. 
    We read $a\rangle$ as \emph{deallocating} the name or
    resource~$a$. Languages in this model consist of \emph{dynamic
      sequences}~\cite{GabbayEA15}. We
    associate such languages with a species of nominal automata having
    three types of transitions: free and bound transitions as above,
    and \emph{deallocating transitions}
    $q\xrightarrow{a\rangle} q'$ with $a\,\#\, q'$. To the best of our knowledge, this notion of nominal automaton has not appeared in the literature before.
	\end{enumerate}
\end{example}

\begin{example}[Sorted $\Sigma$-automata]\label{ex:automata_sorted}
  In our applications in \Cref{sec:learningalgebras}, we shall encounter a generalized
  version of $\Sigma$-automata where (1) the input object $I$ is
  arbitrary, not necessarily equal to the tensor unit $I_\D$, and (2)
  the automaton has a sorted object of states and consumes sorted
  words. This reflects the fact that the algebraic structures arising in
  algebraic language theory are often sorted. For brevity, we only
  treat the case of sorted automata in $\Set$. Fix a set $S$ of sorts and
  a family of sets $\Sigma=(\Sigma_{s,t})_{s,t\in S}$; we think of the
  elements of $\Sigma_{s,t}$ as letters with domain sort $s$ and codomain
  sort $t$. We instantiate our setting to the adjoint pair
  $F\dashv G\colon \Set^S\to \Set^S$ defined as follows for $Q\in \Set^S$
  and $s,t\in S$:
  \begin{equation*}
    \textstyle (FQ)_t = \coprod_{s\in S} \Sigma_{s,t}\times Q_s,\qquad
    (GQ)_s = \prod_{t\in S} [\Sigma_{s,t},Q_t].
  \end{equation*}
  Choosing $I\in \Set^S$ arbitrary and the output object $O=2$, the $S$-sorted set with two
  elements in each component, an $F$-automaton is a \emph{sorted
    $\Sigma$-automaton}. It is given by an $S$-sorted set of states
  $Q$, transitions
  $\delta_{Q,s,t}\colon \Sigma_{s,t}\times Q_t\to Q_t$ ($s,t\in S$),
  initial states $i\colon I\to Q$ and an output map
  $f_{Q}\colon Q\to 2$ (representing an $S$-sorted set of final
  states). The initial algebra $\mu F_I$ is the $S$-sorted set of all
  well-sorted words over $\Sigma$ with an
  additional first letter from $I$. More precisely, $(\mu F_I)_t$ consists of all words
$xa_1\ldots a_n$ with $x\in \coprod_{s\in S} I_s$ and
$a_1,\ldots, a_n\in \coprod_{r,s}\Sigma_{r,s}$ such that the sorts of
consecutive letters match, i.e. there exist sorts
$s=s_0,s_1,\ldots,s_n=t\in S$ such that $x\in I_s$ and
$a_i\in \Sigma_{s_{i-1},s_i}$ for $i=1,\ldots, n$. In particular, in the single-sorted case we have $\mu F_I=I\times \Sigma^*$. For any well-sorted input
word $w=xa_1\ldots a_n$ one obtains the run
\[\xra{x} q_0 \xra{a_1} q_1 \to \cdots \xra{a_{n}} q_n\] in $Q$ where
$q_0 = i_{Q,s}(x)$ and $q_{i}=\delta_{Q,s_{i-1},s_{i}}(a_i,q_{i-1})$ for
$i=1,\ldots,n$, and $w$ is accepted if and only if $q_n$ is a final
state.
\end{example}
\noindent We conclude with a discussion of minimal automata.
\begin{definition}[Minimal automaton]
  An automaton $Q$ is called (1) \emph{reachable} if the unique $F_I$-algebra homomorphism $e_Q\colon \mu F_I\to Q$ lies in $\E$, and (2) \emph{minimal} if it is reachable and for every reachable automaton $Q'$ with $L_Q=L_{Q'}$, there exists a unique automata homomorphism from $Q'$ to $Q$.
\end{definition}

\begin{theorem}\label{thm:minaut} For every language $L$ there exists a minimal automaton $\Min{L}$ accepting $L$, unique up to isomorphism.
\end{theorem}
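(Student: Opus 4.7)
The plan is to construct $\Min L$ as the $(\E,\M)$-image factorisation of a canonical morphism $\mu F_I \to \nu G_O$, exploiting the algebra/coalgebra duality spelled out in \Cref{rem:algcoalg}. First, equip $\mu F_I$ with an automaton structure $Q_L$ by taking transition $\delta_{\mu F_I}$ and initial state $i_{\mu F_I}$ from the initial $F_I$-algebra structure $\alpha = [i_{\mu F_I},\delta_{\mu F_I}]$, and output $L$ itself; this automaton trivially accepts $L$ since $e_{Q_L}=\id$. Viewing $Q_L$ as a $G_O$-coalgebra, let $m_{Q_L}\colon \mu F_I \to \nu G_O$ be the unique $G_O$-coalgebra homomorphism into the final coalgebra, and factorise it as
\[ \mu F_I \xra{e} \Min L \xra{m} \nu G_O, \qquad e\in\E,\ m\in\M. \]

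Next, endow $\Min L$ with automaton structure: initial state $e\circ i_{\mu F_I}$, output $\pi_O\circ\gamma\circ m$ (with $\pi_O\circ\gamma\colon \nu G_O\to O$ the output projection of the final coalgebra), and transition defined by diagonal fill-in as follows. Being simultaneously the initial $F_I$-algebra morphism and the final $G_O$-coalgebra morphism, $m_{Q_L}$ is in particular an $F$-algebra morphism into $\nu G_O$ equipped with the $F$-algebra structure $\delta_{\nu G_O}\colon F(\nu G_O)\to\nu G_O$ obtained as the adjoint transpose of the $G$-component of $\gamma$. Hence the square
\[ m\circ e\circ \delta_{\mu F_I} \;=\; \delta_{\nu G_O}\circ Fm\circ Fe \]
commutes. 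Since $Fe\in\E$ by Assumption~\ref{A4} and $m\in\M$, diagonal fill-in yields a unique $\delta_{\Min L}\colon F(\Min L)\to \Min L$ with $\delta_{\Min L}\circ Fe = e\circ \delta_{\mu F_I}$ and $m\circ \delta_{\Min L} = \delta_{\nu G_O}\circ Fm$. The first identity makes $e$ an $F_I$-algebra morphism, hence $e=e_{\Min L}\in\E$ and $\Min L$ is reachable; combined with the definition of the output one obtains $L_{\Min L} = \pi_O\circ\gamma\circ m_{Q_L} = L$.

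For the universal property, let $Q'$ be reachable with $L_{Q'}=L$. The crucial observation is that $e_{Q'}\colon Q_L\to Q'$ is itself a $G_O$-coalgebra morphism: being an $F_I$-algebra morphism gives preservation of the $G$-coalgebra part after adjoint transposition, and $f_{Q'}\circ e_{Q'} = L_{Q'} = L = f_{Q_L}$ gives preservation of outputs. Hence $m_{Q'}\circ e_{Q'}$ is a $G_O$-coalgebra morphism $Q_L \to \nu G_O$ and, by finality, coincides with $m_{Q_L}=m\circ e$. Applying diagonal fill-in to the equation $m\circ e = m_{Q'}\circ e_{Q'}$ with $e_{Q'}\in\E$ and $m\in\M$ produces a unique $h\colon Q'\to \Min L$ with $h\circ e_{Q'}=e$ and $m\circ h = m_{Q'}$. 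Epic cancellation of $e_{Q'}$, and of $Fe_{Q'}$ (using Assumption~\ref{A4}), then shows that $h$ preserves transitions, initial state and output, and is the unique such morphism. Uniqueness of $\Min L$ up to isomorphism follows by applying this universal property to any two candidate minimal automata.

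The main obstacle is arranging the dual algebra/coalgebra bookkeeping consistently: once one recognises that $m_{Q_L}$ is simultaneously an initial-algebra and a final-coalgebra morphism, and that every $e_{Q'}$ inherits both facets, the remainder reduces to two applications of diagonal fill-in.
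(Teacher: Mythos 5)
Your proposal is correct and follows essentially the same route as the paper: both construct $\Min{L}$ as the $(\E,\M)$-factorization of the canonical morphism $\mu F_I\to\nu G_O$ obtained by viewing $\mu F_I$ (with output $L$) as a $G_O$-coalgebra, transfer the automaton structure to the image via diagonal fill-in, and derive the universal property from the fact that $m_{Q'}\o e_{Q'}$ depends only on the accepted language together with one more diagonal fill-in. The paper merely packages the auxiliary steps as separate lemmas (the lifting of the factorization system to automata and the equivalence $L_Q=L_{Q'}\Leftrightarrow m_Q\o e_Q=m_{Q'}\o e_{Q'}$), which you inline.
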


\begin{proof}[Proof sketch] 
We describe the construction of the minimal automaton. By equipping $\mu F_I$ with the final states $L\colon \mu F_I\to O$, we can view $\mu F_I$ as a $G_O$-coalgebra. Consider the $(\E,\M)$-factorization of the unique coalgebra homomorphism $m_{\mu F_I}$:
\[
m_{\mu{F_I}} \;=\; (\xymatrix@C+1em{\, \mu F_I \ar@{->>}[r]^<<<<<<{e_{\Min{L}}} & \Min{L} \ar@{>->}[r]^<<<<<<<{m_{\Min{L}}} & \nu G_O }). \]
The object $\Min{L}$ can be uniquely equipped with an automaton structure for which $e_{\Min{L}}$ is an $F_I$-algebra homomorphism and $m_{\Min{L}}$ is a $G_O$-coalgebra homomorphism. This automaton is the minimal acceptor for $L$.
\end{proof}
\noindent The minimization theorem and its proof are closely related to the classical work of Arbib and Manes \cite{am75} on the minimal realization of \emph{dynamorphisms}, i.e.~$F$-algebra homomorphisms from $\mu F_I$ into $\nu G_O$. Under different assumptions on the type functor $F$ and the base category $\D$ (e.g.~co-wellpoweredness), minimization results were also established by Ad\'amek and Trnkov\'a~\cite{at90} and, recently, by van Heerdt et al.~\cite{hkrss19}.

\section{A Categorical $\mathsf{L}^*$ Algorithm}\label{sec:learningautomata}
To motivate our learning algorithm for adjoint automata, we recall
Angluin's classical $\mathsf{L}^*$ algorithm~\cite{angluin87} for learning an unknown
$\Sigma$-automaton $Q$ in $\Set$. The algorithm assumes that the
learner has access to an oracle (the \emph{teacher}) that can be asked
two types of questions: 
\begin{enumerate}
\item \emph{Membership queries:} given a word
$w\in \Sigma^*$, is $w\in L_Q$?
\item \emph{Equivalence queries:} given an automaton $H$, is $L_H=L_Q$?
\end{enumerate} If the answer in (2) is ``no'', the teacher discloses a
\emph{counterexample}, i.e. a word
$w\in L_Q\setminus L_H \cup L_H\setminus L_Q$, to the learner.

The idea of $\mathsf{L}^*$ is to compute a sequence of approximations of the unknown automaton $Q$ by considering finite (co-)restrictions of the morphism $m_Q\o e_Q$, as indicated by the diagram below. Note that the kernel of $m_Q\o e_Q$ is precisely the well-known \emph{Nerode congruence} of $L_Q$.
\begin{equation}
\label{eq:lstar}
\vcenter{
\xymatrix@C-1.9em@R-1em{
\Sigma^{<0} \ar@{>->}[r] &  \cdots & \Sigma^ {<N} \ar@{>->}[r] & \Sigma^{<N+1} \ar@{>->}[r] & \cdots & \Sigma^* \ar@{-->}[dd]^{e_Q} \\
&&  S \ar@/_3em/[dd]_{h_{S,T}} \ar@{->>}[d]^{{e_{S,T}}} \ar@{>->}[u] &&& \\
&& {H_{S,T}} \ar@{>->}[d]^{{m_{S,T}}} &&& {Q} \ar@{-->}[dd]^{m_Q} \\ 
&& [T,2] &&& \\
[\Sigma^{<0},2] & \ar@{->>}[l] \cdot\cdot   & [{\Sigma^ {<K}},2] \ar@{->>}[u] & \ar@{->>}[l] ~~~[\Sigma^{<K+1},2] & \ar@{->>}[l]\cdot\cdot   & [\Sigma^*,2]
}
}
\end{equation}
In more detail, the algorithm maintains a pair $(S,T)$ of finite sets
$S,T\seq \Sigma^*$ (``states'' and ``tests''). For any such pair,
the restriction of $m_Q\o e_Q$ to the domain $S$ and codomain $[T,2]$,
\[h_{S,T}\colon S \to [T,2], \quad h_{S,T}(s)(t)=L_Q(st)
\quad \text{for $s\in S,\, t\in T,$}\]
is called the \emph{observation table} for $(S,T)$. It is usually represented as an $\under{S}\times\under{T}$-matrix with binary entries. The learner can compute $h_{S,T}$ via
  membership queries.  The pair $(S,T)$ is  \emph{closed} if
for each $s\in S$ and $a\in \Sigma$ there exists $s'\in S$ with
\[h_{S\cup S\Sigma,T}(sa)=h_{S,T}(s').\] 
It is \emph{consistent} if, for
all $s,s'\in S$,
\[ h_{S,T}(s)=h_{S,T}(s') \quad\text{implies}\quad h_{S,T\cup \Sigma T}(s)=h_{S,T \cup \Sigma T}(s').\]
Initially, one puts $S=T=\{ \epsilon \}$. If at some stage the pair $(S,T)$ is not closed or not consistent,  either $S$ or $T$ can be extended by invoking one of the following two procedures:

\medskip\noindent\fbox{\parbox{0.967\columnwidth}{\textbf{\underline{Extend $\mathbf{S}$}}\\
\textbf{Input:} A pair $(S,T)$ that is not closed.
\begin{enumerate}[leftmargin=1.8em]
\item[(0)] Choose $s\in S$ and $a\in \Sigma$ such that\[h_{S\cup S\Sigma,T}(sa)\neq h_{S,T}(s') \quad \text{for all $s'\in S$.}\]
\item[(1)] Put $S:= S\cup \{sa\}$.
\end{enumerate}
}}

\medskip\noindent\fbox{\parbox{0.8\columnwidth}{\textbf{\underline{Extend $\mathbf{T}$}}\\
\textbf{Input:} A pair $(S,T)$ that is not consistent.
\begin{enumerate}[leftmargin=1.8em]
\item[(0)] Choose $s,s'\in S$, $t\in T$ and $a\in\Sigma$ such that 
\[h_{S,T}(s)=h_{S,T}(s')\;\; \text{and}\;\; h_{S,T\cup \Sigma T}(s)(at) \neq h_{S,T\cup \Sigma T}(s')(at).\]
\item[(1)] Put $T:= T\cup \{at\}$.
\end{enumerate}
}}

\medskip\noindent The two procedures are applied repeatedly until the
pair $(S,T)$ is closed and consistent. Then, one constructs an automaton $H_{S,T}$, the \emph{hypothesis}
for $(S,T)$. Its set of states is
the image $h_{S,T}[S]$, the transitions $\delta_{S,T}\colon \Sigma\times H_{S,T}\to H_{S,T}$ are given by
$\delta_{S,T}(a,h_{S,T}(s)) = h_{S\cup S\Sigma,T}(sa)$ for $s\in S$
and $a\in \Sigma$, the initial state is $h_{S,T}(\epsilon)$, and a
state $h_{S,T}(s)$ is final if $s\in L_Q$ (i.e.\ $h_{S,T}(s)(\epsilon)=1$). Note that the well-definedness of $\delta_{S,T}$ is equivalent to
$(S,T)$ being closed and consistent.

The learner now tests whether $L_{H_{S,T}}=L_Q$ by asking an equivalence query. If the answer is ``yes'', the algorithm terminates successfully; otherwise,  the teacher's counterexample and all its prefixes are added to $S$. In summary:\\

\noindent\fbox{\parbox{0.98\columnwidth}{\textbf{\underline{$\mathbf{\mathsf{L}^*}$ Algorithm}}\\
\textbf{Goal:} Learn an automaton equivalent to an unknown automaton $Q$. 
\begin{enumerate}[leftmargin=0.5cm] 
\setcounter{enumi}{-1}
  \item Initialize $S=T=\{\epsilon\}$.
  \item\label{firststep} While $(S,T)$ is not closed or not consistent:
\begin{enumerate}\item If $(S,T)$ is not closed: Extend $S$.
\item If $(S,T)$ is not consistent: Extend $T$.
\end{enumerate}
  \item Construct the hypothesis $H_{S,T}$.
\begin{enumerate}
\item If $L_{H_{S,T}}=L_Q$: Return $H_{S,T}$.
\item If $L_{H_{S,T}}\neq L_Q$: Put $S:=S\cup C$, where $C$ is the set of prefixes of the teacher's counterexample.
\end{enumerate}
\item Go to \ref{firststep}.
\end{enumerate}}}\\~\\
The algorithm runs in polynomial time w.r.t.\ the size of the minimal
automaton $\Min{L_Q}$ and the length of the longest counterexample
provided by the teacher. The learned automaton (i.e.~the correct hypothesis returned in Step (2a)) is isomorphic
to $\Min{L_Q}$. Correctness and termination rest on the invariant
that~$S$ is prefix-closed and $T$ is suffix-closed. Note that if
$T\seq \Sigma^{<K}$, then $T$ yields a quotient
$[\Sigma^{<K},2]\epito [T,2]$ given by restriction. In the following,
$T$ is represented via this quotient.

\medskip \noindent We shall now develop all ingredients of $\mathsf{L}^*$ for adjoint $F$-automata. This requires additional assumptions, which hold for all the functors discussed in \Cref{ex:categories}, \ref{ex:nom-aut} and \ref{ex:automata_sorted}:
\begin{assumptions}\label{asm2}
On top of our \Cref{asm}, we require for the rest of this section that $F_I=I+F$ preserves subobjects ($F_I(\M)\seq \M$) and pullbacks of $\M$-morphisms, and that $G_O=O\times G$ preserves quotients ($G_O(\E)\seq \E$).
\end{assumptions}
Our categorical learning algorithm generalizes \eqref{eq:lstar} to the diagram shown below, where the upper and lower part are given by the initial chain for $F_I$ and the final cochain for $G_O$:
\begin{equation}
\label{eq:lstarcat}
\vcenter{
\xymatrix@C-0.9em@R-1em{
F_I^0 0 \ar@{>->}[r]_\initial &  \cdots & F_I^N0 \ar@/^3ex/[rrr]^{j_N} \ar@{>->}[r]_<<<{F_I^N \initial} & F_I^{N+1}0 \ar@{>->}[r]_<<<<{F_I^{N+1}\initial} & \cdots & \mu F_I \ar@{-->}[dd]^{e_Q} \\
&&  S \ar@/_3em/[dd]_{h_{s,t}} \ar@{->>}[d]^{{e_{s,t}}} \ar@{>->}[u]_s &&& \\
&& {H_{s,t}} \ar@{>->}[d]^{{m_{s,t}}} &&& {Q} \ar@{-->}[dd]^{m_Q} \\ 
&& T &&& \\
G_O^0 1 & \ar@{->>}[l]_<<<{!} \cdots & G_O^K1 \ar@{->>}[u]_t & \ar@{->>}[l]_{G_O^K!} G_O^{K+1}1 & \ar@{->>}[l]_<<{G_O^{K+1}!} \cdots &  \nu G_O \ar@/^3ex/[lll]^{j_K'}
}
}
\end{equation}
The algorithm maintains a pair $(s,t)$
of an $F_I$-subcoalgebra and a $G_O$-quotient algebra
\begin{equation}
\label{eq:st} s\colon (S,\sigma)\monoto (F_I^N0, F_I^N\initial), \quad t\colon (G_O^K1, G_O^K!)\epito (T,\tau),
 \end{equation}
with $N,K>0$. For $\Sigma$-automata in $\Set$, this means precisely that $S$ is a prefix-closed subset of $\Sigma^{<N}$, and that $T$ represents a suffix-closed subset of $\Sigma^{<K}$. 

Initially, one takes $N=K=1$, $s=\id_I$ and $t=\id_O$,
which corresponds to Step (0) of the original $\mathsf{L}^*$ algorithm.
\begin{remark}\label{rem:subcoalgext}
  By \Cref{asm}\ref{A3} and \ref{asm2}, every subcoalgebra
  $s\colon (S,\sigma)\monoto (F_I^N0,F_I^N\initial)$ induces the two
  subcoalgebras 
\[
\xymatrix@R-2em{
(S,\sigma)~ \ar@{>->}[r]^<<<<<{F_I^N\initial\o s} & ~(F_I^{N+1}0,F_I^{N+1}\initial)~ &  ~(F_IS,F_I\sigma). \ar@{>->}[l]_<<<<{F_Is}
}
\]
In the case of
  $\Sigma$-automata in $\Set$, the construction of these two subcoalgebras corresponds to
  viewing a prefix-closed subset $S\seq \Sigma^{<N}$ as a subset of
  $\Sigma^{<N+1}$, and to extending $S$ to the prefix-closed subset
  $S\Sigma\cup \{\epsilon\} = S\cup S\Sigma \seq \Sigma^{<N+1}$. A dual remark applies to quotient algebras
  of~$(G_O^K1,G_O^K!)$.
\end{remark}
\begin{definition}[Observation table] Let $(s,t)$ be a pair as in \eqref{eq:st}, and let $Q$ be an automaton.
The \emph{observation table} for $(s,t)$ w.r.t.\ $Q$ is the morphism
\[ h^Q_{s,t} \;=\; (\, S \xra{s} F_I^N0 \xra{j_N} \mu F_I \xra{e_Q} Q \xra{m_Q} \nu G_O \xra{j_K'} G_O^K1 \xra{t} T  \,).\]
Its $(\E,\M)$-factorization is denoted by
\[h_{s,t}^Q \;=\; (\xymatrix{S \ar@{->>}[r]^<<<<{e_{s,t}^Q} & H_{s,t}^Q \ar@{>->}[r]^<<<<<{m_{s,t}^Q} & T }).\]
In the following, we fix $Q$ (the unknown automaton to be learned) and omit the superscripts $(\dash)^Q$.
\end{definition}

\begin{remark}\label{rem:hst}
In our categorical setting, membership queries are replaced by the assumption that the learner can compute the observation table $h_{s,t}^Q$ for each pair $(s,t)$. Importantly, this morphism depends only on the language of $Q$: one can show that for every automaton $Q'$ with $L_Q=L_{Q'}$ one has $m_Q\o e_Q = m_{Q'}\o e_{Q'}$, whence $h_{s,t}^Q=h_{s,t}^{Q'}$.
\end{remark}

\begin{definition}[Closed/Consistent pair]
For any pair $(s,t)$ as in \eqref{eq:st}, let $\cl_{s,t}$ and $\cs_{s,t}$ be the unique diagonal fill-ins making all parts of the diagram below commute:
\[
\xymatrix@C+1em@R-1em{
	& H_{s,G_Ot} \ar@{-->>}[d]^{\cs_{s,t}} \ar@{>->}[r]^{m_{s,G_Ot}} & G_O T \ar[d]^\tau \\
S \ar@{->>}[ur]^<<<<<<<{e_{s,G_Ot}} \ar@{->>}[r]^{e_{s,t}} \ar[d]_\sigma & H_{s,t} \ar@{>->}[r]^{m_{s,t}} \ar@{>-->}[d]_{\cl_{s,t}} & T \\
F_IS \ar@{->>}[r]_{e_{F_Is,t}}  & H_{F_Is,t} \ar@{>->}[ur]_<<<<<<<<{~~m_{F_Is,t}} &	
}
\]
The pair $(s,t)$ is \emph{closed} if $\cl_{s,t}$ is an isomorphism, and \emph{consistent} if $\cs_{s,t}$ is an isomorphism.
\end{definition}
\noindent If $(s,t)$ is not closed or not consistent, at least one of the two dual procedures below applies. ``Extend $s$'' replaces $S\monoto F_I^N0$ by a new subcoalgebra $S'\monoto F_I^{N+1}0$, i.e. it moves to the right in the initial chain for $F_I$. Analogously, ``Extend $t$'' replaces $G_O^K1\epito T$ by a new quotient algebra $G_O^{K+1}1\epito T'$, and thus moves to the right in the final cochain for $G_O$.

\medskip\noindent\fbox{\parbox{0.98\columnwidth}{\textbf{\underline{Extend $\mathbf{s}$}}\\
\textbf{Input:} A pair $(s,t)$ as in \eqref{eq:st} that is not closed.
\begin{enumerate}[leftmargin=1.8em]
\item[(0)] Choose an object $S'$ and $\M$-morphisms $s_0\colon S\monoto S'$ and $s_1\colon S'\monoto F_I S$ such that \[\sigma=s_1\o s_0 \quad \text{and}\quad e_{F_I s,t}\o s_1\in \E.\]
\item[(1)] Replace $s\colon (S,\sigma)\monoto (F_I^N0,F_I^N\initial)$ by the subcoalgebra 
\[F_Is \o s_1\colon (S',F_Is_0\o s_1) \monoto (F_I^{N+1}0, F_I^{N+1}\initial).\]
\end{enumerate}
}}
\begin{remark}\label{rem:extends}
\begin{enumerate}
\item One trivial choice in Step (0) is \[S'=F_IS\,\quad s_0=\sigma,\quad s_1 =\id.\] To get an efficient implementation of the algorithm, one aims to choose the subobject $s_1\colon S'\monoto F_IS$ as small as possible. 
\item The update of $s$ in Step (1) is well-defined, i.e.\ $F_I s\o s_1$ is a subcoalgebra. Indeed, the commutative diagram below shows that $F_Is\o s_1$ is a coalgebra homomorphism:
\[  
\xymatrix@R-1em{
	F_I^{N+1}0 \ar[rr]^{F_I^{N+1}\initial} & & F_I(F_I^{N+1}0) \\
	F_I S \ar[u]^{F_I s} \ar[rr]^{F_I\sigma} & & F_IF_I S \ar[u]_{F_IF_Is} \\
	S' \ar@{>->}[u]^{s_1}  \ar[r]_{s_1} & \ar[ur]^{F_I\sigma} F_I S \ar[r]_{F_I s_0} & F_I S' \ar@{>->}[u]_{F_Is_1} 
}
\]
Moreover, since $s, s_1\in \M$ and $F_I$ preserves $\M$ (see \Cref{asm2}), we have $F_Is\o s_1\in \M$.
\item In the case of $\Sigma$-automata in $\Set$, the condition $\sigma=s_1\o s_0$ states that $S\seq S'\seq S\cup S\Sigma = S\Sigma \cup \{\epsilon\}$. The condition $e_{F_Is,t}\o s_1\in \E$ asserts that given $s\in S$ and $a\in \Sigma$ such that $h_{S\cup S\Sigma,T}(sa)\neq h_{S,T}(r)$ for all $r\in S$, there exists $s'\in S'$ with $h_{S\cup S\Sigma,T}(sa)=h_{S\cup S\Sigma, T}(s')$. Thus, ``Extend $s$'' subsumes several executions of ``Extend $S$'' in the original $\mathsf{L}^*$ algorithm.
\end{enumerate}
\end{remark}

\noindent\fbox{\parbox{0.98\columnwidth}{\textbf{\underline{Extend $\mathbf{t}$}}\\
\textbf{Input:} A pair $(s,t)$ as in \eqref{eq:st} that is not consistent.
\begin{enumerate}[leftmargin=1.8em]
\item[(0)] Choose an object $T'$ and $\E$-morphisms
  $t_0\colon G_OT\epito T'$ and $t_1\colon T'\epito T$ such that
  \[\tau=t_1\o t_0\quad\text{and}\quad t_0\o m_{s,G_O t}\in \M.\]
\item[(1)] Replace $t\colon (G_O^K1,G_O^K\terminal)\epito (T,\tau)$ by the quotient algebra
\[ t_0\o G_Ot\colon (G_O^{K+1}1,G_O^{K+1}\terminal)\epito (T', t_0 \o G_Ot_1 ).\]
\end{enumerate}
}}
\begin{remark}\label{rem:extendt}
\begin{enumerate}\item Dually to \Cref{rem:extends}, a trivial choice in Step (0) is given by $T'=G_O T$, $t_0=\id$, $t_1=\tau$,
 and Step (1) is well-defined, i.e. $t_0\o G_O t$ is a quotient algebra. \item In the case of $\Sigma$-automata in $\Set$, we view the quotients $T$ and $T'$ as subsets of $\Sigma^{<K}$ and $\Sigma^{<K+1}$, respectively, using the above identification between subsets and quotients.  The condition $\tau=t_1\o t_0$ then states that $T \seq T'\seq T\cup \Sigma T$. The condition $t_0\o m_{s,G_O t}\in \M$ states that every inconsistency admits a witness in $T'$: given $s,s'\in S$ with $h_{S,T}(s)=h_{S,T}(s')$ but $h_{S,T\cup \Sigma T}(s) \neq h_{S,T\cup \Sigma T}(s')$, there exists $t'\in T'$ with $h_{S,T'}(s)(t')\neq h_{S,T'}(s')(t')$. Thus, ``Extend $t$'' subsumes several executions of ``Extend $T$'' in the original $\mathsf{L}^*$ algorithm.
\end{enumerate}
\end{remark}
\noindent If $(s,t)$ is both closed and consistent, then we can define
an automaton structure on $H_{s,t}$:

\begin{definition}[Hypothesis]\label{def:hypothesis}
Let the pair $(s,t)$ be closed and consistent. The \emph{hypothesis} for $(s,t)$ is the automaton \[(H_{s,t},\delta_{s,t},i_{s,t},f_{s,t})\]
with states $H_{s,t}$ and structure defined below. Here, $\inl$/$\inr$ are coproduct injections, $\outl$/$\outr$ are product projections, and $(\dash)^\#$ denotes adjoint transpose along the adjunction $F\dashv G$:
\begin{enumerate}[topsep=0.2em]
\item The transitions $\delta_{s,t}\colon FH_{s,t}\to H_{s,t}$ are given by the diagonal fill-in of the  commutative square
\[
\xymatrix@R-1em{
FS \ar[d]_{l_{s,t}} \ar@{->>}[r]^{Fe_{s,t}} & FH_{s,t} \ar@{-->}[dl]_{\delta_{s,t}} \ar[d]^{r_{s,t}^\#}  \\
H_{s,t}\ar@{>->}[r]_{m_{s,t}} & T
}
\]
with the two vertical morphisms defined by
\begin{align*}
l_{s,t} &= (FS \xra{\inr} I+FS=F_I S \xra{e_{F_Is,t}} H_{F_I s,t} \xra{\cl_{s,t}^{-1}} H_{s,t} ),\\	r_{s,t} &= (H_{s,t} \xra{\cs_{s,t}^{-1}} H_{s,G_O t} \xra{m_{s,G_O t}} G_O T=O\times GT  \xra{\outr} GT).
\end{align*}
\item The initial states are \[i_{s,t} \;=\; (\, I \xra{\inl} I+F S = F_I S \xra{e_{F_I s,t}}  H_{F_I s,t} \xra{\cl_{s,t}^{-1}} H_{s,t} \,).\]
\item The final states  are \[f_{s,t} \;=\; (\, H_{s,t} \xra{\cs_{s,t}^{-1}} H_{s,G_O t} \xra{m_{s,G_O t}} G_O T = O\times GT \xra{\outl} O  \,).\]
\end{enumerate}
\end{definition}

\begin{remark}\label{rem:diagfillin}
The square defining $\delta_{s,t}$ commutes: both legs can be shown to be equal to $FS\xra{\inr } I+FS = F_IS \xra{h_{F_Is,t}} T$. The idea of constructing the $F$-algebra structure of a hypothesis via diagonal fill-in originates in the abstract framework of CALF \cite{hss17}. An important difference is that in the latter  the existence of the two vertical morphisms of the corresponding square is postulated, while our present setting features a concrete description of $l_{s,t}$ and $r_{s,t}$.
\end{remark}
\noindent Recall that in $\mathsf{L}^*$, if a hypothesis $H_{S,T}$ is not correct (i.e. $L_{H_{S,T}}\neq L_Q$), the learner receives a counterexample $w\in \Sigma^*$ from the teacher and adds the set $C$ of all its prefixes to $S$. Identifying the word $w$ with this set, the concept of a counterexample has the following categorical version:
\begin{definition}[Counterexample]
Let $(s,t)$ be closed and consistent. A \emph{counterexample for $H_{s,t}$} is a subcoalgebra 
\[c\colon (C,\gamma)\monoto (F_I^M 0, F_I^M \initial) \quad \text{for some $M>0$}\]
such that $H_{s,t}$ and $Q$ do not agree on inputs from $C$, that is, 
\[L_{H_{s,t}}\o j_M\o c \;\neq\; L_Q\o j_M\o c.\]
\end{definition}

\begin{remark}
\begin{enumerate}
\item If $L_{H_{s,t}}\neq L_Q$, then a counterexample always exists. Indeed, since the colimit injections $j_M\colon F_I^M 0\to \mu F_I$ are jointly epimorphic, one has $L_{H_{s,t}}\o j_M \neq L_Q\o j_M$ for some $M>0$ and thus $(C,\gamma)=(F_I^M0, F_I^M\initial)$ is a counterexample. To obtain an efficient algorithm, it is often assumed that the teacher delivers a \emph{minimal} counterexample, i.e. $M$ is minimal and no proper subcoalgebra is a counterexample.
\item Given a counterexample $c\colon (C,\gamma)\monoto (F_I^M0,F_I^M\initial)$, one can add $c$ to the subcoalgebra $s\colon (S,\sigma)\monoto (F_I^N0,F_I^N\initial)$ as follows: by \Cref{rem:subcoalgext}, we can assume that $M=N$, and then form the supremum 
$s\vee c\colon (S\vee C, \sigma\vee \gamma) \monoto (F_I^N0,F_I^N\initial)$ of $s$ and $c$ in the lattice of subcoalgebras of $(F_I^N0,F_I^N\initial)$, viz. the image of the homomorphism $[s,c]\colon S+C\to  F_I^N0$.
\end{enumerate}
\end{remark}
\noindent With all these ingredients at hand, we obtain the following abstract learning algorithm for adjoint $F$-automata:\\~\\
\noindent\fbox{\parbox{0.98\columnwidth}{\textbf{\underline{Generalized $\mathbf{\mathsf{L}^*}$ Algorithm}}\\
\textbf{Goal:} Learn an automaton equivalent to an unknown automaton $Q$. 
\begin{enumerate}[leftmargin=0.5cm] 
\setcounter{enumi}{-1}
  \item Initialize $N=K=1$, $s=\id_I$ and $t=\id_O$.
  \item While $(s,t)$ is not closed or not consistent:
\begin{enumerate} 
\item If $(s,t)$ is not closed: Extend $s$.
\item If $(s,t)$ is not consistent: Extend $t$.
\end{enumerate}
  \item Construct the hypothesis $H_{s,t}$.
\begin{enumerate}
  \item If $L_{H_{s,t}}=L_{Q}$: Return $H_{s,t}$.
\item If $L_{H_{s,t}}\neq L_Q$: Replace the subcoalgebra $s$ by $s\vee c$, where $c$ is the teacher's counterexample.
\end{enumerate}
\item Go to (1).
\end{enumerate}}}

\vspace*{0.3cm}\noindent To prove the termination and correctness of
Generalized $\mathsf{L}^*$, we need a finiteness assumption on the unknown
automaton $Q$. We call a $\D$-object $Q$ \emph{Noetherian} if both its
poset of subobjects (ordered by $m\leq m'$ iff $m=m'\o p$ for some
$p$) and that of its quotients (ordered by $e\leq e'$ iff $e=q\o e'$
for some $q$) contain no infinite strictly ascending chains.

\begin{theorem}\label{thm:termination}
If $Q$ is Noetherian, then the generalized $\mathsf{L}^*$ algorithm terminates and returns $\Min{L_Q}$.
\end{theorem}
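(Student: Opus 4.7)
The plan is to treat termination and correctness separately, with the key invariant being that every observation hypothesis $H_{s,t}$ is naturally a subquotient of $\Min{L_Q}$. Since $\Min{L_Q}$ is the $(\E,\M)$-image of $m_Q\o e_Q$ (see \Cref{thm:minaut}), it is itself a subquotient of $Q$, and hence both its subobject and quotient lattices satisfy the ascending chain condition. Concretely, I would factor the composite $S\xra{s}\mu F_I\epito \Min{L_Q}$ as $S\epito A_s\monoto \Min{L_Q}$ and $\Min{L_Q}\monoto \nu G_O\xra{t\o j'_K} T$ as $\Min{L_Q}\epito B_t\monoto T$; a diagonal fill-in argument then identifies $H_{s,t}$ with the image of $A_s\monoto \Min{L_Q}\epito B_t$, so that tracking the algorithm's progress reduces to tracking $A_s$ and $B_t$ in these two Noetherian lattices.

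For termination, I would show that each iteration strictly enlarges either $A_s$ (as a subobject of $\Min{L_Q}$) or $B_t$ (as a quotient of $\Min{L_Q}$). When $(s,t)$ is not closed, the fact that $\cl_{s,t}$ is not an isomorphism means that $H_{s,t}$ is a proper subobject of $H_{F_Is,t}$ inside $T$; the requirement $e_{F_Is,t}\o s_1\in \E$ imposed in ``Extend $s$'' then forces $H_{s',t}\cong H_{F_Is,t}$, and hence $A_{s'}$ strictly extends $A_s$. Consistency failure is handled dually, yielding a strict enlargement of $B_t$. Counterexample handling also grows $A_s$ strictly, because $L_{H_{s,t}}\o j_M\o c\neq L_Q\o j_M\o c$ certifies a behaviour in $\Min{L_Q}$ witnessed by $c$ that is not yet reflected in $A_s$, so $A_{s\vee c}>A_s$. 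The Noetherian hypothesis rules out infinite strict ascents in either lattice, so the main loop must terminate.

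For correctness, once the loop exits at step (2a), the returned hypothesis $H_{s,t}$ accepts $L_Q$ and is reachable by construction (being a quotient of a subobject of the initial chain). From $L_{H_{s,t}}=L_Q$ one obtains $m_{H_{s,t}}\o e_{H_{s,t}}=m_Q\o e_Q$, so the image of $m_{H_{s,t}}$ equals $\Min{L_Q}$. It remains to show that $m_{H_{s,t}}$ itself lies in $\M$; here closedness and consistency together make the hypothesis structure $(\delta_{s,t},i_{s,t},f_{s,t})$ compatible with the embedding $m_{s,t}\colon H_{s,t}\monoto T$ into the truncated final cochain, so that $t\o j'_K\o m_{H_{s,t}}$ factors through $m_{s,t}\in \M$, forcing monicity. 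Then $H_{s,t}$ is reachable and observationally minimal with language $L_Q$, and so isomorphic to $\Min{L_Q}$ by \Cref{thm:minaut}.

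The main obstacle I anticipate is the counterexample step: verifying that $L_{H_{s,t}}\o j_M\o c\neq L_Q\o j_M\o c$ actually reflects into strict growth of the subobject $A_s$ of $\Min{L_Q}$, rather than merely growth of $S$ inside $F_I^N0$, requires careful bookkeeping between the initial chain and the $(\E,\M)$-factorization of $m_Q\o e_Q$. A related delicacy is monicity of $m_{H_{s,t}}$ at termination, which must simultaneously invoke closedness (for $F$-algebra compatibility of $\delta_{s,t}$) and consistency (for $G_O$-coalgebra compatibility via $\cs_{s,t}^{-1}$) to tie the internal structure of $H_{s,t}$ to its embedding into $T$.
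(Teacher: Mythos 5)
Your overall strategy---measuring progress in a Noetherian poset of subobjects (tracking reachability) and a Noetherian poset of quotients (tracking distinguishability), with the Extend steps making strict progress---is the same as the paper's. Two differences matter, one minor and one fatal. The minor one: the paper factors $e_Q\o j_N\o s$ and $t\o j_K'\o m_Q$ through $Q$ itself, obtaining chains of subobjects and quotients of $Q$ and invoking Noetherianity of $Q$ directly, whereas you track subobjects and quotients of $\Min{L_Q}$. Your justification that $\Min{L_Q}$ inherits the ascending chain condition ``because it is a subquotient of $Q$'' is not automatic in an abstract category (subobjects of a quotient of $Q$ need not embed into subobjects of $Q$ without further stability assumptions); working with $Q$ sidesteps this.

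The fatal gap is the counterexample step. Your claim that adding a counterexample strictly grows $A_s$ is false: the discrepancy $L_{H_{s,t}}\o j_M\o c\neq L_Q\o j_M\o c$ may be a failure of \emph{distinguishability} rather than reachability. For instance, $S$ may already reach every state of $\Min{L_Q}$ (so $A_s=\Min{L_Q}$ is maximal and cannot grow) while $T$ fails to separate two of those states, so $H_{s,t}$ conflates them and accepts the wrong language; then $A_{s\vee c}=A_s$ and your progress measure does not move, leaving open the possibility that the algorithm cycles through step (2b) forever. The paper records only the non-strict inequality $m_s\leq m_{s'}$ for this step and closes the hole with a separate argument (\Cref{lem:not_closed_consistent}): if $(s,t)$ is closed and consistent and $c$ is a counterexample for $H_{s,t}$, then $(s\vee c,t)$ is \emph{not} closed or not consistent---proved by showing that otherwise the diagonal fill-in $H_{s,t}\monoto H_{s\vee c,t}$ would be an automata homomorphism, forcing $H_{s,t}$ and $Q$ to agree on inputs from $C$ via \Cref{prop:hypothesis_correct_for_st}, a contradiction. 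Hence every counterexample addition is immediately followed by an Extend step that does make strict progress, bounding the number of executions of (2b). Your proof needs this lemma (or an equivalent); the heuristic you give in its place does not survive the scenario above. Finally, for correctness, establishing $e_{H_{s,t}}\in\E$ and $m_{H_{s,t}}\in\M$ in the paper goes through recursivity of $(S,\sigma)$ and corecursivity of $(T,\tau)$ (\Cref{prop:subalg_recursive,prop:T_corecursive}), which is where the assumption that $F_I$ preserves pullbacks of $\M$-morphisms is actually consumed; your appeal to ``compatibility with the embedding'' needs that machinery to be made precise.
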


\begin{remark}\label{rem:complexity}
  Under a slightly stronger finiteness condition on $Q$, we obtain a
  complexity bound. Suppose that $Q$ has finite \emph{height} $n$, that is, $n$ is the maximum length of any strictly ascending chain of subobjects or quotients of $Q$. Then Steps (1a), (1b) and (2b) are executed 
  $O(n)$ times.
\end{remark}

\begin{example}\label{ex:noetherian}
  In $\D=\Set$, $\Pos$, $\JSL$, $\Vect{\K}$, and $\Nom$, the Noetherian
  objects are precisely the finite sets, finite posets, finite semilattices,
  finite-dimensional vector spaces and orbit-finite nominal sets. The
  height of $Q$
  is equal to the number of elements of $Q$ (for $\D=\Set,\Pos$) or
  the dimension (for $\D=\Vect{\K}$). For $\D=\Nom$, the height of an
  orbit-finite set~$Q$ can be shown to be polynomial in the number of orbits of~$Q$
  and $\max\{\,|\supp(q)|\mid q\in Q\,\}$, 
  using upper bounds on the length of subgroup chains in symmetric
  groups~\cite{Babai86}.
\end{example}

\begin{remark}\label{rem:counterex}
  In the generalized $\mathsf{L}^*$ algorithm, counterexamples are added to
  $S$. Dually, one may opt to add them to $T$ instead; for
  $\Sigma$-automata in $\Set$, this corresponds to a modification of
  Angluin's algorithm due to Maler and Pnueli~\cite{mp95} that makes it possible to avoid inconsistent observation tables, i.e.~all tables constructed in the modified algorithm are
  consistent. In this dual approach, the accepted language of an
  automaton~$Q$ is defined coalgebraically as the morphism
\[{L}_Q'\;=\;(\,I\xra{i_Q}Q \xra{m_Q} \nu G_O\,),\]
and a \emph{counterexample} is a quotient algebra $c\colon (G_O^M1,G_O^M\terminal)\epito (C,\gamma)$ for some $M>0$ such that $c\o j_M' \o {L}'_{H_{s,t}} \neq c\o j_M' \o {L}_{Q}'$.
In Step (2b), a counterexample $c$ is added to the quotient algebra $t\colon (G_O^K1,G-O^K\terminal)\epito (T,\tau)$ by forming the supremum of $t$ and $c$. To guarantee termination, our original requirement that $F_I$ preserves pullbacks of $\M$-morphisms (see \Cref{asm2}) needs to be replaced by the dual requirement that $G_O$ preserves pushouts of $\E$-morphisms.
\end{remark}

\begin{remark}\label{rem:coalglogic}
We elaborate on the connection between Generalized $\mathsf{L}^*$ and the learning algorithm for coalgebras due to Barlocco et al.~\cite{bkr19}. The latter is concerned with coalgebras whose semantics is given in terms of a \emph{coalgebraic logic}, i.e.~a natural transformation $\delta\colon L^\op P\to PB$ where $L\colon \ACat\to \ACat$ and $B\colon \BCat\to \BCat$ are endofunctors and $P\colon \BCat \to \ACat^\op$ is a left adjoint (see the left-hand square below).
\[
\xymatrix@R-1em{
\ACat^\op \ar@{=>}[dr]|\delta \ar[d]_{L^\op} & \ar[l]_<<<<<P \BCat \ar[d]^B \\
\ACat^\op & \BCat \ar[l]^<<<<<{P}
}
\quad\quad
\xymatrix@R-1em{
(\D^\op)^\op \ar[d]_{(G_O^\op)^\op} \ar@{=>}[dr]|\id & \ar[l]_<<<<<{\Id} \D \ar[d]^{G_O} \\
(\D^\op)^\op & \D \ar[l]^<<<<<{\Id}
}
\]
Here, $L$ represents the syntax (usually modalities over  a
propositional base logic embodied by~$\ACat$), and~$B$ the behaviour
(defining the branching type of coalgebras on~$\BCat$).  The coalgebraic semantics
of $F$-automata corresponds to the trivial logic shown in the
right-hand square. In this sense, $F$-automata are formally covered
by the framework of \cite{bkr19}.

While Generalized $\mathsf{L}^*$ is based on Angluin's $\mathsf{L}^*$ algorithm, the
coalgebraic learning algorithm in \emph{op.~cit.} generalizes Maler
and Pnueli's approach, and thus needs to keep
observation tables consistent (\Cref{rem:counterex}). To this end,
tables are required to satisfy a property called \emph{sharpness}, which entails that the existence of extensions of non-closed tables is
nontrivial and can only be guaranteed under strong
assumptions on epimorphisms in the base category (e.g., all
epimorphisms must split). Thus, the algorithm is effectively limited to coalgebras in $\Set$ and does not apply, e.g., to $\Sigma$-automata in $\Nom$; see Appendix. In our Generalized~$\mathsf{L}^*$, no such
assumptions are needed since table extensions always exist
(\Cref{rem:extends}). This makes our algorithm
applicable in categories beyond $\Set$, including the ones in \Cref{ex:categories}. 
\end{remark}
\noindent Generalized $\mathsf{L}^*$ provides a unifying perspective on known
learning algorithms for several notions of deterministic automata,
including classical $\Sigma$-automata ($\D=\Set$ \cite{angluin87}),
linear weighted automata ($\D=\Vect{\K}$ \cite{bm15}) and nominal
automata ($\D=\Nom$ \cite{bhlm14,mssks17}). For $\D=\JSL$, finite semilattice automata can be interpreted as \emph{nondeterministic} finite automata by means of an equivalence between the category of finite semilattices and a suitable category of finite closure spaces and relational morphisms \cite{mamu14,ammu14_2}. For any regular language $L$, the minimal
$\Sigma$-automaton $\Min{L}$ in $\JSL$ corresponds under this equivalence to the minimal
  \emph{residual finite state automaton (RFSA)} \cite{dlt01}, a canonical nondeterministic acceptor for $L$ whose states are the
join-irreducible elements of $\Min{L}$. Consequently, the $\mathsf{NL}^*$ algorithm for learning
RFSA due to Bollig et al. \cite{BHKL09} is
also subsumed by our categorical setting. We note that although $\mathsf{NL}^*$ learns a minimal RFSA, the intermediate hypotheses arising in the algorithm are not necessarily RFSA, but general nondeterministic finite automata. Our categorical perspective provides an explanation of this phenomenon: it shows that $\mathsf{NL}^*$ implicitly computes deterministic finite automata over $\JSL$, and not every such automaton corresponds to an RFSA.

 Finally, our algorithm instantiates to
new learning algorithms for nominal languages with name binding,
including languages of dynamic sequences (\Cref{ex:nom-aut}), and for
sorted languages (\Cref{ex:automata_sorted}). A special instance of
sorted automata where all transitions are sort-preserving
(i.e. $\Sigma_{s,t}=\emptyset$ for $s\neq t$) appeared in the work of
Moerman \cite{moerman19} on learning product automata.

In each of the above settings, in order to turn Generalized $\mathsf{L}^*$ into a
concrete algorithm, one only needs to provide a suitable data
structure for representing observation tables $h_{s,t}$ by finite means, and a strategy for choosing the objects
$S'$ and $T'$ in the procedures ``Extend $s$'' and ``Extend $t$''. We emphasize that these design choices can be non-trivial and depend on the specific structure of the underlying category $\D$. The typical approach is to represent the map $h_{s,t}\colon S\to T$ by restricting the objects $S$ and $T$ to 
 finite sets of generators. For instance, finite-dimensional vector spaces can be represented by their bases ($\D=\Vect{\K}$), finite semilattices by their join-irreducible elements ($\D=\JSL$) and orbit-finite sets by subgroups of finite symmetric groups ($\D=\Nom$).

Our above results demonstrate, however, that the core of our learning algorithm is independent from such implementation details; in particular, its correctness and termination, and parts of the complexity analysis, always come for free as instances of the general results in \Cref{thm:termination} and \Cref{rem:complexity}. In this way, the categorical approach provides a clean separation between generic structures and design choices tailored to a specific application. This leads to a simplified derivation of learning algorithms in new settings.

\section{Learning Monad-Recognizable Languages}\label{sec:learningalgebras}

In this section, we investigate languages recognizable by monad algebras and show that the task of learning them can be reduced to learning $F$-automata.

\begin{notation}
  Fix a monad $\MT=(T,\mu,\eta)$ on $\D$ that preserves quotients
  ($T(\E)\seq \E$). We continue to work with the fixed objects
  $I,O\in\D$ of inputs and outputs (with~$I$ now thought of as an
  input alphabet, so not normally the monoidal unit). Finally, we
  fix a full subcategory $\D_f\seq \D$ closed
  under subobjects and quotients, and call the objects of $\D_f$ the \emph{finite} objects of
  $\D$.
\end{notation}
\begin{example}\label{ex:monads}
Choose $\Set_f$, $\Pos_f$, $\JSL_f$, $\Vect{\K}_f$ and $\Nom_f$ to be the class of all Noetherian objects (see \Cref{ex:noetherian}). Our monads of interest model formal languages:
\begin{center}
	\begin{tabular}{| l l |}
\hline
		$\D$ & $\MT$ \\
		\hline
		$\Set$ & $T_+X=X^+$  \\
		$\Set^2$ & $T_\infty(X,Y)=(X^+,X^{\mathsf{up}}+X^*Y)$ \\
		$\Set$ & $T_\Gamma X=$ $\Gamma$-trees over $X$ \\
		$\JSL$ & $T_* X=$ free idempotent semiring on $X$  \\
		$\Vect{\K}$ & $T_*X=$ free $\K$-algebra on $X$  \\
		$\Pos$ & $T_SX=$ free stabilization algebra on $X$ \\
		$\Nom$ & $T_*X=X^*$ \\
\hline
	\end{tabular}
\end{center}
In the second row, 
  $X^{\mathsf{up}}=\{\,vw^\omega\;:\; v\in X^*,\, w\in X^+\,\}$ denotes the set of ultimately periodic words over $X$, and in the third row, $\Gamma$ is a finitary algebraic signature. Finite algebras for the above seven monads correspond to finite semigroups, finite Wilke algebras \cite{wilke91}, finite $\Gamma$-algebras, finite-dimensional $\K$-algebras, finite stabilization algebras \cite{pin16}, and orbit-finite nominal monoids \cite{boj2013}, respectively.
\end{example}
\noindent In the present setting, we shall consider the following generalized concept of a language:
\begin{definition}[Language]\label{def:languaget}
A \emph{language} is a morphism \[L\colon TI\to O \quad \text{in}\quad \D.\] It is called \emph{recognizable} if there exists a
$\MT$-homomorphism $e\colon (TI,\mu_I)\to(A,\alpha)$ into a finite $\MT$-algebra $(A,\alpha)$ and a morphism $p\colon A\to O$ in $\D$ with $L=p\o e$. 
\[
\xymatrix@R-1em@C-1em{
TI \ar[dr]_e \ar[rr]^L & & O \\
 & A \ar[ur]_p & 	
}
\]
In this case, we say that \emph{$e$ recognizes $L$ (via $p$)}.
\end{definition}

\begin{remark}\label{rem:fexamples}
The above definition generalizes the concepts of the previous sections. Indeed, if $F$ is functor for which the free monad $\MT_F$ (see \Cref{sec:preliminaries}) exists, then a language $L\colon T_F I\to O$ in the sense of \Cref{def:languaget} is precisely a language $L\colon \mu F_I\to O$ in the sense of \Cref{def:language}. Moreover, since the categories of $F$-algebras and $\MT_F$-algebras are isomorphic, $L$ is $\MT_F$-recognizable if and only if $L$ is \emph{regular}, i.e. accepted by some finite $F$-automaton.
\end{remark}

\begin{example}\label{ex:reclang}
	Many important automata-theoretic classes of languages can be characterized algebraically as recognizable languages for a monad. For the monads of \Cref{ex:monads} we obtain the following languages:
\begin{center}
\begin{tabular}{| l l l |}
\hline
  $\D$ & $\MT$ & $\MT$-recognizable languages \\
\hline
  $\Set$ & $\MT_+$ & regular languages \cite{pin15}  \\
  $\Set^2$ & $\MT_\infty$ & $\omega$-regular languages \cite{pinperrin04} \\
 $\Set$ & $\MT_\Gamma$  & tree languages over $\Gamma$ \cite{tata2007} \\
  $\JSL$ & $\MT_*$ & regular languages \cite{polak01}  \\
  $\Vect{\K}$ & $\MT_*$ & recognizable weighted languages \cite{reu80} \\
  $\Pos$ & $\MT_S$ & regular cost functions \cite{colcombet09}\\
  $\Nom$ & $\MT_*$ & monoid-recognizable data languages \cite{boj2013} \\
\hline
\end{tabular}
\end{center}
 In the following, we focus on ($\omega$-)regular languages and cost functions; see \cite{uacm17_2,um19} for details on the remaining examples.
\begin{enumerate}
\item For the semigroup monad $\MT_+$ on $\Set$ we obtain the classical concept of algebraic language recognition: a language $L\seq I^+$ is recognizable if there exists a semigroup morphism $e\colon I^+\to S$ into a finite semigroup $S$ and a subset $P\seq S$ with  $L=e^{-1}[P]$. Recognizable languages are exactly the ($\epsilon$-free) regular languages \cite{pin15}. In fact, the expressive equivalence between $\Sigma$-automata in $\Set$ and semigroups generalizes to $\Sigma$-automata in symmetric monoidal closed categories \cite{amu15}.
\item Languages of infinite words can be captured algebraically as
  follows. A \emph{Wilke algebra} \cite{wilke91} is a two-sorted set
  $(S_+, S_\omega)$ with a product
  $\cdot\colon S_+\times S_+\to S_+$, a mixed product
  $\cdot\colon S_+\times S_\omega \to S_\omega$ and a unary operation
  $(\dash)^\omega\colon S_+\to S_\omega$ subject to the laws \[(st)u=s(tu),\; (st)z=s(tz),\;
  s(ts)^\omega = (st)^\omega,\; (s^n)^\omega = s^\omega,\] for all
  $s,t,u\in S_+$, $z\in S_\omega$ and $n>0$. The free Wilke algebra generated by the
  two-sorted set $(X,Y)$ is
  $T_\infty(X,Y)=(X^+,X^{\mathsf{up}}+X^*Y)$ with the two products given by concatenation of words, and $w^\omega=www\ldots$ for $w\in X^+$. In particular, choosing the input object
  $(I,\emptyset)$ for some set $I$ and the output object
  $O=(\{0,1\},\{0,1\})$, we have
  $T_\infty(I,\emptyset)=(I^+,I^{\mathsf{up}})$, and thus a language
  $L\colon T_\infty(I,\emptyset)\to O$ specifies a set of finite or
  ultimately periodic infinite words. Languages recognizable by
  Wilke algebras correspond to \emph{$\omega$-regular
    languages}, i.e.~languages accepted by Büchi automata
  \cite{wilke91,pinperrin04}.
\item \emph{Regular cost functions} were introduced by Colcombet~\cite{colcombet09} as a quantitative extension of regular languages that provides a unifying framework for studying limitedness problems. A \emph{cost function} over the alphabet $I$ is a function $f\colon I^*\to \Nat\cup\{\infty\}$. Two cost functions $f$ and $g$ are identified if, for every subset $A\seq\Nat$, the function $f$ is bounded on $A$ iff $g$ is bounded on $A$. Regular cost functions correspond to languages recognizable by finite \emph{stabilization algebras}. The latter are ordered algebras over the signature $\Gamma=\{1/0,\,\o/2,\,(\dash)^\#/1,\, (\dash)^\omega/1\}$, with $\dash/n$ denoting arities, subject to suitable inequations; see \cite{pin16,uacm17_2}. We let $\MT_S$ denote the monad on $\Pos$ induced by this ordered algebraic theory.
\end{enumerate}
\end{example}
\noindent Our generic approach to learning $\MT$-recognizable languages is based on the idea of presenting the free algebra $\MT I=(TI,\mu_I)$ and its finite quotient algebras as automata:
\begin{definition}[$\MT$-refinable]
A quotient $e\colon TI\epito A$ in $\D$ is \emph{$\MT$-refinable} if there exists a finite quotient algebra $e'\colon \MT I\epito (B,\beta)$ of $\MT I$ and a morphism $f\colon B\epito A$  with $e=f\o e'$. 
\end{definition}

\begin{definition}[Automata presentation]\label{def:autpres}
An \emph{automata presentation} of the free $\MT$-algebra $\MT I$ is given by an endofunctor $F$ on $\D$ and an $F$-algebra structure $\delta\colon FTI\to TI$ such that 
\begin{enumerate}
\item $F(\E)\seq \E$, the initial algebra $\mu F_I$ exists, and every regular language $L\colon \mu F_I\to O$ admits a minimal automaton $\Min{L}$;
\item the $F_I$-algebra $(TI,[\eta_I,\delta])$ is reachable (i.e. $e_{TI}\in \E$);
\item a $\MT$-refinable quotient $e\colon TI\epito A$ in $\D$ carries
  a $\MT$-algebra quotient iff $e$ carries an $F$-algebra quotient;
  that is, there exists $\alpha_A$ making the left-hand square below
  commute iff there exists $\delta_A$ making the right-hand square
  commute.
\end{enumerate}
\[
\vcenter{
\xymatrix@R-1em{
TTI \ar[r]^{\mu_I} \ar@{->>}[d]_{Te} & TI \ar@{->>}[d]^e \\
TA \ar@{-->}[r]_{\exists \alpha_A} & A
}
}
\qquad\Longleftrightarrow\qquad
\vcenter{
\xymatrix@R-1em{
FTI \ar[r]^\delta \ar@{->>}[d]_{Fe} &  TI \ar@{->>}[d]^{e} \\
FA \ar@{-->}[r]_{\exists \delta_{A}} & A
}
}
\]
If in (3) only the implication ``$\To$'' is required, $(F,\delta)$ is called a \emph{weak automata presentation}.
\end{definition}

\begin{remark} 
\begin{enumerate}
\item Examples of functors $F$ for which the first condition is satisfied include all functors satisfying the \Cref{asm}, see \Cref{rem:algcoalg}(1) and \Cref{thm:minaut}, and
  polynomial functors $F=F_\Gamma$ on $\Set$ or $\Pos$ for a signature $\Gamma$. Recall from \Cref{ex:treeaut} that $F_\Gamma$-automata are $\Gamma$-automata. 
\item Presentations of $\MT$-algebras as (sorted) $\Sigma$-automata were previously studied by Urbat, Ad\'amek, Chen, and Milius \cite{uacm17} for the special case where $\D$ is a variety of algebras and $\Sigma\in \D$ is a free algebra, and called \emph{unary presentations}.
\end{enumerate}
\end{remark}

\begin{example}\label{ex:unpres}
For all monads of \Cref{ex:monads}, free algebras admit an automata presentation (in fact, a presentation as (sorted) $\Sigma$-automata \cite{uacm17,uacm17_2,um19}). Here we consider three cases:
\begin{enumerate}
\item \label{ex:unpresmonoids}\emph{Semigroups.} The free semigroup $T_+I = I^+$ has a $\Sigma$-automata presentation $\delta\colon \Sigma\times I^+\to I^+$ given by the alphabet \[\Sigma = \{ \vec{a}\;:\; a\in I \} \cup \{ \vecr{a} \;:\; a\in I \}\] and the transitions
\[\delta(\vec{a},w)=wa\quad\text{and}\quad  \delta(\vecr{a},w)=aw \quad \text{for}\quad w\in I^+,\, a\in I.\]
Recall from \Cref{ex:automata_sorted} that $\mu F_I = I\times \Sigma^*$. The unique homomorphism $e_{I^+}\colon I\times \Sigma^*\to I^+$ interprets a word in $I\times \Sigma^*$ as a list of instructions for forming a word in $I^+$, e.g. \[e_{I^+}(a\vec{a}\vec{b}\vecr{b}\vec{a}) \;=\; baaba.\] For a weak automata presentation of $I^+$, it suffices to take the restriction $\delta'\colon \Sigma'\times I^+\to I^+$ of $\delta$ where $\Sigma'=\{ \vec{a}\;:\; a\in I \}$.
\item\label{ex:unpresomegasem} \emph{Wilke algebras}. The
free  Wilke algebra $T_\infty(I,\emptyset)=(I^+,I^{\mathsf{up}})$ can be
  presented as a two-sorted $\Sigma$-automaton with the sorted
  alphabet
  $\Sigma=(\Sigma_{+,+},\,\Sigma_{+,\omega},\,\Sigma_{\omega,\omega},\emptyset)$
   given by
  \begin{align*}
  	\Sigma_{+,+} &= \{ \vec{a}: a\in I \} \cup \{ \vecr{a} : a\in I \}\\
  \Sigma_{+,\omega}&=\{\omega\}\cup \{ \vec{v}^\omega : v\in I^+ \}\\
  \Sigma_{\omega,\omega}&=\{ \vecru{a}: a\in I \}
  \end{align*}
  and the transitions below, where  $v,w\in I^+$,
  $z\in I^{\mathsf{up}}$, $a\in I$:
  \begin{align*}
  	\delta_{+,+}(\vec{a},w)&=wa, & \delta_{+,+}(\vecr{a},w)&=aw,\\
  \delta_{+,\omega}(\omega,w)&=w^\omega,&
  \delta_{+,\omega}(\vec{v}^\omega,w)&=wv^\omega,\\
  \delta_{\omega,\omega}(\vecru{a},z)&= az. & &
  \end{align*}
Recall from \Cref{ex:automata_sorted} that the initial algebra $\mu F_I$ consists of sorted words over $\Sigma$ with an additional first letter from $I$. The homomorphism $e_{(I^+,I^{\mathsf{up}})}\colon \mu F_I\to (I^+,I^{\mathsf{up}})$ views such a word as an instruction for forming a word in $(I^+,I^{\mathsf{up}})$, e.g.
\[ e_{(I^+,I^{\mathsf{up}})}(a\vec{b}\vec{a}\omega\vecru{a}\vecru{a})\;=\;aa(aba)^\omega. \]    To obtain a weak automata
  presentation, it suffices to restrict $\Sigma_{+,+}$ and
  $\Sigma_{+,\omega}$ to the finite subalphabets
  $\Sigma_{+,+}'= \{ \vec{a}: a\in I \}$ and
  $\Sigma_{+,\omega}'=\{\omega\}$. A $\Sigma'$-automaton is
  similar to a \emph{family of DFAs}, a concept recently employed by
  Angluin and Fisman \cite{af16} for learning $\omega$-regular
  languages.
\item\label{ex:unprestrees} \emph{Stabilization algebras.} Suppose that $\MT$ is a monad on $\Set$ or $\Pos$ induced by a finitary signature $\Gamma$ and (in-)equations $E$; see \Cref{sec:preliminaries}. Then $\MT I$ can be presented as the $\Gamma$-automaton $\delta\colon F_{\Gamma}(TI) \to TI$ given by the $\Gamma$-algebra structure on the free $(\Gamma,E)$-algebra $TI$. The initial algebra $\mu (F_\Gamma)_I$ is the algebra $T_\Gamma I$ of $\Gamma$-terms over $I$, and the unique homomorphism $e_{TI}\colon T_\Gamma I\epito TI$ interprets  $\Gamma$-terms in $TI$. In particular, for the monad $\MT=\MT_S$ on $\Pos$, the free stabilization algebra $\MT_S I$ admits a $\Gamma$-automata presentation for the signature $\Gamma$ of \Cref{ex:reclang}(3).
\end{enumerate}
\end{example}

\noindent From now on, we fix a weak automata presentation
$(F,\delta)$ of the free $\MT$-algebra $\MT I$.

\begin{definition}[Linearization] The \emph{linearization} of a language $L\colon TI\to O$ is given by
\[ \lin{L} \;=\; (\xymatrix{\mu F_I \ar@{->>}[r]^{e_{TI}} & TI  \ar[r]^L & O}). \]
\end{definition}

\begin{example}\label{ex:linearization}
\begin{enumerate}
\item \emph{Semigroups.} Take the $\Sigma$-automata presentation of \Cref{ex:unpres}(1). Given $L\seq I^+$, the language
  $\lin{L}\seq I\times \Sigma^*$ consists of all possible ways of
  generating words in $L$ by starting with a letter $a\in I$ and
  adding letters on the left or on the right. For instance, if $L$
  contains the word $abc$, then $\lin{L}$ contains the words
  $a\vec{b}\vec{c},\, b\vecr{a}\vec{c},\,
  b\vec{c}\vecr{a},\,c\vecr{b}\vecr{a}$.
\item \emph{Wilke algebras.} Take the weak presentation of \Cref{ex:unpres}\ref{ex:unpresomegasem}. Given $L\seq (I^+,I^{\mathsf{up}})$, the language $\lin{L}$ consists of all possible ways of generating words in $L$ by starting with a letter $a\in I$ and repeatedly applying any of the following operations: (i) right concatenation of a finite word with a letter; (ii) left concatenation of an infinite word with a letter; (iii) taking the $\omega$-power of a finite word. For instance, if $L$ contains the word $(ab)^\omega$, then $\lin{L}$ contains
$a\vec{b}\omega$, $b\vec{a}\omega\vecru{a}$, $a\vec{b}\omega\vecru{b}\vecru{a}$, $b\vec{a}\omega\vecru{a}\vecru{b}\vecru{a},\ldots$. Thus, $\lin{L}$ is a two-sorted version of the language $\mathsf{lasso}(L)$ mentioned in the Introduction. 
\item \emph{Stabilization algebras.} Take the presentation of \Cref{ex:unpres}(3). Given a language $L\seq T_S I$, the set $\lin{L}\seq T_\Gamma I$ consists of all $\Gamma$-trees whose interpretation in $T_S I$ lies in $L$.
\end{enumerate}
\end{example}
\noindent As demonstrated by the above examples, the linearization allows us to
identify a language $L\colon TI\to O$ with a language $\lin{L}\colon \mu F_I\to O$ of finite
words or trees. Since the morphism $e_{TI}\colon \mu F_I\epito TI$ is assumed to be epic by \Cref{def:autpres}(2), this
identification is unique; that is, $\lin{L}$ uniquely determines
$L$. In particular, in order to learn $L$, it is sufficient to learn
$\lin{L}$. This approach is supported by the following result:

\begin{theorem}\label{thm:ltolinl}
  If $L\colon TI\to O$ is a $\MT$-recognizable language, then its linearization
  $\lin{L}\colon \mu F_I\to O$ is {regular}, i.e.\ accepted by
  some finite $F$-automaton.
\end{theorem}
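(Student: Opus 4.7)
The plan is to show that the recognizing $\MT$-algebra $(A,\alpha)$ can be turned into a finite $F$-automaton accepting the linearization $\lin{L}$. The key observation is that any $\MT$-algebra quotient of $TI$ is automatically $\MT$-refinable (by itself), so the weak presentation yields an $F$-algebra structure on $A$ compatible with $\delta$.

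More concretely, let $e\colon (TI,\mu_I)\to (A,\alpha)$ be a $\MT$-homomorphism into a finite $\MT$-algebra together with $p\colon A\to O$ such that $L=p\o e$. Taking $e'=e$ and $f=\id_A$ exhibits $e$ as a $\MT$-refinable quotient. Applying the ``$\Rightarrow$'' direction of condition~(3) in \Cref{def:autpres} to the left-hand square (which commutes by definition of a $\MT$-algebra homomorphism) produces a morphism $\delta_A\colon FA\to A$ with $\delta_A\o Fe = e\o \delta$. I then equip $A$ with the $F$-automaton structure
\[ \delta_A\colon FA\to A,\qquad i_A \defeq e\o \eta_I\colon I\to A,\qquad f_A\defeq p\colon A\to O.\]
Since $A$ lies in $\D_f$ (as $(A,\alpha)$ is a finite $\MT$-algebra), this is a finite $F$-automaton.

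The remaining step is to check that $L_A=\lin{L}$. For this I note that $e\colon (TI,[\eta_I,\delta])\to (A,[i_A,\delta_A])$ is an $F_I$-algebra homomorphism: the $\eta_I$-component commutes by the definition $i_A=e\o\eta_I$, and the $\delta$-component is exactly the identity $\delta_A\o Fe=e\o\delta$ obtained above. Composing with the $F_I$-algebra homomorphism $e_{TI}\colon \mu F_I\to TI$ (which exists by \Cref{def:autpres}(2)) yields an $F_I$-algebra homomorphism $e\o e_{TI}\colon \mu F_I\to A$. By initiality of $\mu F_I$ in $\Alg F_I$, this map coincides with $e_A$. Therefore
\[ L_A \;=\; f_A\o e_A \;=\; p\o e\o e_{TI} \;=\; L\o e_{TI} \;=\; \lin{L},\]
proving that $\lin{L}$ is regular.

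I expect the only subtle point to be the realization that the ``$\Rightarrow$'' direction of a weak automata presentation is precisely what is needed here; the remainder is a routine diagram chase using uniqueness of $F_I$-algebra homomorphisms out of~$\mu F_I$. No properties of~$F$ beyond those guaranteed by \Cref{def:autpres} are required, and neither reachability of $e_A$ nor minimality of the resulting automaton needs to be addressed at this stage.
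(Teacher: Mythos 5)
Your proof follows essentially the same route as the paper's: obtain $\delta_A$ from the weak automata presentation, equip $A$ with initial states $e\o\eta_I$ and final states $p$, and use initiality of $\mu F_I$ to identify $e_A$ with $e\o e_{TI}$. However, there is one genuine gap at the very first step. The definition of a recognizable language only requires $e\colon (TI,\mu_I)\to(A,\alpha)$ to be a $\MT$-homomorphism into a finite algebra; it is \emph{not} required to lie in $\E$. Condition~(3) of \Cref{def:autpres}, including its ``$\Rightarrow$'' direction, is stated only for $\MT$-refinable \emph{quotients} $e\colon TI\epito A$, and the notion of $\MT$-refinability itself asks for a quotient algebra $e'\colon \MT I\epito(B,\beta)$ and an $\E$-morphism $f$ with $e=f\o e'$. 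So your assertion that ``taking $e'=e$ and $f=\id_A$ exhibits $e$ as a $\MT$-refinable quotient'' is unjustified when $e\notin\E$, and the weak presentation cannot be invoked. The paper closes this gap before anything else: it replaces $e$ by its coimage, using that the factorization system lifts to $\MT$-algebras (because $T(\E)\seq\E$ is assumed) and that $\D_f$ is closed under subobjects, so the image algebra is still finite and still recognizes $L$ via the restriction of $p$. Once you insert this reduction, the remainder of your argument — the construction of the $F_I$-algebra structure on $A$, the identification $e_A=e\o e_{TI}$, and the computation $L_A=p\o e\o e_{TI}=\lin{L}$ — is correct and coincides with the paper's proof. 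Your closing remark that neither reachability nor minimality is needed here is also accurate.
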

\begin{proof}[Proof sketch]
Let $e\colon \MT I\to (A,\alpha)$ be a $\MT$-homomorphism recognizing $L$ via $p\colon A\to O$. By replacing $e$ with its coimage, we may assume that $e\in \E$. The weak automata presentation yields an $F$-algebra structure on $A$ making $e$ an $F$-algebra homomorphism. Then $A$, viewed as an automaton with initial states $e\o \eta_I\colon I\to A$ and final states $p$, accepts $\lin{L}$.
\end{proof}

\noindent In view of this theorem, one can apply any learning algorithm for finite $F$-automata (e.g.~ Generalized $\mathsf{L}^*$ for the case of adjoint automata, or a learning algorithm for tree automata \cite{dh03} if $F$ is a polynomial functor) to learn the
minimal automaton $Q_L$ for $\lin{L}$. This automaton, together with the
epimorphism $e_{TI}$, constitutes a finite representation of the
unknown language $L\colon TI\to O$. If the given automata presentation
for $\MT I$ is non-weak, we can go one step further and infer from
$Q_L$ a minimal algebraic representation of $L$:

\begin{definition}[Syntactic $\MT$-algebra]
  Let $L\colon TI\to O$ be recognizable. A \emph{syntactic
    $\MT$-algebra} for $L$ is a quotient $\MT$-algebra
  $e_L\colon \MT I\epito \Syn{L}$ of $\MT I$ such that (1) $e_L$
  recognizes~$L$, and (2) $e_L$ factorizes through every finite quotient
  $\MT$-algebra $e\colon \MT I \epito (A,\alpha)$ recognizing $L$.
\[
\xymatrix@R-1em{
\MT I \ar@{->>}[r]^e \ar@{->>}[dr]_{e_L} & (A,\alpha) \ar@{-->}[d]\\
& \Syn{L}
}
\]
\end{definition}

\begin{theorem}\label{thm:synalgconst}
Let $(F,\delta)$ be an automata presentation for $\MT I$. Then every $\MT$-recognizable language $L\colon TI\to O$ has a syntactic $\MT$-algebra $\Syn{L}$, and its corresponding $F$-automaton (via the given presentation) is the minimal automaton for $\lin{L}$:
\[ \Syn{L}\;\cong\; \Min{\lin{L}}. \]
\end{theorem}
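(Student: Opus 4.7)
The plan is to exhibit the minimal $F$-automaton $\Min{\lin{L}}$ itself---transferred back to a $\MT$-algebra via the given presentation---as the syntactic $\MT$-algebra of $L$. Since $L$ is $\MT$-recognizable, \Cref{thm:ltolinl} shows that $\lin{L}$ is regular, so $\Min{\lin{L}}$ exists by clause (1) of \Cref{def:autpres}; it is moreover finite, being a quotient of any finite reachable $F$-acceptor of $\lin{L}$ and $\D_f$ being closed under quotients.

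Next I will produce a canonical $F$-algebra quotient $q_L\colon TI\twoheadrightarrow \Min{\lin{L}}$. Equipping $TI$ with the $F_I$-algebra structure $[\eta_I,\delta]$ (reachable by clause (2)) and outputs $L$ yields an $F$-automaton accepting $\lin{L}$, so minimality furnishes a unique automaton homomorphism $q_L$; comparing $(\E,\M)$-factorizations of $e_{\Min{\lin{L}}} = q_L\o e_{TI}$ shows $q_L\in\E$. The pivotal step, and the main obstacle of the proof, is to lift $q_L$ to a $\MT$-algebra quotient via the ``$\Leftarrow$'' direction of clause (3) of \Cref{def:autpres}. This requires verifying that $q_L$ is $\MT$-refinable: choose a finite $\MT$-algebra $e\colon \MT I\twoheadrightarrow (A,\alpha)$ recognizing $L$, which we may take in $\E$ by image factorization; the presentation turns $A$ into a reachable $F$-acceptor of $\lin{L}$ (as in the proof of \Cref{thm:ltolinl}), and minimality produces an automaton homomorphism $h'\colon A\to\Min{\lin{L}}$ whose composite with $e$ coincides with $q_L$ by epicness of $e_{TI}$. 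Clause (3) then endows $\Min{\lin{L}}$ with a $\MT$-algebra structure $\alpha_L$ for which $q_L$ is a $\MT$-homomorphism.

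Recognition of $L$ follows by comparing $f_{\Min{\lin{L}}}\o q_L\o e_{TI} = L_{\Min{\lin{L}}} = \lin{L}$ with $L\o e_{TI} = \lin{L}$ and canceling the epimorphism $e_{TI}$, yielding $L = f_{\Min{\lin{L}}}\o q_L$. For the universal factorization property, given any finite $\MT$-algebra quotient $e\colon \MT I\twoheadrightarrow (A,\alpha)$ recognizing $L$, the same strategy as above yields an $F$-automaton homomorphism $h\colon A\to\Min{\lin{L}}$ with $h\o e = q_L$. That $h$ is automatically a $\MT$-homomorphism is a short chase: $h\o\alpha\o Te = h\o e\o\mu_I = q_L\o\mu_I = \alpha_L\o Th\o Te$, and $Te$ is epic since $T(\E)\subseteq\E$, so $h\o\alpha = \alpha_L\o Th$. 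This establishes $\Syn{L}\cong\Min{\lin{L}}$.
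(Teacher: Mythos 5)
Your proposal is correct and follows essentially the same route as the paper's proof: view $TI$ as a reachable automaton accepting $\lin{L}$, obtain the quotient $q_L\colon TI\epito\Min{\lin{L}}$ by minimality, establish $\MT$-refinability via an arbitrary finite recognizing quotient, invoke clause (3) of the presentation to transfer the $\MT$-structure, and verify recognition and the universal factorization property by cancelling the epimorphism $e_{TI}$. The only addition is your explicit check that the comparison morphism $h$ is a $\MT$-homomorphism (via $Te$ being epic), a detail the paper leaves implicit; this is a welcome, correct refinement rather than a divergence.
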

\noindent This theorem asserts that we can uniquely equip the learned minimal $F$-automaton $Q_L=\Min{\lin{L}}$ with a $\MT$-algebra structure $\alpha_L\colon TQ_L\to Q_L$ for which the unique automata homomorphism $e_L\colon TI\epito Q_L$ is a $\MT$-algebra homomorphism $e_L\colon \MT I\epito (Q_L,\alpha_L)$. Then $e_L$ is the syntactic algebra for $L$.

\begin{remark}
To make the construction of $\Syn{L}$ from the learned automaton $Q_L$ effective, we need to assume that the morphisms $e_{Q_L}$, $e_{TI}$, $Te_{Q_L}$, $Te_{TI}$ and $\mu_I$ can be represented as (sorted families of) computable maps and moreover the maps $e_{TI}$ and $Te_{Q_L}$ admit computable (not necessarily morphic) right inverses $m$ and $n$, respectively. Then the $\MT$-algebra structure $\alpha_L$ of $\Syn{L}$ can be represented as the computable map $e_{Q_L}\o m\o \mu_I\o Te_{TI}\o n$; see the commutative diagram below.
\[
\xymatrix@R-1em{
T(\mu F_I) \ar@{->>}[r]^{Te_{TI}} \ar@{->>}[dr]_{Te_{Q_L}} & TTI \ar@{->>}[d]^{Te_L} \ar[r]^{\mu_I} & TI \ar@{->>}[d]^{e_L} & \mu F_I \ar@{->>}[l]_{e_{TI}} \ar@{->>}[dl]^{e_{Q_L}}\\
& TQ_L \ar[r]_{\alpha_L} & Q_L & 
}
\]
\end{remark}

\begin{example}
This computation strategy works for all monads of \Cref{ex:monads}. We consider our running examples:
\begin{enumerate}
\item \emph{Semigroups.} For the $\Sigma$-automata presentation of \Cref{ex:unpres}\ref{ex:unpresmonoids} and $L\seq I^+$, we compute the semigroup structure $\bullet\colon Q_L\times Q_L\to Q_L$ on $Q_L$ from its automaton structure as follows. Given $q,q'\in Q_L$ choose words $w,w'\in I\times \Sigma^*$  with $e_{Q_L}(w) = q$, $e_{Q_L}(w')=q'$\lsnote{Corrected}, i.e.~witnesses for the reachability of $q$ and $q'$. Next, choose $v\in I\times \Sigma^*$ with $e_{I^+}(v)=e_{I^+}(w)e_{I^+}(w')\in I^+$, and put $q\bullet q' := e_{Q_L}(v)$.
\item \emph{Wilke algebras.} Analogous to the case of semigroups.
\item \emph{Cost functions.} For a monad $\MT$ on $\Set$ or $\Pos$ given by a signature $\Gamma$ and (in-)equations $E$ and the $\Gamma$-automata presentation of $\MT I$ in \Cref{ex:unpres}\ref{ex:unprestrees}, the computation of $\alpha_L$ is trivial: the structure of the $\Gamma$-algebra $\Syn{L}$ is just the automaton structure of $Q_L$. In particular, this applies to the monad $\MT_S$ on $\Pos$ representing cost functions (\Cref{ex:monads}(3)). Thus, we obtain the first learning algorithm for this class of languages.
\end{enumerate} 
\end{example}

\section{Conclusions and Future Work}
We have presented a generic algorithm (Generalized $\mathsf{L}^*$) for learning
$F$-automata that forms a uniform abstraction of 
$\mathsf{L}^*$-type algorithms, their correctness proofs, and parts of their complexity analysis, and instantiates to several new learning algorithms, e.g.~for various notions of nominal automata with name binding. Moreover, we have shown how to extend the scope of Generalized $\mathsf{L}^*$, and other learning algorithms for $F$-automata, to languages recognizable by monad algebras. This gives rise to a generic approach to learning numerous types of languages, including cases for which no learning algorithms are known (e.g.~cost functions).

The next step is to turn our high-level categorical
approach into an implementation-level algorithm, parametric in the monad $\MT$ and its automata
presentation, with corresponding tool support. We expect that the recent work on coalgebraic minimization algorithms and their implementation \cite{dmsw17,dmsw19} can provide guidance. It should be illuminating to experimentally compare the performance of the generic algorithm with tailor-made algorithms for specific types of automata.

Our generalized $\mathsf{L}^*$ algorithm is concerned with adjoint $F$-automata and applies to a wide variety of automata on finite words (including weighted, residual nondeterministic, and nominal automata), but presently not to tree automata. To deal with the latter, the adjointness of the type functor $F$ needs to be relaxed, which entails that a coalgebraic semantics is no longer directly available. A categorical approach to learning tree automata, assuming a purely algebraic point of view, was recently proposed by van Heerdt et al~\cite{heerdt2020}. The subtle interplay between the algebraic and coalgebraic aspects underlying learning algorithms is up for further investigation.


\bibliography{refs,coalgebra,ourpapers}


\clearpage
\appendix

\section{Appendix: Omitted Proofs and Details}
In this appendix, we provide full proofs of all our results and
more detailed treatment of some examples omitted due to space restrictions.


\section*{Discussion of the \Cref{asm} and \ref{asm2}}
We comment on some technical consequences of our \Cref{asm} and \ref{asm2}.

\begin{remark}\label{rem:autfact}
The assumption $F(\E)\seq \E$ implies that the factorization system $(\E,\M)$ of $\D$ lifts to automata: given an automata homomorphism $h\colon Q\to Q'$ and its $(\E,\M)$-factorization $\xymatrix{h=(\, Q\ar@{->>}[r]^e & Q'' \ar@{>->}[r]^m & Q' \,)}$ in $\D$, there exists a unique automata structure $(Q'',\delta_{Q''}, i_{Q''}, f_{Q''})$ on $Q''$ such that both $e$ and $m$ are automata homomorphisms. Indeed, the transitions $\delta_Q''$ are given by diagonal fill-in
\[
\xymatrix{
FQ \ar[r]^{\delta_Q} \ar@{->>}[d]_{Fe} & Q \ar@{->>}[d]^e \\
FQ'' \ar@{-->}[r]^{\delta_{Q''}} \ar[d]_{Fm} & Q'' \ar@{>->}[d]^m \\
FQ' \ar[r]_{\delta_{Q'}} & Q'
}
\]
and the initial and final states by
\begin{eqnarray*}
i_{Q''} &=& (\, I \xra{i_Q} Q \xra{e} Q'' \,),\\
f_{Q''} &=& (\, Q''\xra{m} Q' \xra{f_{Q''}} O\,).
\end{eqnarray*}
\end{remark}

\begin{remark}
The condition $F_I(\M)\seq \M$ makes sure that the factorization system $(\E,\M)$ lifts from $\D$ to $\Coalg{F_I}$, the category of $F_I$-coalgebras: given an $F_I$-coalgebra homomorphism $h\colon (C,\gamma)\to (C',\gamma')$ and its $(\E,\M)$-factorization $\xymatrix{h=(\, C\ar@{->>}[r]^e & C'' \ar@{>->}[r]^m & C' \,)}$ in $\D$, there is a unique $F_I$-coalgebra structure $(C'',\gamma'')$ on $C''$ such that both $e$ and $m$ are coalgebra homomorphisms. The structure $\gamma''$ is defined via diagonal fill-in in analogy to \Cref{rem:autfact}.

Dually, the condition $G_O(\E)\seq \E$ implies that $\Alg{G_O}$, the category of $G_O$-algebras, has a factorization system lifting $(\E,\M)$.
\end{remark}

\section*{Details for \Cref{ex:categories}}
We show that for each of the five categories $\D$ of \Cref{table:categories} and the endofunctors $F$ and $G$ on $\D$ given by
\[ F=\Sigma\t (\dash)\quad\text{and}\quad G=[\Sigma,\dash],  \]
the \Cref{asm}(1)--(4) and \ref{asm2} are satisfied. 

Clearly, all the categories $\D$ with the corresponding choices of $I$ and $O$ satisfy the \Cref{asm}(1)(2). Moreover, (3) holds because $\D$ is closed. For (4), note that in all cases $\E$ coincides with the class of all epimorphisms. Since every left adjoint $F$ preserves epimorphisms, it follows that $F(\E)\seq \E$. It remains to verify the \Cref{asm2}. We consider the cases $\D=\Set$, $\Pos$, $\JSL$, $\Vect{\K}$;  for $\D=\Nom$, see the details for \Cref{ex:nom-aut}.

\paragraph{$F_I$ preserves $\M$ and intersections of $\M$-morphisms.} This is clear for $\D=\Set,\Pos$ since in these categories coproducts commute with intersections, i.e. one has
\[ (A+B)\cap (C+D) \cong (A\cap C) + (B\cap D). \]

 For $\D=\JSL$ recall that we have chosen $\Sigma$ to be the free semilattice $\Pow_f\Sigma_0$ over a finite set $\Sigma_0$ of generators, i.e. the $\cup$-semilattice of finite subsets of $\Sigma_0$. It follows that
\[ F_I X = I+\Sigma\t X = I + (\coprod_{a\in \Sigma_0} I)\t X \cong I+\coprod_{a\in \Sigma_0} I \t X \cong I+ \coprod_{a\in \Sigma_0} X \]
using that $I=\Pow_f 1$, $I\t X\cong X$, and the left adjoint $(\dash)\t X$ preserves coproducts. Now note that the coproduct $X+Y$ of two semilattices coincides with the product $X\times Y$, with injections given by
\begin{eqnarray*} 
\inl\colon X\to X\times Y,&\quad x\mapsto (x,\bot) \\
  \inr\colon Y\to X\times Y,&\quad Y\mapsto (\bot,y)
\end{eqnarray*}
This implies that monomorphisms in $\JSL$ are stable under coproducts, and that intersections commute with coproducts.  It thus follows from the above formula for $F_I X$ that $F_I$ preserves monomorphisms and intersections.

For $\D=\Vect{\K}$, the proof is analogous, using again the product/coproduct coincidence.

\paragraph{$G_O$ preserves epimorphims.} 
We first show that the functor $[\Sigma,\dash]$ preserves epimorphisms (i.e.~surjections). 
Note first that in $\D=\Set,\Pos,\JSL,\Vect{\K}$, the object $[\Sigma,X]$ is carried by the set $\D(\Sigma,X)$ with the $\D$-structure inherited from $X$ (i.e.~ defined pointwise), and that for any morphism $e\colon X\to Y$ the morphism $[\Sigma,e]\colon [\Sigma,X]\to [\Sigma,Y]$ is given by $f\mapsto e\o f$. We need to prove that $[\Sigma,e]$ is surjective provided that $e$ is surjective; that is, for every morphism $g\colon \Sigma\to Y$ there exists a morphism $f\colon \Sigma\to X$ making the following triangle commute:
\[
\xymatrix@R-1em{
\Sigma \ar@{-->}[r]^f \ar[dr]_g & X \ar@{->>}[d]^e \\
& Y
}
\]
This follows from the fact that in each case, $\Sigma$ has been chosen as a projective object of $\D$. For instance, for $\D=\JSL$ we construct $f$ as follows. Recall that $\Sigma$ is the free semilattice on a finite set $\Sigma_0$, and denote by $\eta\colon \Sigma_0\to \Sigma$ the universal map. For each $a\in \Sigma_0$, choose $x_a\in X$ with $e(x_a)=g(\eta(a))$, using that $e$ is surjective. This gives a map
\[
f_0\colon \Sigma_0\to X,\quad a\mapsto x_a.
\]
Let $f\colon \Sigma\to X$ be the unique semilattice homomorphism extending $f_0$, i.e. with $f\o \eta = f_0$. Then $e\o f=g$ since this equation holds when precomposed with the universal map $\eta$, as shown by the diagram below:
\[
\xymatrix@R-1em{
\Sigma_0 \ar[r]^{\eta} \ar@/^2em/[rr]^{f_0} & \Sigma \ar@{-->}[r]^f \ar[dr]_g & X \ar@{->>}[d]^e \\
& & Y
}
\]
This shows that the functor $[\Sigma,\dash]$ preserves epimorphisms.
Since epimorphisms in our categories $\D$ are stable under products, it follows that also the functor $G_O = O\times [\Sigma,\dash]$ preserves epimorphisms.

\section*{Details for \Cref{ex:automata}}
\begin{enumerate}
	\item The functor $P=[\dash,O]\colon \D\to\D^\op$ is a left adjoint (with right adjoint $P^{\op}\colon \D^{\op}\to \D$) because, for each $X,Y\in \D$, 
	\begin{align*} \D(X,P Y) &= \D(X,[Y,O])\\
& \cong \D(X\t Y,O) \\&\cong \D(Y\t X,O) \\& \cong \D(Y,[X,O])\\& = \D(Y,PX). \end{align*}
\item We have a natural isomorphism 
\[PF_I \cong G_O^{\op} P.\]
To see this, observe that all parts of the following diagram commute up to isomorphism.
\[
\xymatrix{
	\D \ar@/_12ex/[dd]_{F_I} \ar[d]_F \ar[r]^{P} & \D^{\op} \ar[d]^{G^{\op}} \ar@/^12ex/[dd]^{G_O^{\op}} \\
	\D \ar[r]_{P} \ar[d]_{I+\dash} & \D^{\op} \ar[d]^{(O\times \dash)^{\op}} \\
	\D \ar[r]_{P} &  \D^{\op}
}
\]
The left and right parts commute by definition. The two squares commute because for each $X\in \D$,
\[ PFX = [\Sigma\t X,O] \cong [\Sigma, [X,O]] = GPX\]
and \[ P(I+X) \cong PI\times PX \cong  [I_\D,O]\times PX \cong O\times PX. \]
The isomorphism $P(I+X)\cong PI\times PX$ uses that $P$ is a left adjoint, i.e. preserves coproducts.
\end{enumerate}

\section*{Details for \Cref{ex:nom-aut}}\label{ex:nomautdetails}
We verify that the functors of \Cref{ex:nom-aut}(1)--(4), see the table below, satisfy our \Cref{asm}(4) and \ref{asm2}. Recall that we have chosen
$I = 1$ and $O=2$,
and that the factorization system of $\Nom$ is the one given by epimorphisms (= surjective equivariant maps) and monomorphisms (= injective equivariant maps). 
\begin{center}
\begin{tabular}[ht]{|l l l|}
\hline
 & $F$ & $G$ \\
\hline
(1) & $\At\times (\dash)$ & $[\At,\dash]$ \\
(2) & $\At\ast (\dash)$ & $[\At](\dash)$ \\
(3) & $\At\times (\dash) + \At\ast (\dash)$ & $[\At,\dash]\times [\At](\dash)$ \\
(4) & $\At\times (\dash) + \At\ast (\dash) + [\At](\dash)$ &  $[\At,\dash]\times [\At](\dash) \times R$ \\
\hline
\end{tabular}
\end{center}

\paragraph{$F$ preserves epimorphisms.} This follows from the fact that $F$ is a left adjoint.

\paragraph{$F_I$ preserves monomorphisms.} The functors $\At\times (\dash)$ and $\At\ast (\dash)$  preserve monomorphisms by definition, recalling that for an equivariant map $e\colon X\to Y$ the map $\At\ast e$ is given by
\[ \At\ast e\colon \At\ast X\to \At\ast Y, \quad (a,x) \mapsto (a,e(x)).  \]
The functor $[\At](\dash)$ preserves monomorphisms because it is a right adjoint. Since coproducts in $\Nom$ are formed at the level of $\Set$, it follows that monomorphisms in $\Nom$ are stable under coproducts. This implies that for all the functors $F$ in (1)--(4), the functor $F_I = I+F$ preserves monomorphisms.

\paragraph{$F_I$ preserves intersections.} Note that intersections of subobjects (i.e.~equivariant subsets) in $\Nom$ are just set-theoretic intersections. Thus, the functors $\At\times (\dash)$ and $\At\ast (\dash)$ clearly preserve intersections by definition. The functor $[\At](\dash)$ preserves them because it is right adjoint and thus preserves all limits. Since intersections commute with coproducts in $\Set$ and thus also in $\Nom$, it follows that for all the functors $F$ in (1)--(4), the functor $F_I=I+F$ preserves intersections.

\paragraph{$G_O$ preserves epimorphisms.}
The functor $[\At](\dash)$ preserves epimorphisms because it is a left adjoint. Moreover, we have

\begin{lemma}\label{lem:hompresepi}
The functors $[\At,\dash]\colon \Nom\to \Nom$ and $R\colon \Nom\to \Nom$ preserve epimorphisms.
\end{lemma}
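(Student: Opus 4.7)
The plan is to prove both claims by a single combinatorial construction. Given a surjective equivariant $e\colon X \twoheadrightarrow Y$ and a finitely supported $g\colon \At\to Y$ with support $S$, I would build a finitely supported lift $f\colon \At\to X$ with $e\circ f = g$, and, for the claim about $R$, check that one can additionally impose $a\,\#\,f(a)$ whenever $g$ already satisfies that condition.

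The key technical input is the following \emph{fresh lifting} observation: for each $y\in Y$ there is a number $n(y)\in\mathds{N}$, depending only on the $\Perm(\At)$-orbit of $y$, such that for every finite $U\subseteq \At\setminus\supp(y)$ with $|U|\ge n(y)$, there exists $x\in e^{-1}(y)$ with $\supp(x) \subseteq \supp(y)\cup U$. I would prove this by fixing any $x_0\in e^{-1}(y)$ and setting $n(y) := |\supp(x_0)\setminus \supp(y)|$; since the pointwise stabilizer $\Perm(\At)_{\supp(y)}$ acts transitively on the $n(y)$-subsets of $\At\setminus\supp(y)$, a suitable $\pi$ in this stabilizer yields $\pi\cdot x_0\in e^{-1}(y)$ (using that $\pi$ fixes $\supp(y)$ and hence $\pi\cdot y = y$) whose support has the desired ``extra part'' $U$.

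Using this, I would enlarge $S$ to a finite $S'\supseteq S$ with $|S'\setminus S|\ge n+1$, where $n$ is the maximum of $n(g(s))$ over $s\in S$ together with $n(g(a^\star))$ for any one $a^\star\notin S$ (only finitely many orbits arise because $g$ is $S$-supported). Fix $a^\star\in\At\setminus S'$. By the fresh lifting fact, pick $x_s\in e^{-1}(g(s))$ with $\supp(x_s)\subseteq S'$ for every $s\in S'$, and $x^\star\in e^{-1}(g(a^\star))$ with $\supp(x^\star)\subseteq S'\cup\{a^\star\}$. Define
\[
  f(s) := x_s \ \text{for } s\in S',\qquad f(a) := \pi_a\cdot x^\star \ \text{for } a\in\At\setminus S',
\]
where $\pi_a$ is any permutation fixing $S'$ pointwise and sending $a^\star$ to $a$. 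Well-definedness of $f(a)$ uses $\supp(x^\star)\subseteq S'\cup\{a^\star\}$, so two valid choices of $\pi_a$ differ by a permutation fixing $\supp(x^\star)$. A short verification then shows that $\pi\cdot f = f$ for every $\pi\in\Perm(\At)_{S'}$ and that $e\circ f = g$ (using $S$-invariance of $g$). This settles the case of $[\At,\dash]$.

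For $R$, the same construction applies, with the freshness constraint $a\,\#\,f(a)$ enforced at each choice. Since $g\in RY$ forces $s\notin\supp(g(s))$ and $a^\star\notin\supp(g(a^\star))$, the fresh lifting fact leaves enough room inside $S'$ (respectively $S'\setminus\{a^\star\}$) to additionally avoid $s$ (resp.\ $a^\star$) in the selected supports, delivering $s\,\#\,x_s$ and $a^\star\,\#\,x^\star$. Freshness for $a\notin S'$ is then automatic: $\supp(f(a)) = \pi_a\,\supp(x^\star)\subseteq \pi_a S' = S'$ while $a\notin S'$. The main obstacle I expect is the bookkeeping needed to ensure that a \emph{single} finite set $S'$ accommodates \emph{all} required support constraints simultaneously; this is precisely what the uniform bound $n$ buys, and the rest reduces to routine checks about permutations fixing $S'$ pointwise.
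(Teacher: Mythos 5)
Your proof is correct, and it shares the basic mechanism of the paper's argument: fix the lift explicitly on a finite set of atoms, pick one ``generic'' lift $x^\star$ for a fresh atom $a^\star$, and propagate it to all remaining atoms by applying permutations that fix the supporting set pointwise, with well-definedness and finite support following from the support bound on $x^\star$. The difference is in the bookkeeping. The paper takes \emph{arbitrary} preimages $x_b\in e^{-1}(g(b))$ for the finitely many exceptional atoms $b$ and simply enlarges the supporting set of $f$ to $\supp g\cup\supp x\cup\bigcup_b\supp x_b$ after the fact; it also uses only the transpositions $(b\,a)$ rather than general permutations $\pi_a$. You instead insist that all chosen preimages have support inside a single finite set $S'$ fixed in advance, which forces you to prove the ``fresh lifting'' lemma with its orbit-uniform bound $n(y)$. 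That lemma is correct (the pointwise stabilizer of $\supp(y)$ fixes $y$ and acts transitively enough on finite subsets of $\At\setminus\supp(y)$), but it is not needed for the statement at hand: since one only has to exhibit \emph{some} finite support for $f$, nothing is lost by letting the supports of the finitely many lifts be uncontrolled. What your extra machinery buys is an explicit a priori bound on $\supp(f)$ in terms of $|S|$ and the numbers $n(y)$, which could matter for quantitative or algorithmic purposes but plays no role in the lemma. The treatment of $R$ is the same in both proofs in spirit: re-choose the preimages to additionally satisfy the freshness constraints, using that $g\in RY$ already guarantees $a\notin\supp(g(a))$, and observe that freshness for the generic atoms then comes for free from the support bound on $f(a)$.
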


\begin{proof}
\begin{enumerate}
\item We first show that $[\At,\dash]$ preserves epimorphisms (i.e.~surjections). This can be deduced from the fact that every polynomial functor on $\Nom$ preserves epimorphisms (like in $\Set$) and that $[\At,\dash]$ can be expressed as a quotient functor of a polynomial functor \cite[Lemma 6.9]{msw16}. In the following, we give a direct proof for the convenience of the reader.

 Recall from \cite[Theorem 2.19]{pitts2013} that $[\At,X]$ is the nominal set of finitely supported maps $f\colon \At\to X$; here $f$ is \emph{finitely supported} if there exists a finite subset $S\seq \At$ such that for all permutations $\pi\in \Perm(\At)$ that fix $S$ and all $a\in \At$ one has
 $f(\pi\o a) = \pi\o f(a)$. In particular, equivariant maps are finitely supported maps with support $S=\emptyset$. For any equivariant map $e\colon X\to Y$, the map $[\At,e]$ is given by
\[ [\At,e]\colon [\At,X]\to [\At, Y],\quad f\mapsto e\o f.\]
We need to show that $[\At,e]$ is surjective provided that $e$ is surjective; in other words, for every finitely supported map $g\colon \At\to Y$, there exists a finitely supported map $f\colon \At\to X$ making the following triangle commute:
\[
\xymatrix@R-1em{
\At \ar@{-->}[r]^f \ar[dr]_g & X \ar@{->>}[d]^e \\
& Y
}
\]
Fix an arbitrary atom $a\not \in \At\setminus \supp g$. Moreover, choose $x\in X$ with $e(x)=g(a)$, and choose $x_b\in X$ with $e(x_b)=g(b)$ for every $b\in \supp g\cup\supp x$,  using that $e$ is surjective. Define the map $f\colon \At\to X$ as follows:
\[ 
f(b) = 
\begin{cases}
(b\, a) \o x & \text{for $b\in \At\setminus (\supp g\cup \supp x)$};\\
x_b & \text{for $b\in \supp g \cup \supp x$}. 
\end{cases}
\]
We claim that (i) the map $f$ is finitely supported and (ii) it satisfies $e\o f=g$.

\medskip\noindent \emph{Ad (i).} We show that the finite set of atoms
\[S\;=\;\supp g \,\cup\,\supp x \,\cup\, \bigcup_{b\in \supp g\,\cup\, \supp x} \supp x_b\] supports the map $f$. Thus, let $\pi\in \Perm(\At)$ be a permutation fixing $S$;  we need to prove that $f(\pi\o b)=\pi \o f(b)$ for all $b\in \At$. For $b\in \supp g\cup \supp x$, we have
\[ f(\pi \o b) = f(b) = x_b = \pi \o x_b = \pi\o f(b). \] 
For $b\in \At\setminus (\supp g\cup \supp x)$, we get
\[
f(\pi \o b) = (\pi(b)\, a)\o x = \pi \o (b\, a)\o x = \pi\o f(b).
\]
Here the first and last equation use the definition of $f$. 
The middle equation holds because the two permutations $(\pi(b)\, a)$ and $\pi \o (b\,a)$ are equal on $\supp x$. Indeed, both permutations send $a$ to $\pi(b)$, and all elements of $\supp x\setminus \{a\}$ are fixed by both permutations because $b,\pi(b)\not\in \supp x$ and $\pi$ fixes $\supp x$.

\medskip\noindent \emph{Ad (ii).} We show that $e(f(b))=g(b)$ for all $b\in \At$. For $b\in \supp g\cup \supp x$ we have
\[ e(f(b)) = e(x_b) = g(b) \]
by definition of $f$ and $x_b$. For $b\in \At\setminus(\supp g\cup \supp x)$,
\begin{align*} e(f(b)) &= e((b\, a)\o x) & \text{def. $f$} \\
& = (b\,a)\o e(x) & \text{$e$ equivariant} \\
&= (b\, a)\o g(a) & \text{def. $x$} \\
& = g((b\, a)\o a) & \text{$a,b\not\in \supp g$} \\
& = g(b) &. 
\end{align*}
\item We show that $R$ preserves surjections. Recall that $R$ is the subfunctor of $[\At,\dash]$ given by
\[ RX = \{\, f\in [\At,X] \;:\; \text{$a\,\#\,f(a)$ for every $a\in \At$}\,\}. \]
We need to show that $Re\colon RX\to RY$ is surjective for every surjective equivariant map $e\colon X\epito Y$; that is, for every $g\in RY$, there exists $f\in RX$ with $e\o f=g$.

The definition of $f$ is the same as in part (1) of the proof, except that the elements $x$ and $x_b$ ($b\in \supp g\cup \supp x$) are now additionally required to satisfy $a\# x$ and $b\# x_b$. Such a choice of $x$ and $x_b$ is always possible: if $x$ is any element of $X$ with $e(x)=g(a)$, choose $a'$ with $a'\# g(a),x$ and put $x'=(a'\, a)\o x$. Then $a\#x'$ and \[e(x')=e((a'\,a)\o x) = (a'\, a)\o e(x) = (a'\, a)\o g(a) = g(a),\] where the last equation uses that $a,a'\# g(a)$. Thus, we can replace $x$ by $x'$. Analogously for $x_b$. 

Part (1) now shows that $f$ is finitely supported and satisfies $e\o f=g$. Moreover, we clearly have $b\# f(b)$ for every $b\in \At$ by definition of $f$ and the above choices of $x$ and $x_b$, i.e.~$f\in RX$.\qedhere
\end{enumerate}
\end{proof}
Since epimorphisms in $\Nom$ are stable under products (which follows from the corresponding property in $\Set$), we conclude that for all the functors $G$ in (1)--(4), the functor $G_O=2\times G$ preserves epimorphisms.

\section*{Details for \Cref{ex:automata_sorted}}
We describe sorted $\Sigma$-automata for the case of general base categories $\D$. Suppose that $(\D,\t,I_\D)$ is a symmetric monoidal closed category satisfying our \Cref{asm}\ref{A1}--\ref{A3}, and let $S$ be a set of sorts. Then the category $\D^S$ (equipped with the monoidal structure and the factorization system inherited sortwise from $\D$) is also symmetric monoidal closed and satisfies the \Cref{asm}\ref{A1}--\ref{A3}.

 Fix an arbitrary object $I\in \D^S$ inputs (not necessarily the tensor unit), an arbitrary object $O\in \D^S$ of outputs, and a family of objects $\Sigma=(\Sigma_{s,t})_{s,t\in S}$ in $\D$; we think of $\Sigma_{s,t}$ as a set of letters with input sort $s$ and output sort $t$. Take the functors
\[ F\colon \D^S\to \D^S,\qquad (FQ)_t = \coprod_{s\in S} \Sigma_{s,t}\t Q_s\quad (t\in S), \]

\[ G\colon \D^S\to \D^S,\qquad (GQ)_s = \prod_{t\in S} [\Sigma_{s,t},Q_t] \quad (s\in S). \]
The functor $F$ is a left adjoint of $G$: we have the isomorphisms (natural in $P,Q\in \D^S$)
\begin{align*}
	\D^S(FQ,P) &=\prod_{t\in S} \D((FQ)_t,P_t) \\
	& = \prod_{t\in S} \D(\coprod_{s\in S} \Sigma_{s,t}\t Q_s,P_t) \\
	& \cong \prod_{t\in S} \prod_{s\in S} \D(\Sigma_{s,t}\t Q_s,P_t) \\
	& \cong \prod_{s\in S} \prod_{t\in S} \D(\Sigma_{s,t}\t Q_s,P_t) \\
	& \cong \prod_{s\in S} \prod_{t\in S} \D(Q_s,[\Sigma_{s,t},P_t]) \\
	& \cong \prod_{s\in S} \D(Q_s, \prod_{t\in S} [\Sigma_{s,t},P_t]) \\
	& = \prod_{s\in S} \D(Q_s, (GP)_s). \\
	&= \D^S(Q,GP)
\end{align*}
Instantiating \Cref{def:automaton} to the above data, we obtain the concept of a \emph{sorted $\Sigma$-automaton}. It is given by an $S$-sorted object of states $Q\in \D^S$ together with morphisms $\delta_{Q,s,t}$, $i_{Q,t}$ and $f_{Q,t}$ as in the diagram below for $s,t\in S$:
\[ 
\xymatrix{
& \Sigma_{s,t}\t Q_t \ar[d]^{\delta_{Q,s,t}} & \\
I_t \ar[r]_{i_{Q,t}} & Q_t \ar[r]_{f_{Q,t}} & O_t
}
 \] 
In generalization of the single-sorted case (see \Cref{ex:automata}), the initial algebra for $F_I$ can be described as follows. For $n\in \Nat$ and $s,t\in S$ define the object $\Sigma_{s,t}^n\in \D$ inductively by
\[ \Sigma_{s,t}^0 = I_\D, \quad \Sigma_{s,t}^{n+1} = \coprod_{r\in S} \Sigma_{s,r}\t \Sigma_{r,t}^n.  \]
and put
\[ \Sigma_{s,t}^* = \coprod_{n\in \Nat} \Sigma_{s,t}^n. \]
The initial algebra for the functor $F_I$ is given by
\[  
(\mu F_I)_t = \coprod_{s\in S} I_s\t \Sigma_{s,t}^*\quad (t\in S).
\]


\section*{Proof of \Cref{thm:minaut}}
We first establish some basic observations about automata homomorphisms and languages:
\begin{proposition}\label{prop:hom_pres_language}
	For each automata homomorphism $h\colon Q\to Q'$ one has 
	$L_Q = L_{Q'}$
\end{proposition}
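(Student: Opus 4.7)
The plan is to exploit the dual algebra/coalgebra perspective on automata from \Cref{rem:algcoalg}. Viewing $Q$ and $Q'$ as $F_I$-algebras with structures $[i_Q,\delta_Q]\colon I+FQ\to Q$ and $[i_{Q'},\delta_{Q'}]\colon I+FQ'\to Q'$, the two conditions $h\o i_Q = i_{Q'}$ and $h\o\delta_Q = \delta_{Q'}\o Fh$ from \Cref{def:automaton} together exactly say that $h$ is an $F_I$-algebra homomorphism from $(Q,[i_Q,\delta_Q])$ to $(Q',[i_{Q'},\delta_{Q'}])$.

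Now I would appeal to the initiality of $\mu F_I$. The morphism $e_{Q'}\colon \mu F_I\to Q'$ is by definition the unique $F_I$-algebra homomorphism into $Q'$; but the composite $h\o e_Q\colon \mu F_I\to Q'$ is also an $F_I$-algebra homomorphism, being a composite of two such. Hence uniqueness forces the equation
\[ h\o e_Q \;=\; e_{Q'}. \]

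With this at hand, the conclusion is a one-line calculation using the third homomorphism condition $f_Q = f_{Q'}\o h$:
\[
L_{Q'} \;=\; f_{Q'}\o e_{Q'} \;=\; f_{Q'}\o h\o e_Q \;=\; f_Q\o e_Q \;=\; L_Q.
\]
There is no real obstacle here; the only thing to be careful about is making the reinterpretation of an automaton as an $F_I$-algebra explicit, so that initiality of $\mu F_I$ becomes directly applicable.
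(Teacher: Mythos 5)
Your proof is correct and is essentially the paper's own argument: the paper likewise establishes $h\o e_Q = e_{Q'}$ by initiality of $\mu F_I$ (the upper triangle of its diagram) and then concludes via $f_{Q'}\o h = f_Q$. You merely make explicit the identification of automata with $F_I$-algebras, which the paper leaves implicit from \Cref{rem:algcoalg}.
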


\begin{proof}
	This follows from the commutative diagram below. The upper triangle commutes by initiality of $\mu F_I$, and all remaining parts commute by definition.
	\[
	\xymatrix@C-2em{
		& \ar@/_5em/[dd]_{L_Q} \ar@/^5em/[dd]^{L_{Q'}} \ar[dl]_{e_Q} \ar[dr]^{e_{Q'}} \mu F_I & \\
		Q \ar[dr]_{f_Q} \ar[rr]^h & & Q' \ar[dl]^{f_{Q'}} \\
		& O &  
	}
	\]
\end{proof}

\begin{remark}\label{rem:fghom}
Every $F$-algebra homomorphism $h\colon (Q,\delta)\to (Q',\delta')$ is also a $G$-coalgebra homomorphism $h\colon (Q,\delta^@)\to (Q',(\delta')^@)$, and vice versa. Indeed, the corresponding commutative squares  
are just adjoint transposes of each other.
\[
\xymatrix{
FQ \ar[r]^\delta \ar[d]_{Fh} & Q \ar[d]^h \\
FQ' \ar[r]_{\delta'} & Q'
}
\qquad
\xymatrix{
Q \ar[r]^{\delta^@} \ar[d]_h & GQ \ar[d]^{Gh} \\
Q' \ar[r]_{(\delta')^@} & GQ'
}
\]
\end{remark}

\begin{proposition}\label{prop:lang_vs_behavior}
	For all automata $Q$ and $Q'$, we have
	\[ L_Q=L_{Q'} \quad\text{iff}\quad m_Q\o e_Q = m_{Q'}\o e_{Q'}. \]
\end{proposition}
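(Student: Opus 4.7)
The $(\Leftarrow)$ direction is the easy half, and I would dispatch it first. Since $m_Q$ is a $G_O$-coalgebra homomorphism, the diagram
\[
\xymatrix@R-1em{
Q \ar[r]^{m_Q} \ar[d]_{\gamma_Q} & \nu G_O \ar[d]^{\gamma} \\
G_O Q \ar[r]_{G_O m_Q} & G_O(\nu G_O)
}
\]
commutes. Post-composing with $\mathsf{outl}\colon O\times G(\nu G_O)\to O$ and using that $\mathsf{outl}\circ \gamma_Q = f_Q$ gives $\mathsf{outl}\circ\gamma\circ m_Q = f_Q$, hence $\mathsf{outl}\circ\gamma\circ m_Q\circ e_Q = f_Q\circ e_Q = L_Q$. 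Thus $m_Q\circ e_Q$ determines $L_Q$, giving the $(\Leftarrow)$ direction.

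For $(\Rightarrow)$, the strategy is to exhibit $m_Q\circ e_Q$ as the unique $G_O$-coalgebra homomorphism into $\nu G_O$ out of a $G_O$-coalgebra on $\mu F_I$ that depends only on the language. Concretely, let $\delta_{\mu F_I}\colon F(\mu F_I)\to \mu F_I$ be the $F$-part of the $F_I$-algebra structure $\alpha$ on $\mu F_I$ (i.e.\ $\alpha\circ \mathsf{inr}$), and equip $\mu F_I$ with the $G_O$-coalgebra structure $\langle L_Q, \delta_{\mu F_I}^{@}\rangle\colon \mu F_I \to O\times G(\mu F_I)$. I claim that $e_Q\colon \mu F_I\to Q$ is then a $G_O$-coalgebra homomorphism. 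Indeed, commutation of the output component amounts to $f_Q\circ e_Q = L_Q$, which holds by definition; and commutation of the $G$-component amounts to $e_Q$ being a $G$-coalgebra homomorphism, which follows by \Cref{rem:fghom} from the fact that $e_Q$, being an $F_I$-algebra homomorphism, is in particular an $F$-algebra homomorphism with respect to $\delta_{\mu F_I}$ and $\delta_Q$.

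Consequently, $m_Q\circ e_Q$ is a composite of $G_O$-coalgebra homomorphisms, and hence is itself a $G_O$-coalgebra homomorphism from $(\mu F_I,\langle L_Q,\delta_{\mu F_I}^{@}\rangle)$ to $\nu G_O$. Finality of $\nu G_O$ pins this morphism down uniquely, and its domain coalgebra depends on $Q$ only through $L_Q$. Therefore, if $L_Q=L_{Q'}$, the same universal property forces $m_Q\circ e_Q = m_{Q'}\circ e_{Q'}$. The only subtlety worth double-checking is the compatibility between the $F$-algebra structure on $\mu F_I$ inherited from the $F_I$-algebra structure and the $F$-algebra structure on $Q$ underlying the automaton, but this is an immediate unfolding of definitions once one observes that an $F_I$-algebra is exactly an $F$-algebra together with a morphism $I\to Q$, and an $F_I$-homomorphism preserves the $F$-part.
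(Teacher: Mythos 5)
Your proof is correct and follows essentially the same route as the paper's: the $(\Leftarrow)$ direction via $f_Q=\outl\circ\gamma\circ m_Q$ from the coalgebra homomorphism property of $m_Q$, and the $(\Rightarrow)$ direction by viewing $\mu F_I$ as a $G_O$-coalgebra with output $L$, so that $m_Q\circ e_Q$ becomes the unique coalgebra homomorphism into the final coalgebra $\nu G_O$ (via \Cref{rem:fghom}). You merely spell out the coalgebra structure $\langle L_Q,\delta_{\mu F_I}^{@}\rangle$ and the two components of the homomorphism square more explicitly than the paper does.
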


\begin{proof}
	\begin{enumerate}
		\item For the ``if'' direction, suppose that $m_Q\o e_Q = m_{Q'}\o e_{Q'}$. Then the following diagram (where $\outl\colon G_O=O\times G\to O$ denotes the left product projection) commutes by the definition of $\gamma_Q$ in \Cref{rem:algcoalg} and because $m_Q$ is a $G_O$-coalgebra homomorphism.
		\begin{equation}\label{eq:fq_vs_outl}
		\xymatrix{
			Q \ar[r]^{\gamma_Q} \ar[d]_{m_Q} \ar@/^5ex/[rr]^{f_Q} & G_O Q \ar[d]_{G_O m_Q} \ar[r]^{\outl} & O \\
			\nu G_O \ar[r]_<<<<<{\gamma} & G_O(\nu G_O) \ar[ur]_{\outl}
		}
		\end{equation}
		Thus $f_Q=\outl\o \gamma \o m_Q$ and analogously $f_{Q'}=\outl\o \gamma\o m_{Q'}$. This implies
		\[ L_Q = f_Q\o e_Q = \outl\o \gamma \o m_Q\o e_Q = \outl \o \gamma \o m_{Q'}\o e_{Q'} = \cdots = L_{Q'}. \]
\item For the ``only if'' direction, suppose that $L:=L_Q=L_{Q'}$. By equipping $\mu F_I$ with final states $L\colon \mu F_I\to O$, we can view $\mu F_I$ as a $G_O$-coalgebra, and thus $e_Q\colon \mu F_I\to Q$ as a $G_O$-coalgebra homomorphism (see \Cref{rem:fghom}). It follows that $m_Q\o e_Q\colon \mu F_I\to \nu G_O$ is a $G_O$-coalgebra homomorphism. Analogously, $m_{Q'}\o e_{Q'}$ is a coalgebra homomorphism. Thus, $m_Q\o e_Q = m_{Q'}\o e_{Q'}$ by finality of $\nu G_O$.\qedhere
	\end{enumerate}
\end{proof}

\begin{remark}\label{rem:laccept}
For every language $L\colon \mu F_I\to O$ there exists an automaton $Q$ accepting $L$. Indeed, one can choose $Q=\mu F_I$ with output morphism $L\colon \mu F_I\to O$.	
\end{remark}
We are prepared to prove the minimization theorem:

\begin{proof}[Proof of \Cref{thm:minaut}] Fix an arbitrary automaton $Q$ with $L_Q=L$ (see \Cref{rem:laccept}). Viewing $\mu F_I$ as an automaton with output morphism $L_Q = f_Q\o e_Q\colon \mu F_I\to O$, the unique $F_I$-algebra homomorphism $e_Q$ is an automata homomorphism. Analogously, equipping $\nu G_O$ with the initial states $m_Q\o i_Q\colon I\to \nu G_O$ makes the unique $G_O$-coalgebra homomorphism $m_Q\colon Q\to \nu G_O$ an automata homomorphism. Thus $m_Q\o e_Q$ is an automata homomorphism. Form its $(\E,\M)$-factorization, see \Cref{rem:autfact}:
\[
\xymatrix{
& \mu F_I \ar@{->}[dl]_{e_Q} \ar@{->>}[dr]^{e_{\Min{L}}} & \\
Q \ar[dr]_{m_Q} & & \Min{L} \ar@{>->}[dl]^{m_{\Min{L}}} \\
& \nu G_O  &  
}
\]
We claim that $\Min{L}$ is the minimal automaton for $L$. To this end, note first that $L_{\Min{L}} = L_Q=L$ by the ``if'' direction of \Cref{prop:lang_vs_behavior}. Thus, $\Min{L}$ accepts the language $L$. Moreover, $\Min{L}$ is reachable because $e_{\Min{L}}\in \E$.

To establish the universal property of $\Min{L}$, suppose that $R$ is a reachable automaton accepting $L$; we need to show that there is a unique homomorphism from $R$ into $\Min{L}$. From $L_{\Min{L}}=L_R=L$ it follows that $m_R\o e_R = m_{\Min{L}}\o e_{\Min{L}}$ by the ``only if'' direction of \Cref{prop:lang_vs_behavior}. Thus, diagonal fill-in yields a unique automata homomorphism $h\colon R\to \Min{L}$ making the diagram below commute:
\[
\xymatrix{
& \mu F_I \ar@{->>}[dl]_{e_R} \ar@{->>}[dr]^{e_{\Min{L}}} & \\
R \ar[dr]_{m_R} \ar@{-->}[rr]^h & & \Min{L} \ar@{>->}[dl]^{m_{\Min{L}}} \\
& \nu G_O  &  
}
\]
Given another automata homomorphism $h'\colon R\epito \Min{L}$, we have $h'\o e_R=e_{\Min{L}}$ by initiality of $\mu F_I$. Thus $h'\o e_R = h\o e_R$, which implies $h'=h$ because $e_R$ is epic. This proves the desired universal property of $\Min{L}$. 

The uniqueness of $\Min{L}$ up to isomorphism follows immediately from its universal property.
\end{proof}
The construction of $\Min{L}$ is the above proof also shows:
\begin{corollary}\label{cor:minreachsimp}
An automaton $Q$ is minimal if and only if it is both reachable ($e_Q\in \E)$ and simple ($m_Q\in \M$).
\end{corollary}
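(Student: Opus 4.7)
The plan is to prove both implications by exploiting the construction of $\Min{L}$ in the preceding theorem together with \Cref{prop:lang_vs_behavior}. Recall that $\Min{L}$ was obtained as the $(\E,\M)$-factorization of $m_Q\o e_Q$ for an arbitrary automaton $Q$ accepting~$L$, so by construction $e_{\Min{L}}\in\E$ and $m_{\Min{L}}\in\M$, i.e.\ $\Min{L}$ is itself both reachable and simple.

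For the forward direction, let $Q$ be minimal. Then $Q$ is reachable by definition, and its accepted language $L_Q$ admits the minimal automaton $\Min{L_Q}$. Since both $Q$ and $\Min{L_Q}$ are reachable automata accepting the same language, the universal property of the minimal automaton applied in both directions yields mutually inverse automata homomorphisms between $Q$ and $\Min{L_Q}$, so $Q\cong \Min{L_Q}$. As $\Min{L_Q}$ is simple ($m_{\Min{L_Q}}\in \M$), so is $Q$.

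For the backward direction, assume $Q$ is reachable and simple, so that $e_Q\in\E$ and $m_Q\in \M$. To show $Q$ is minimal, let $R$ be any reachable automaton with $L_R = L_Q$. By \Cref{prop:lang_vs_behavior} we have $m_R\o e_R = m_Q\o e_Q$, and $e_R\in\E$ together with $m_Q\in\M$ yields a unique diagonal $h\colon R\to Q$ in $\D$ satisfying $h\o e_R = e_Q$ and $m_Q\o h = m_R$. The key step is to verify that $h$ is an automata homomorphism. This follows from the uniqueness of the diagonal fill-in: since $e_R$ is an $F_I$-algebra homomorphism (viewing $\mu F_I$ as an $F_I$-algebra and using that both $e_R$ and $h\o e_R = e_Q$ are algebra homomorphisms), the two composites $\alpha_Q\o F_I h$ and $h\o \alpha_R$ agree after precomposing with $F_I e_R\in \E$ and postcomposing with $m_Q\in\M$ into $T$; by the uniqueness of the diagonal in the induced square they coincide. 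Dually, using that $m_Q$ is a $G_O$-coalgebra homomorphism, one shows that $h$ commutes with the coalgebra structures. Uniqueness of $h$ as an automata homomorphism is immediate since any such morphism $h'$ satisfies $h'\o e_R = e_Q$ by initiality of $\mu F_I$, and $e_R$ is epic. Thus $Q$ is minimal.

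The only nontrivial step is the verification that the diagonal fill-in $h$ preserves the $F$-algebra and $G$-coalgebra structure; this is precisely where \Cref{asm2} (that $F_I$ preserves $\E$-morphisms and $G_O$ preserves $\M$-morphisms, via the factorization lifting to (co)algebras) makes the uniqueness-of-fill-in argument go through. Everything else is a routine combination of \Cref{prop:lang_vs_behavior} with the factorization system.
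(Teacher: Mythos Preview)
Your overall strategy is sound and matches the paper's intent: the corollary is meant to fall out directly from the construction of $\Min{L}$ in the proof of \Cref{thm:minaut}. The forward direction is fine. For the backward direction, the cleanest route (and the one the paper has in mind) is simply to observe that if $e_Q\in\E$ and $m_Q\in\M$, then $m_Q\o e_Q$ is already its own $(\E,\M)$-factorization, so by the construction in the proof of \Cref{thm:minaut} one has $Q\cong\Min{L_Q}$; minimality follows. You instead re-derive the universal property directly, which also works but is more laborious.

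There are, however, some genuine confusions in your write-up that you should fix. First, the stray reference to ``into $T$'' makes no sense here; there is no object $T$ in play. Second, your appeal to \Cref{asm2} is backwards: that assumption says $F_I(\M)\subseteq\M$ and $G_O(\E)\subseteq\E$, not the other way round. What you actually need is \Cref{asm}\ref{A4}, i.e.\ $F(\E)\subseteq\E$, which ensures $Fe_R$ is epic; then $F_Ie_R=\id_I+Fe_R$ is epic (coproducts of epics are epic), and the equality $\alpha_Q\o F_Ih=h\o\alpha_R$ follows by cancelling $F_Ie_R$ on the right of
\[
(\alpha_Q\o F_Ih)\o F_Ie_R=\alpha_Q\o F_Ie_Q=e_Q\o\alpha=h\o e_R\o\alpha=(h\o\alpha_R)\o F_Ie_R.
\]
No postcomposition with $m_Q$ and no diagonal-fill-in uniqueness is needed for this step. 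Preservation of initial and final states follows similarly from $h\o e_R=e_Q$ and $e_R$ epic. The separate dual argument via $G_O$-coalgebra structure is redundant (cf.\ \Cref{rem:fghom}).
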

\section*{Details for \Cref{rem:hst}}
That $L_Q=L_{Q'}$ implies $h_{s,t}^Q=h_{s,t}^{Q'}$ follows immediately from the ``only if'' direction of \Cref{prop:lang_vs_behavior} and the definition of $h_{s,t}^{(\dash)}$.

\section*{Details for \Cref{def:hypothesis}}
For the diagonal fill-in $\delta_{s,t}$ to exist, we need to verify that for each pair $(s,t)$ as in \eqref{eq:st}, the square below is commutative:
\[
\xymatrix{
	FS \ar[d]_{l_{s,t}} \ar@{->>}[r]^{Fe_{s,t}} & FH_{s,t}  \ar[d]^{r_{s,t}^\#}  \\
	H_{s,t}\ar@{>->}[r]_{m_{s,t}} & T
}
\]
where
\[  l_{s,t} \;=\; (FS \xra{\inr} I+FS=F_I S \xra{e_{F_Is,t}} H_{F_I s,t} \xra{\cl_{s,t}^{-1}} H_{s,t} ) \]
and
\[  r_{s,t} \;=\; (H_{s,t} \xra{\cs_{s,t}^{-1}} H_{s,G_O t} \xra{m_{s,G_O t}} G_O T=O\times GT  \xra{\outr} GT).  \]
\begin{proof}
By definition of $\cl_{s,t}$ and $\cs_{s,t}$, the lower path of the square is equal to
\[ FS \xra{\inr} F_IS \xra{h_{F_Is,t}} T \]
and the upper path is equal to
\[ FS \xra{Fh_{s,G_O t}} FG_O T \xra{\outr^\#} T. \]
We therefore need to verify that the outside of the following diagram commutes:
\[
\xymatrix@C-2em{
FS \ar[rrrrrr]^{Fh_{s,G_Ot}} \ar[dr]^{Fs} \ar[dddd]_{\inr} & & & & & & FG_O T \ar[dddd]^{\outr^\#} \\
& FF_I^N0 \ar[r]^{Fj_N} \ar[dd]_\inr & F(\mu F_I) \ar[r]^{Fe_Q} \ar[d]_\inr & FQ \ar[d]_\inr \ar[r]^{Fm_Q} & F(\nu G_O) \ar[r]^{Fj_{K+1}'}  \ar[d]^{F\gamma} & FG_O^{K+1}1 \ar@{}[dd]|{(\ast)} \ar[ur]^{FG_Ot} & \\
& & F_I(\mu F_I) \ar[d]_\alpha \ar[r]_{F_Ie_Q} & F_I Q \ar[d]_{\alpha_Q}  & FG_O(\nu G_O) \ar[d]^{\outr^\#} & & \\
& F_I^{N+1}0 \ar[ur]^{F_I j_N} \ar[r]_{j_{N+1}} & \mu F_I \ar[r]_{e_Q} & Q \ar[r]_{m_Q}& \nu G_O \ar[r]_{j_K'} & G_O^K1 \ar[dr]_{t} & \\
F_I S \ar[ur]^{F_I s} \ar[rrrrrr]_{h_{F_Is,t}} & & & & & & T
}
\]
All parts except ($\ast$) clearly commute either by definition or by naturality of $\inr\colon F\to F_I$ and $\outr\colon G_O\to G$. For ($\ast$), note that the lower path is the adjoint transpose of
\[ \nu G_O \xra{\gamma} G_O(\nu G_O) \xra{\outr} G(\nu G_O) \xra{G j_K'} GG_O^K1 \xra{Gt} GT \]
 the upper path is the adjoint transpose of 
\[ \nu G_O \xra{j_{K+1}'} G_O^{K+1}1 \xra{G_O t} G_OT \xra{\outr} GT, \]
and the commutative diagram below shows that these two morphisms are equal:
\[
\xymatrix{
\nu G_O \ar[r]^{j_{K+1}'} \ar[d]_{\gamma} & G_O^{K+1}1 \ar[r]^{G_Ot}  \ar[dd]^{\outr}  & G_O T \ar[dd]^{\outr} \\
G_O(\nu G_O) \ar[ur]_{G_Oj_K'} \ar[d]_{\outr} & & \\
G(\nu G_O) \ar[r]_{Gj_K'} & GG_O^K1  \ar[r]_{Gt} & GT 
}
\]  
This concludes the proof.
\end{proof}

\section*{Proof of \Cref{thm:termination}}
The proof of the correctness and termination of the generalized $\mathsf{L}^*$ algorithm requires some preparation. First, recall that for any endofunctor $H$, an $H$-coalgebra $C\xra{\gamma} HC$ is \emph{recursive} \cite{taylor} if for each $H$-algebra $HA\xra{\alpha} A$ there exists a unique coalgebra-to-algebra homomorphism $h$ from $(C,\gamma)$ into $(A,\alpha)$; that is, $h$ makes the square below commute. 
\[ 
\xymatrix{
C \ar@{-->}[r]^h \ar[d]_{\gamma} & A \\
HC \ar[r]_{Hh} & HA \ar[u]_\alpha 
}
\]
Dually, an $H$-algebra $HA\xra{\alpha} A$ is \emph{corecursive} if for each $H$-coalgebra $C\xra{\gamma} HC$ there exists a unique coalgebra-to-algebra homomorphism $h$ from $(C,\gamma)$ into $(A,\alpha)$.
\begin{lemma}[see \cite{cuv06}, Prop. 6]\label{lem:hc_corecursive}
For each recursive coalgebra $C\xra{\gamma}HC$, the coalgebra $HC\xra{H\gamma}HHC$ is also recursive.
\end{lemma}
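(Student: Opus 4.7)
The plan is to show recursiveness of $(HC, H\gamma)$ by constructing the required coalgebra-to-algebra morphism directly from the one provided by recursiveness of $(C,\gamma)$, and then proving uniqueness via a round-trip through $\gamma$.

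Fix an $H$-algebra $(A,\alpha)$. By recursiveness of $(C,\gamma)$ there exists a unique $h\colon C\to A$ with $h = \alpha\o Hh\o \gamma$. The first step is to define the candidate morphism
\[ k \;:=\; (\, HC \xra{Hh} HA \xra{\alpha} A \,) \]
and check that it is a coalgebra-to-algebra morphism from $(HC,H\gamma)$ to $(A,\alpha)$, i.e.\ that $k = \alpha\o Hk\o H\gamma$. This follows from a short calculation using functoriality of $H$ and the defining equation for $h$:
\[ \alpha\o Hk\o H\gamma \;=\; \alpha\o H(\alpha\o Hh)\o H\gamma \;=\; \alpha\o H(\alpha\o Hh\o \gamma) \;=\; \alpha\o Hh \;=\; k. \]

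For uniqueness, suppose $k'\colon HC\to A$ also satisfies $k' = \alpha\o Hk'\o H\gamma$. The key trick is to precompose with $\gamma$: set $h' := k'\o \gamma\colon C\to A$. Then
\[ \alpha\o Hh'\o \gamma \;=\; \alpha\o H(k'\o \gamma)\o \gamma \;=\; (\alpha\o Hk'\o H\gamma)\o \gamma \;=\; k'\o \gamma \;=\; h', \]
so $h'$ is a coalgebra-to-algebra morphism from $(C,\gamma)$ to $(A,\alpha)$. By the recursiveness of $(C,\gamma)$ we conclude $h' = h$, i.e.\ $k'\o \gamma = h$. Substituting back,
\[ k' \;=\; \alpha\o Hk'\o H\gamma \;=\; \alpha\o H(k'\o \gamma) \;=\; \alpha\o Hh \;=\; k, \]
which completes the argument.

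I do not expect a real obstacle here: everything is pure diagram chasing with the universal property of recursive coalgebras, and no additional assumptions on $H$ or on the ambient category are needed. The only mildly subtle point is spotting that the natural candidate for $k$ is $\alpha\o Hh$ (rather than something involving $\gamma$), and dually that uniqueness is obtained by pulling $k'$ back to $C$ via $\gamma$ to fall inside the scope of the recursiveness of $(C,\gamma)$.
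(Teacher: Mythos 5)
Your proof is correct; the paper itself gives no argument for this lemma but simply cites Capretta--Uustalu--Vene, and your construction ($k=\alpha\o Hh$ for existence, precomposition with $\gamma$ for uniqueness) is exactly the standard argument from that reference. Nothing to add.
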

Barlocco et al. \cite{bkr19} model prefix-closed sets as recursive subcoalgebras of an initial algebra $\mu H$. In our present setting, recursivity comes for free:

\begin{proposition}\label{prop:subalg_recursive}
Every subcoalgebra of $(F_I^N0,F_I^N\initial)$, $N\geq 0$, is recursive.
\end{proposition}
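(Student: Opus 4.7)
The plan is to establish the proposition in two conceptual steps: first show that the ambient coalgebra $(F_I^N 0, F_I^N\initial)$ is itself recursive, then argue that recursivity descends to each of its subcoalgebras under our standing assumptions.

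For the first step, I proceed by induction on $N$. The base case $(0,\initial\colon 0 \to F_I 0)$ is trivially recursive: since $0$ is initial in $\D$, the unique morphism $0 \to A$ into any algebra $(A,\alpha)$ vacuously satisfies the coalgebra-to-algebra equation (both sides are morphisms out of $0$). The inductive step is precisely the content of \Cref{lem:hc_corecursive} applied to $H = F_I$, using that $(F_I^{N+1}0, F_I^{N+1}\initial)$ is the image of $(F_I^N 0, F_I^N\initial)$ under this lemma.

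For the second step, fix a subcoalgebra $s\colon (S,\sigma) \monoto (F_I^N 0, F_I^N\initial)$ and an algebra $\alpha\colon F_I A \to A$, and let $k\colon F_I^N 0 \to A$ be the unique coalgebra-to-algebra morphism supplied by step one. Existence of a coalgebra-to-algebra morphism from $(S,\sigma)$ is straightforward: the composite $h \defeq k \circ s$ satisfies
\[h = k\circ s = \alpha\circ F_I k\circ F_I^N\initial\circ s = \alpha\circ F_I k\circ F_I s\circ\sigma = \alpha\circ F_I h\circ\sigma,\]
invoking the recursivity equation for $k$ and the coalgebra-homomorphism property $F_I^N\initial\circ s = F_I s\circ\sigma$.

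Uniqueness is the technical heart of the argument, which I would handle by an inner induction along the initial-chain filtration of $S$. For $0 \le n \le N$, pull $s$ back along the $\M$-monomorphism $F_I^n 0 \monoto F_I^N 0$ (a composite of maps $F_I^k\initial$, each of which lies in $\M$ by iterated use of $F_I(\M) \seq \M$ from \Cref{asm2}). Using that $F_I$ additionally preserves pullbacks of $\M$-morphisms, the induced $s_n\colon S_n \monoto S$ carries a coalgebra structure $\sigma_n$ making $(S_n,\sigma_n)$ a subcoalgebra of $(F_I^n 0, F_I^n\initial)$, and the $s_n$ assemble into an increasing chain with $S_N = S$. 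Given two candidates $h_1,h_2\colon S \to A$ solving the recursivity equation, they must agree on $S_0$ (a subobject of $0$, hence isomorphic to $0$ in our categories of interest), and at each stage $n\to n+1$ the $F_I$-preserved pullback square converts the inductively known agreement on $S_n$, together with the one-step recursivity equations for $h_1$ and $h_2$, into agreement on $S_{n+1}$. The main obstacle lies precisely in making this propagation rigorous: the argument hinges on using the $F_I$-preserved pullback to factor $F_I h_i\circ\sigma\circ s_{n+1}$ through $F_I s_n$ and thereby reduce agreement on $S_{n+1}$ to the inductively given agreement on $S_n$.
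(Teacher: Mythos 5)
Your proposal is correct and follows essentially the same route as the paper: both arguments rest on the initial object having no proper subobjects and on the fact that pulling a subcoalgebra of $F_I^{N}0$ back along the chain map $F_I^{n}0\monoto F_I^{N}0$ yields a lower-level subcoalgebra through whose $F_I$-image the coalgebra structure factors (using $\M$-stability under pullback and the preservation of $\M$ and of pullbacks of $\M$-morphisms by $F_I$), which is exactly how the paper's proof reduces uniqueness at level $N$ to level $N-1$. The only difference is organizational — the paper runs a single outer induction on $N$ whose hypothesis is full recursiveness of the pulled-back subcoalgebra, whereas you separate existence (by composing with the unique map out of the recursive $F_I^N0$) from uniqueness (by an inner induction along the filtration $S_n$) — and the propagation step you flag as the main obstacle goes through exactly by the pullback mechanism you describe.
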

In particular, this result applies to the subcoalgebras $(S,\sigma)$ in the generalized $\mathsf{L}^*$ algorithm.

\begin{proof} Suppose that $s\colon (S,\sigma)\monoto (F_I^N0,F_I^N\initial)$ is a subcoalgebra for some $N\geq 0$. We prove that $(S,\sigma)$ is recursive by induction on $N$. 

\medskip\noindent
For $N=0$, note first that in any category $\D$ the initial object $0$ has no proper subobjects. (Indeed, suppose that $m\colon S\monoto 0$ is a subobject. Then the unique morphism $\initial_S\colon 0\to S$ satisfies $m\o \initial_S=\id_0$ by initiality of $0$, so $m$ is both monic and split epic, i.e.~an isomorphism.) Consequently, we have $(S,\sigma)=(0,\initial)$, and this coalgebra is trivially recursive by initiality of $0$.

\medskip\noindent For the induction step, let $N>0$, and let $(A,\alpha)$ be an arbitrary $F_I$-algebra. We need to prove that there is a unique coalgebra-to-algebra homomorphism $h\colon (S,\sigma)\to (A,\alpha)$. 

\begin{enumerate}\item \emph{Existence.} Since $(F_I^N0, F_I^N\initial)$ is a recursive coalgebra by \Cref{lem:hc_corecursive}, we have a unique coalgebra-to-algebra homomorphism $h'$ from $(F_I^N0, F_I^N\initial)$ to $(A,\alpha)$. Thus $h=h'\o s$ is a coalgebra-to-homomorphism from $(S,\sigma)$ to $(A,\alpha)$.

\item \emph{Uniqueness.} Suppose that $h\colon (S,\sigma)\to (A,\alpha)$ is a coalgebra-to-algebra homomorphism. Form the pullback of $s$ and $F_I^{N-1}\initial$:
\[  
\xymatrix@C+1em{
F_I^{N-1}0 \ar@{>->}[r]^{F_I^{N-1}\initial} & F_I^N0 \\
S' \ar@{>->}[u]^{s'} \ar@{>->}[r]_m & S \ar@{>->}[u]_s 
}
\]
Note that $F_I^{N-1}\initial\in \M$ because $\initial\colon 0\to F_I0=I$ lies in $\M$ by Assumption \ref{asm}\ref{A3} and $F_I$ preserves $\M$ by \Cref{asm2}. Since in any factorization system $(\E,\M)$ the class $\M$ is stable under pullbacks \cite[Prop. 14.15]{ahs}, it follows that $m,s'\in \M$. Since $F_I$ preserves pullbacks of $\M$-morphisms by \Cref{asm2}, the upper right square in the diagram below is a pullback, and the outer part commutes because $s$ is a coalgebra homomorphism. Thus, there is a unique morphism $n$ making the two triangles commute:
\[
\xymatrix{
& F_I^N0 \ar@{>->}[r]^{F_I^N\initial} & F_I^{N+1}0 \\
& F_IS' \ar@{>->}[u]^{F_Is'} \ar@{>->}[r]_{F_Im} & F_IS \ar@{>->}[u]_{F_Is} \\
S \ar@{>-->}[ur]^n \ar@/^2ex/@{>->}[uur]^s \ar@/_2ex/@{>->}[urr]_\sigma & &
}
\]
It follows that $m\colon (S',n\o m)\monoto (S,\sigma)$ and $s'\colon (S',n\o m) \monoto (F_I^{N-1}0,F_I^{N-1}\initial)$ are coalgebra homomorphisms, as shown by the two commutative diagrams below:
\[
\xymatrix{
S \ar[rr]^\sigma && F_IS \\
S' \ar@{>->}[r]_m \ar@{>->}[u]^m & S \ar@{>->}[r]_n \ar@{>->}[ur]^\sigma & F_IS' \ar@{>->}[u]_{F_Im}
}
\qquad
\xymatrix{
F_I^{N-1}0 \ar[rr]^{F_I^{N-1}\initial} && F_I^N0 \\
S' \ar@{>->}[r]_{m} \ar@{>->}[u]^{s'} & S \ar@{>->}[r]_n \ar@{>->}[ur]^s & F_IS' \ar@{>->}[u]_{F_Is'}
}
\]
\end{enumerate}
By induction we know that the coalgebra $(S',n\o m)$ is recursive, that is, we have a unique coalgebra-to-algebra homomorphism $g\colon (S',n\o m)\to (A,\alpha)$.
Since also $h\o m\colon (S',n\o m)\to (A,\alpha)$ is coalgebra-to-algebra homomorphism (being the composite of a coalgebra homomorphism with a coalgebra-to-algebra homomorphism), we get $h\o m = g$. Then the commutative diagram below shows that $h=\alpha\o F_Ig\o n$, i.e. $h$ is uniquely determined by $g$.
\begin{equation}\label{eq:fq}
\xymatrix{
& S \ar@/_2ex/[dl]_n \ar[r]^h \ar[d]_\sigma & A \\
F_IS' \ar@/_5ex/[rr]_{F_Ig} \ar[r]_{F_Im} & FS \ar[r]_{F_Ih}  & FA \ar[u]_\alpha 
}
\end{equation}
\end{proof}
Note that the proof of \Cref{prop:subalg_recursive} uses our assumption that $F_I$ preserves pullbacks im $\M$-morphisms. Since we do not require $G_O$ to preserve pushouts of $\E$-morphisms, the corresponding statement that every $G_O$-quotient algebra of $(G_O^K1, G_O^K\terminal)$ is corecursive does not hold. However, we have the following weaker result:

\begin{proposition}\label{prop:T_corecursive}
At each stage of Generalized $\mathsf{L}^*$, the algebra $(T,\tau)$ is corecursive.
\end{proposition}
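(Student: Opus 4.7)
The plan is to argue by induction on the number of Extend $t$ operations performed so far. Since neither Extend $s$ nor counterexample addition modifies the pair $(T,\tau)$, this covers every stage of the algorithm. For the base case (before any Extend $t$ step), one has $(T,\tau)=(G_O 1, G_O\terminal)$. I would first note that the trivial algebra $(1,\terminal)$ is vacuously corecursive, since any morphism into $1$ is uniquely determined, and then invoke the dual of \Cref{lem:hc_corecursive} to conclude that $(G_O 1, G_O\terminal)$ is corecursive.

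For the inductive step, I would assume $(T,\tau)$ is corecursive and show that the algebra $(T',\tau')$ with $\tau'=t_0\o G_O t_1$ produced by a single Extend $t$ step inherits this property. The crucial preliminary observation is that $t_1\colon (T',\tau')\to (T,\tau)$ is an algebra homomorphism, since $t_1\o\tau'=t_1\o t_0\o G_O t_1=\tau\o G_O t_1$. Given an arbitrary $G_O$-coalgebra $(C,\gamma)$, I would let $f\colon C\to T$ be the unique coalgebra-to-algebra morphism granted by corecursiveness of $(T,\tau)$ and define $h:=t_0\o G_O f\o\gamma$. Using $t_1\o h=\tau\o G_O f\o\gamma=f$, a routine calculation yields $\tau'\o G_O h\o\gamma=t_0\o G_O(t_1\o h)\o\gamma=t_0\o G_O f\o\gamma=h$, establishing existence. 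For uniqueness, any coalgebra-to-algebra morphism $h'\colon C\to T'$ composes with the algebra homomorphism $t_1$ to yield a coalgebra-to-algebra morphism $t_1\o h'\colon C\to T$, which must equal $f$ by corecursiveness of $(T,\tau)$; then $h'=\tau'\o G_O h'\o\gamma=t_0\o G_O f\o\gamma=h$.

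The main conceptual obstacle is recognising that a naive dualisation of \Cref{prop:subalg_recursive} is unavailable: that proof exploits the assumption that $F_I$ preserves pullbacks of $\M$-morphisms, whereas we explicitly do \emph{not} assume the dual property that $G_O$ preserves pushouts of $\E$-morphisms (cf.~\Cref{rem:counterex}). Consequently, one cannot conclude corecursivity for arbitrary quotient algebras of $(G_O^K 1, G_O^K\terminal)$. The argument must instead follow the concrete construction of Extend $t$, exploiting that every new algebra structure $\tau'$ admits an algebra-homomorphic retraction $t_1$ onto the previous $(T,\tau)$; this retraction is what transports the coalgebra-to-algebra morphism $f$ into $T$ back to a unique morphism into $T'$.
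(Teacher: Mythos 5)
Your proof is correct and follows essentially the same route as the paper: induction over applications of ``Extend $t$'', with the base case handled by the dual of \Cref{lem:hc_corecursive} and the inductive step resting on the observation that $t_1$ is a $G_O$-algebra homomorphism retracting $(T',\tau')$ onto $(T,\tau)$. The only cosmetic difference is that where the paper invokes the dual of \Cref{lem:hc_corecursive} a second time to get corecursivity of $(G_OT,G_O\tau)$ and then postcomposes with $t_0$, you inline that step by writing the witness directly as $t_0\o G_Of\o\gamma$ and verifying the required equations by hand.
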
 

\begin{proof}
Recall that $(T,\tau)$ is a quotient algebra $t\colon (G_O^K1, G_O^K\terminal)\epito (T,\tau)$ for some $K>0$. We need to show that (1) $(T,\tau)$ is corecursive after its initialization in Step 0 of the algorithm, and that (2) every application of ``Extend $t$'' preserves corecursivity.

\medskip \noindent \emph{Proof of (1).} Initially, we have $(T,\tau) = (G_O1,G_O\terminal)$. Since the algebra $(1,\terminal)$ is trivially corecursive by terminality of $1$, the dual of \Cref{lem:hc_corecursive} shows that $(T,\tau)$ is corecursive.

\medskip\noindent \emph{Proof of (2).} Suppose that $(T,\tau)$ is corecursive. Applying ``Extend $t$''  replaces $(T,\tau)$ by the algebra $(T',t_0\o G_Ot_1)$, where $\tau=t_1\o t_0$. Then $t_0\colon (G_OT,G_O\tau)\to (T',t_0\o G_Ot_1)$ and $t_1\colon (T',t_0\o G_Ot_1)\to (T,\tau)$ are $G_O$-algebra homomorphisms, as shown by the diagram below.
\[
\xymatrix{
T & T' \ar[l]_{t_1} & G_OT \ar[l]_{t_0} \\ 
& G_O T \ar[u]_{t_0} & \\
G_OT \ar[uu]^{\tau} & G_O T' \ar[l]^{G_O t_1} \ar[u]_{G_O t_1} & G_OG_O T \ar[uu]_{G_O\tau} \ar[l]^{G_O t_0}
}
\]
To show that $(T',t_0\o G_Ot_1)$ is corecursive, let $(C,\gamma)$ be a $G_O$-coalgebra. We need to prove that there is a unique coalgebra-to-algebra homomorphism $h$ from $(C,\gamma)$ into $(T',t_0\o G_O t_1)$.

\medskip \noindent\emph{Existence.} Since $(T,\tau)$ is corecursive, the algebra $(G_OT,G_O\tau)$ is also corecursive by the dual of \Cref{lem:hc_corecursive}. Thus, there exists a unique coalgebra-to-algebra homomorphism $h'$ from $(C,\gamma)$ into $(G_OT,G_O\tau)$. It follows that $h=t_0\o h'$ is a coalgebra-to-algebra homomorphism from $(C,\gamma)$ into $(T',t_0\o G_O t_1)$, being the composite of the coalgebra-to-algebra homomorphism $h'$ with the algebra homomorphism $t_0$.

\medskip \noindent\emph{Uniqueness.} Let $h$ be a coalgebra-to-algebra homomorphism from $(C,\gamma)$ into $(T',t_0\o G_O t_1)$, and denote by $g$ the unique coalgebra-to-algebra homomorphism from $(C,\gamma)$ into the corecursive algebra $(T,\tau)$. Since also $t_1\o h$ is such a homomorphism (being the composite of a coalgebra-to-algebra homomorphism with an algebra homomorphism), we have $t_1\o h = g$. From the commutative diagram below it then follows that $h=t_0\o G_O g\o \gamma$, which shows that $h$ is uniquely determined by $g$. 
\[
\xymatrix{
T' & C \ar[l]_h \ar[dd]^\gamma \\
G_O T \ar[u]^{t_0} & \\
G_O T' \ar[u]^{G_Ot_1} & G_O C \ar[l]^{G_O h} \ar[ul]_{G_O g} 
}
\]
\end{proof}

\begin{lemma}\label{lem:est_vs_ehst}
Let $(s,t)$ be closed and consistent, and suppose that the algebra $(T,\tau)$ is corecursive. Then the associated hypothesis automaton $H_{s,t}$ (see \Cref{def:hypothesis}) is minimal. Moreover, the two diagrams below commute:
\[ 
\xymatrix{
S \ar@{>->}[r]^s \ar[dr]_{e_{s,t}} & F_I^N0 \ar[r]^{j_N} & \mu F_I \ar[dl]^{e_{H_{s,t}}} \\
& H_{s,t}  &  
}
\quad\quad
\xymatrix{
 & H_{s,t} \ar[dl]_{m_{s,t}} \ar[dr]^{m_{H_{s,t}}} &  \\
T & \ar[l]^{t} G_O^K1 & \nu G_O \ar[l]^{j_K'}
}
 \]
\end{lemma}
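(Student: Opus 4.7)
The plan is to establish the two commuting triangles first and then derive minimality via Corollary~\ref{cor:minreachsimp}. Both triangles will be proved by exhibiting two coalgebra-to-algebra homomorphisms between the same endpoints and appealing to recursivity (respectively corecursivity) to force uniqueness.

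For the upper triangle, I would view $(S,\sigma)$ as a recursive $F_I$-coalgebra (Proposition~\ref{prop:subalg_recursive}) and the hypothesis $(H_{s,t},[i_{s,t},\delta_{s,t}])$ as an $F_I$-algebra, and exhibit both $e_{s,t}$ and the composite $e_{H_{s,t}}\circ j_N\circ s$ as coalgebra-to-algebra homomorphisms into it. For the first, unpacking the definitions of $i_{s,t}$ and $\delta_{s,t}$ collapses $[i_{s,t},\delta_{s,t}]\circ F_Ie_{s,t}$ to $\cl_{s,t}^{-1}\circ e_{F_Is,t}$ (using $[\inl,\inr]=\id$), so the required recursion identity reduces to $\cl_{s,t}\circ e_{s,t}=e_{F_Is,t}\circ\sigma$, which is precisely the defining triangle of the diagonal fill-in and holds because closedness makes $\cl_{s,t}$ invertible. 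For the second, $s$ is a coalgebra homomorphism by construction, the colimit identity $j_{N+1}\circ F_I^N\initial=j_N$ exhibits $j_N$ as a coalgebra-to-algebra homomorphism into $\mu F_I$ (with its Lambek structure), and $e_{H_{s,t}}$ is an algebra homomorphism, so the composite has the right shape. Recursivity forces the two to coincide.

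The lower triangle is formally dual. Taking $(T,\tau)$ as the (hypothesised) corecursive $G_O$-algebra and $H_{s,t}$ as the $G_O$-coalgebra with structure $\langle f_{s,t},\delta_{s,t}^@\rangle$, I would exhibit both $m_{s,t}$ and $t\circ j_K'\circ m_{H_{s,t}}$ as coalgebra-to-algebra homomorphisms into $(T,\tau)$. For the former, transposing the defining square of $\delta_{s,t}$ along $F\dashv G$ and cancelling the epic $Fe_{s,t}$ (which lies in $\E$) yields $Gm_{s,t}\circ\delta_{s,t}^@=r_{s,t}$; combined with the reading $m_{s,G_Ot}\circ\cs_{s,t}^{-1}=\langle f_{s,t},r_{s,t}\rangle$ of the consistency identity $m_{s,t}\circ\cs_{s,t}=\tau\circ m_{s,G_Ot}$, this rewrites $m_{s,t}$ as $\tau\circ G_Om_{s,t}\circ\langle f_{s,t},\delta_{s,t}^@\rangle$, the required recursion equation. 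For the latter, the limit-cone identity $G_O^K\terminal\circ j_{K+1}'=j_K'$ turns $j_K'$ into a coalgebra-to-algebra homomorphism, so postcomposing with the algebra homomorphism $t$ and precomposing with the coalgebra homomorphism $m_{H_{s,t}}$ produces one of the required shape. Corecursivity then forces equality.

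Finally, minimality follows from Corollary~\ref{cor:minreachsimp}: the two triangles supply the identities $e_{H_{s,t}}\circ(j_N\circ s)=e_{s,t}\in\E$ and $(t\circ j_K')\circ m_{H_{s,t}}=m_{s,t}\in\M$, from which the standard right-cancellation (respectively left-cancellation) property of a proper factorization system---obtained by factoring $e_{H_{s,t}}$ (resp.\ $m_{H_{s,t}}$) into $\E$ and $\M$ parts, substituting into the above identities, and invoking uniqueness of $(\E,\M)$-factorizations---yields $e_{H_{s,t}}\in\E$ (reachability) and $m_{H_{s,t}}\in\M$ (simplicity). The main obstacle will be the bookkeeping in the consistency half: the identification of the coalgebra structure $\gamma_{H_{s,t}}=\langle f_{s,t},\delta_{s,t}^@\rangle$ and the derivation $Gm_{s,t}\circ\delta_{s,t}^@=r_{s,t}$ via adjoint transposition concentrate all of the non-trivial interplay between $F$ and $G$ into a single step, whereas the algebra leg of the argument only needs the defining square of $\cl_{s,t}$.
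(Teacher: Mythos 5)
Your proposal is correct and follows essentially the same route as the paper's proof: both triangles are obtained by exhibiting the two legs as coalgebra-to-algebra homomorphisms into $(H_{s,t},[i_{s,t},\delta_{s,t}])$ resp.\ $(T,\tau)$ and invoking recursivity of $(S,\sigma)$ (Proposition~\ref{prop:subalg_recursive}) resp.\ the assumed corecursivity of $(T,\tau)$, and minimality then follows from the cancellation properties of the proper factorization system together with Corollary~\ref{cor:minreachsimp}. Your explicit unwinding of the dual half (the identity $Gm_{s,t}\circ\delta_{s,t}^@=r_{s,t}$ and the reading of the consistency fill-in) matches what the paper leaves as ``completely analogous.''
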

In particular, by \Cref{prop:T_corecursive}, this lemma applies to the pairs $(s,t)$ constructed in the generalized $\mathsf{L}^*$ algorithm.

\begin{proof}
\begin{enumerate}\item
We first prove that the left-hand diagram commutes. Consider the $F_I$-algebra structure on $H_{s,t}$ given by
\[ [i_{s,t},\delta_{s,t}]\colon F_IH_{s,t}\to H_{s,t}. \]
Then $e_{s,t}\colon (S,\sigma)\to (H_{s,t}, [i_{s,t},\delta_{s,t}])$ is a coalgebra-to-algebra homomorphism, as shown by the commutative diagram below:
\[
\xymatrix{
S \ar[rr]^{e_{s,t}} \ar[dd]_\sigma & & H_{s,t}  \\
& H_{F_is,t} \ar[ur]^{\cl_{s,t}^{-1}} & \\
F_I S \ar[ur]^{e_{F_Is,t}} \ar[rr]_{F_I e_{s,t}}  & & F_I H_{s,t} \ar[uu]_{[i_{s,t},\delta_{s,t}]} \\
}
\]
Indeed, the upper left part commutes by the definition of $\cl_{s,t}$, and the lower right part commutes by definition of $i_{s,t}$ and $\delta_{s,t}$ (consider the two coproduct components of $F_I S = I+FS$ separately).

Since also $e_{H_{s,t}}\o j_N\o s\colon (S,\sigma)\to (H_{s,t}, [i_{s,t},\delta_{s,t}])$ is a coalgebra-to-algebra homomorphism (being the composite of the $F_I$-coalgebra homomorphism $s$, the coalgebra-to-algebra homomorphism $j_N$ and the $F_I$-algebra homomorphism $e_{H_{s,t}}$) and the coalgebra $(S,\sigma)$ is recursive by \Cref{prop:subalg_recursive}, we conclude that $e_{s,t}= e_{H_{s,t}}\o j_N\o s$.

\item The proof that the right-hand diagram commutes is completely analogous: one views $H_{s,t}$ as a $G_O$-coalgebra 
\[ \langle f_{s,t}, \delta_{s,t}^@\rangle \colon H_{s,t}\to G_O H_{s,t}, \]
where $\delta_{s,t}^@\colon H_{s,t}\to GH_{s,t}$ denotes the adjoint transpose of $\delta_{s,t}\colon FH_{s,t}\to H_{s,t}$, and shows that both $m_{s,t}$ and $t\o j_K'\o m_{H_{s,t}}$ are coalgebra-to-algebra homomorphisms from $(H_{s,t}, \langle f_{s,t}, \delta_{s,t}^@\rangle)$ into the corecursive algebra $(T,\tau)$.

\item Since $e_{s,t}\in \E$ and $m_{s,t}\in \M$, it follows from the two commutative diagrams that $e_{H_{s,t}}\in \E$ and  $m_{H_{s,t}}\in \M$ (see \cite[Prop. 14.11]{ahs}). Thus, the automaton $H_{s,t}$ is minimal by \Cref{cor:minreachsimp}.\qedhere
\end{enumerate}
\end{proof}
An important invariant of the generalized $\mathsf{L}^*$ algorithm is that the subcoalgebra $s$ is pointed and that the quotient algebra $t$ is co-pointed:
\begin{definition}
An $F_I$-coalgebra $(R,\rho)$ is \emph{pointed} if there is a morphism $i_R$ such that the left-hand triangle below commutes. A $G_O$-algebra $(B,\beta)$ is \emph{co-pointed} if there is a morphism $f_R$ such that the right-hand triangle below commutes:
\[ 
\xymatrix{
I \ar[r]^{i_R} \ar[dr]_\inl & R \ar[d]^\rho \\
& F_I R
}
\qquad\qquad
 \xymatrix{
 	O  & B \ar[l]_{f_B}\\
 	& G_O B \ar[ul]^\outl \ar[u]_\beta
 }
 \]
\end{definition}
Note that if $(R,\rho)$ is a subcoalgebra of $(F_I^M0,F_I^M\initial)$, then $i_R$ is necessarily unique because $F_I^M\initial$ is monic by \Cref{asm}\ref{A3} and \Cref{asm2}. Dually for co-pointed quotient algebras of $(G_O^M0,G_O^M!)$.

\begin{lemma}\label{lem:pointedsubcoalg}
At each stage of the generalized $\mathsf{L}^*$ algorithm, the coalgebra $(S,\sigma)$ is pointed and the algebra $(T,\tau)$ is co-pointed.
\end{lemma}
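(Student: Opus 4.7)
The plan is induction on the number of execution steps of the algorithm. For the base case, after Step~0 I read $s=\id_I$ as the coproduct injection $\inl\colon I \monoto I+F0 = F_I^1 0$. The induced $F_I$-coalgebra structure on $S=I$ is then $\sigma=\inl\colon I\to I+FI = F_IS$, so $i_S:=\id_I$ witnesses pointedness; a dual reading of $t=\id_O$ gives $f_T:=\id_O$, establishing co-pointedness of $(T,\tau)$.

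For the inductive step, I would check that each kind of move in the main loop preserves both properties; moves that modify only $s$ leave $(T,\tau)$ untouched, and vice versa, so only the ``active'' side requires argument. The key tool throughout is the naturality of the coproduct injection, i.e.\ $F_If\circ \inl_X = \inl_Y$ for every $f\colon X\to Y$, and dually $\outl_Y\circ G_Of = \outl_X$.

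For ``Extend $s$'', the new carrier $S'$ sits between $S$ and $F_IS$ via $\M$-morphisms $s_0\colon S\monoto S'$ and $s_1\colon S'\monoto F_IS$ with $s_1\circ s_0=\sigma$, and the updated coalgebra structure is $\sigma' := F_I s_0\circ s_1$. I would set $i_{S'} := s_0\circ i_S$ and compute
\[
\sigma' \circ i_{S'} \;=\; F_I s_0 \circ s_1\circ s_0\circ i_S \;=\; F_I s_0 \circ \sigma \circ i_S \;=\; F_I s_0 \circ \inl \;=\; \inl,
\]
using the induction hypothesis for the third equality and naturality of $\inl$ for the fourth. The counterexample case is analogous: the canonical inclusion $u\colon (S,\sigma)\monoto (S\vee C,\sigma\vee \gamma)$ into the supremum is a coalgebra homomorphism (being an inclusion of subcoalgebras), so setting $i_{S\vee C}:=u\circ i_S$ gives $(\sigma\vee\gamma)\circ i_{S\vee C} = F_I u\circ \sigma\circ i_S = F_I u\circ \inl = \inl$. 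The ``Extend $t$'' case is handled by the dual argument, taking $f_{T'} := f_T\circ t_1$ with $\tau = t_1\circ t_0$.

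I do not anticipate a real obstacle: the whole argument reduces to functoriality of $F_I$, naturality of the coproduct injection, and the fact that embeddings of subcoalgebras are coalgebra homomorphisms (which is immediate from $F_I(\M)\subseteq \M$ in \Cref{asm2}). The only mild subtlety is to correctly interpret the initialization $s=\id_I$ as the injection $\inl\colon I \monoto F_I^10$ so that the induced $\sigma$ is indeed $\inl$, which follows from unpacking the first two links of the initial chain of $F_I$.
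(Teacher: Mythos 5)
Your proposal is correct and follows essentially the same route as the paper's proof: induction on the steps of the algorithm, with the base case witnessed by $i_S=\id_I$ and $f_T=\id_O$, the ``Extend $s$'' case by $i_{S'}=s_0\circ i_S$ using $\sigma=s_1\circ s_0$ and naturality of $\inl\colon I\to F_I(-)$, the counterexample case by composing with the subcoalgebra embedding, and the ``Extend $t$'' case by duality. The paper presents these computations as commutative diagrams rather than equational chains, but the content is identical.
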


\begin{proof}
We proceed by induction on the number of steps of the algorithm required to construct the pair $(s,t)$. Initially, after Step (0), $(S,\sigma)$ is equal to  $(I,F_I\initial)$, and thus pointed via $i_S=\id_I$. 
\[
\xymatrix{
I \ar[r]^\id \ar[dr]_{\inl} & I \ar[d]^{F_I\initial=\inl}  \\
& F_I I
}
\]
Dually, $(T,\tau)$ is co-pointed via $f_T=\id_O$.

Now suppose that at some stage of the algorithm, $(S,\sigma)$ is pointed and $(T,\tau)$ is co-pointed. We need to show that $(S,\sigma)$ remains pointed after executing ``Extend $s$'' or adding a counterexample to $s$, and that $(T,\tau)$ remains co-pointed after executing ``Extend $t$''.
\begin{enumerate}
\item \emph{Extend $s$}. When calling ``Extend $s$'', the coalgebra $(S,\sigma)$ is replaced by the coalgebra $(S',F_Is_0\o s_1)$. This coalgebra is pointed via $i_{S'}=s_0\o i_S$, as witnessed by the commutative diagram below:
\[
\xymatrix{
I \ar[r]^{i_S} \ar@/^2em/[rr]^{i_{S'}} \ar@/_1em/[drr]_\inl \ar@/_2em/[ddrr]_\inl & S \ar[r]^{s_0} \ar[dr]^{\sigma} & S' \ar[d]^{s_1} \\
& & F_IS \ar[d]^{F_Is_0} \\
& & F_IS'
}
\]
\item \emph{Extend $t$.} Symmetric to (1).
\item \emph{Adding a counterexample.} Let $(C,\gamma)$ be the counterexample added to $(S,\sigma)$, and denote by $i\colon (S,\sigma)\monoto (S\vee C, \sigma\vee \gamma)$ the embedding. Then the coalgebra $(S\vee C,\sigma\vee \gamma)$ is pointed via $i_{S\vee C}=i\o i_S$, as shown by the commutative diagram below:
\[
\xymatrix{
I \ar@/_4em/[drr]_\inl \ar[r]^{i_S} \ar@/^2em/[rr]^{i_{S\vee C}} \ar[dr]_\inl & S \ar[d]^\sigma \ar[r]^i & S\vee C \ar[d]^{\sigma\vee \gamma} \\
& F_IS \ar[r]_{F_I i} & F_I(S\vee C)
}
\]
\end{enumerate}

\end{proof}

%
%

\begin{lemma}\label{lem:pointed_initial_final}
Let $A$ be an automaton. For any pointed subcoalgebra $r\colon (R,\rho)\monoto (F_I^M0, F_I^M\initial)$, we have
\[  
i_A \;=\; (\, I\xra{i_R} R \xra{r} F_I^M 0 \xra{j_M} \mu F_I \xra{e_A} A \,)
\]
Dually, for any co-pointed quotient algebra $b\colon (G_O^M1,G_O^M!)\epito (B,\beta)$, we have 
\[ f_A \;=\; (\, A \xra{m_A} \nu G_O \xra{j_M'} G_O^M1 \xra{b} B \xra{f_B} O \,). \] 
\end{lemma}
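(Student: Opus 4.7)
The plan is to prove the pointed subcoalgebra claim directly by chasing the pointing morphism $i_R$ along $r$, $j_M$ and $e_A$; the dual statement then follows by reversing all arrows in the argument.

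First I would reduce $r \circ i_R : I \to F_I^M 0$ to the coproduct injection $\inl_I$. Since $r$ is a coalgebra homomorphism and $(R,\rho)$ is pointed, we have
\[ F_I^M \initial \circ r \circ i_R = F_I r \circ \rho \circ i_R = F_I r \circ \inl_I = \inl_I. \]
On the other hand, $F_I^M \initial = \id_I + F(F_I^{M-1}\initial)$ satisfies $F_I^M \initial \circ \inl_I = \inl_I$. By \Cref{asm}\ref{A3} and the condition $F_I(\M) \seq \M$ from \Cref{asm2}, the morphism $F_I^M\initial$ is in $\M$, hence monic, so $r \circ i_R = \inl_I$.

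Second, I would show by induction on $M$ that $j_M \circ \inl_I = j_1 \circ \inl_I$. The base case is trivial; for the induction step, the cocone identity $j_{M+1} \circ F_I^M \initial = j_M$ together with $F_I^M \initial \circ \inl_I = \inl_I$ yields $j_{M+1} \circ \inl_I = j_M \circ \inl_I$, which equals $j_1 \circ \inl_I$ by induction. Next, using $j_1 = \alpha \circ F_I j_0$ and $F_I j_0 \circ \inl_I = \inl_I$, we get $j_1 \circ \inl_I = \alpha \circ \inl_I$. Finally, since $e_A$ is an $F_I$-algebra homomorphism with codomain carrying the algebra structure $\alpha_A = [i_A,\delta_A]$ induced by the automaton $A$, we obtain
\[ e_A \circ j_1 \circ \inl_I = e_A \circ \alpha \circ \inl_I = \alpha_A \circ F_I e_A \circ \inl_I = \alpha_A \circ \inl_I = i_A. \]
Chaining these three facts gives $e_A \circ j_M \circ r \circ i_R = i_A$.

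The dual statement about co-pointed quotient algebras $b : (G_O^M 1, G_O^M !) \epito (B,\beta)$ follows by the same argument applied in $\D^{\op}$: one shows $f_B \circ b \circ j_M' \circ m_A = f_A$ by first proving $f_B \circ b = \outl_O$ (via $b$ co-pointed, $m_A$ a coalgebra homomorphism and epicness of $G_O^M !$, which holds since $!\colon O \to 1$ lies in $\E$ and $G_O$ preserves quotients), then $\outl_O \circ j_1' = \outl_O \circ j_K'$ by induction, and finally $\outl_O \circ j_1' \circ m_A = f_A$ using that $m_A$ is a $G_O$-coalgebra homomorphism into the final coalgebra with structure $\gamma$. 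The main obstacle is purely notational bookkeeping of the many identifications $F_I^M 0 = I + F(F_I^{M-1}0)$ and $G_O^M 1 = O \times G(G_O^{M-1}1)$ and the corresponding coproduct injections and product projections; there is no genuine conceptual difficulty once pointedness is rephrased as $r \circ i_R = \inl_I$.
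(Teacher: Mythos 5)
Your proof is correct and follows essentially the same route as the paper's: both arguments reduce the composite to $\alpha_A\circ\inl=i_A$ by chasing the coproduct injection $\inl\colon I\to F_I(\dash)$ through its naturality squares, the coalgebra-homomorphism square for $r$, the cocone identities $j_{M+1}\circ F_I^M\initial=j_M$ and $j_{M+1}=\alpha\circ F_Ij_M$, and the algebra-homomorphism square for $e_A$ (dually for the second statement). The only cosmetic difference is that you first isolate $r\circ i_R=\inl$ by cancelling the monomorphism $F_I^M\initial$ and then run an explicit induction along the initial chain, whereas the paper's single commutative diagram obtains both steps from one application of naturality of $\inl$ to $j_M\circ r$, needing neither the monicness of $F_I^M\initial$ nor an induction on $M$.
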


\begin{proof}
The first statement follows from the commutative diagram below, all of whose parts either commute trivially or by definition.
\[
\xymatrix@C-0.5em{
I \ar[rrrrr]^{i_A} \ar[ddd]_{i_R} \ar[ddr]^\inl \ar[ddrrr]^\inl \ar@/^2ex/[ddrrrr]^\inl \ar[drrrr]^\inl  & & & & & A \\
& & & & F_I A \ar[ur]_{\alpha_A} & \\
& F_I R \ar[rr]^{F_Ir} & & F_I^{M+1}0 \ar[r]^{F_I j_M} \ar[drr]_{j_{M+1}} & F_I(\mu F_I) \ar[u]_{F_I e_A} \ar[dr]^\alpha & \\
R \ar[rr]_{r} \ar[ur]_\rho & &  F_I^M0 \ar[rrr]_{j_M} \ar[ur]_{F_I^M\initial} & & & \mu F_I \ar[uuu]_{e_A}
}
\]
The proof of the second statement is dual.
\end{proof}

\begin{proposition}\label{prop:hypothesis_correct_for_st}
Let $(s,t)$ be a closed and consistent pair as in \eqref{eq:st}, and suppose that $t$ is co-pointed. Then the hypothesis $H=H_{s,t}$ and the unknown automaton $Q$ have the same observation tables for $(s,t)$:
\[ h_{s,t}^H = h_{s,t}^{Q}. \]
In particular, $H$ and $Q$ agree on inputs from $S$, that is,
\[ L_H\o j_N\o s = L_Q\o j_N\o s. \]
\end{proposition}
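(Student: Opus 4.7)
The plan is to split the proof into two parts: first establish $h^H_{s,t} = h^Q_{s,t}$, and then derive the language equality $L_H \o j_N \o s = L_Q \o j_N \o s$ from it using co-pointedness.

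For the main equality, the idea is to identify both $h^H_{s,t}$ and $h^Q_{s,t}$ with the $(\E,\M)$-factorization $m^Q_{s,t} \o e^Q_{s,t}$ of $h^Q_{s,t}$. \Cref{lem:est_vs_ehst} supplies the two key identities: $e^Q_{s,t} = e_H \o j_N \o s$ (which rests on the recursivity of $(S,\sigma)$ provided by \Cref{prop:subalg_recursive}) and $m^Q_{s,t} = t \o j_K' \o m_H$ (which rests on the corecursivity of $(T,\tau)$). Substituting both into the definition of $h^H_{s,t}$ then yields
\[h^H_{s,t} \;=\; t \o j_K' \o m_H \o e_H \o j_N \o s \;=\; m^Q_{s,t} \o e^Q_{s,t} \;=\; h^Q_{s,t}.\]

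For the ``in particular'' clause, I will appeal to \Cref{lem:pointed_initial_final}: since $t$ is co-pointed with some witness $f_T \colon T \to O$, every automaton $A$ satisfies $f_A = f_T \o t \o j_K' \o m_A$. Composing with $e_A \o j_N \o s$ gives
\[L_A \o j_N \o s \;=\; f_A \o e_A \o j_N \o s \;=\; f_T \o h^A_{s,t}\]
for both $A = H$ and $A = Q$, and the first part settles the remaining equality.

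The principal subtlety is that \Cref{lem:est_vs_ehst} requires corecursivity of $(T,\tau)$, which is not listed among the proposition's hypotheses. This is presumably intentional: the proposition is meant to be invoked inside the main loop of Generalized $\mathsf{L}^*$, where corecursivity of $(T,\tau)$ is automatically maintained by \Cref{prop:T_corecursive}, while co-pointedness (the explicitly stated hypothesis) is exactly what is needed to run \Cref{lem:pointed_initial_final} in the second step. I would therefore rely on this standing guarantee to justify the use of \Cref{lem:est_vs_ehst}.
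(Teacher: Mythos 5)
Your proof is correct and follows essentially the same route as the paper's: both reduce the table equality to the two identities $e^Q_{s,t}=e_H\o j_N\o s$ and $m^Q_{s,t}=t\o j_K'\o m_H$ from \Cref{lem:est_vs_ehst}, and both obtain the language equality by postcomposing with $f_T$ via \Cref{lem:pointed_initial_final}. Your remark about the unstated corecursivity hypothesis is apt — the paper handles it exactly as you propose, by noting (after \Cref{lem:est_vs_ehst}) that \Cref{prop:T_corecursive} guarantees it for all pairs arising in the algorithm.
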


\begin{proof}
\begin{enumerate}
\item For the first equality, consider the following diagram:
\[ 
\xymatrix{
F_I^N0 \ar[r]^{j_N} & \mu F_I \ar[dd]^{e_Q} \ar[ddl]^{e_{H_{s,t}}} \\
S \ar[u]^s \ar@{->>}[d]_{e_{s,t}} & \\
H_{s,t} \ar@{>->}[d]_{m_{s,t}} \ar[ddr]^{m_{H_{s,t}}} & Q  \ar[dd]^{m_Q} \\
T  & \\
G_O^K1 \ar[u]^{t} & \nu G_O \ar[l]^{j_K'}
}
 \]
The outward commutes by definition of $h_{s,t}$ and since $h_{s,t}=m_{s,t}\o e_{s,t}$. The upper left and lower left parts commute by \Cref{lem:est_vs_ehst}. It follows that the remaining part commutes when precomposed with $j_N\o s$ and postcomposed with $t\o j_K'$, which gives $h_{s,t}^H = h_{s,t}^Q$.
\item The second equality follows by postcomposing both sides of the equality  $h_{s,t}^H = h_{s,t}^{Q}$ with $f_T\colon T\to O$ and applying \Cref{lem:pointed_initial_final}.\qedhere
\end{enumerate}
\end{proof}
The key to the termination of the learning algorithm lies is in the following result.
\begin{lemma}\label{lem:not_closed_consistent}
Let $(s,t)$ be a closed and consistent pair as in \eqref{eq:st}, and suppose that $t$ is co-pointed. Then for every counterexample $c$ for $H_{s,t}$, the pair $(s\vee c,t)$ is not closed or not consistent.
\end{lemma}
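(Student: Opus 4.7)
The plan is to proceed by contradiction: assume $(s\vee c, t)$ is both closed and consistent. By \Cref{rem:subcoalgext} we may take $M = N$, so that $s$ and $c$ both embed as subcoalgebras of $(F_I^N 0, F_I^N \initial)$, with canonical inclusions $\iota\colon (S,\sigma) \monoto (S\vee C, \sigma\vee\gamma)$ and $\kappa\colon (C,\gamma) \monoto (S\vee C, \sigma\vee\gamma)$ satisfying $(s\vee c)\o \iota = s$ and $(s\vee c)\o \kappa = c$. Since $t$ remains co-pointed (\Cref{lem:pointedsubcoalg}), applying \Cref{prop:hypothesis_correct_for_st} to $(s\vee c, t)$ gives $L_{H_{s\vee c, t}} \o j_N \o (s\vee c) = L_Q \o j_N \o (s\vee c)$, and precomposing with $\kappa$ yields $L_{H_{s\vee c, t}} \o j_N \o c = L_Q \o j_N \o c$. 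Hence, if we can establish $L_{H_{s,t}} = L_{H_{s\vee c, t}}$, then $L_{H_{s,t}} \o j_N \o c = L_Q \o j_N \o c$, contradicting the counterexample property of $c$.

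To prove $L_{H_{s,t}} = L_{H_{s\vee c, t}}$, I will construct a canonical automata homomorphism $\psi\colon H_{s,t} \to H_{s\vee c, t}$ and invoke \Cref{prop:hom_pres_language}. The underlying morphism is obtained via diagonal fill-in from the identity $h_{s,t} = h_{s\vee c, t} \o \iota$, and satisfies $\psi \o e_{s,t} = e_{s\vee c, t} \o \iota$ and $m_{s\vee c, t} \o \psi = m_{s,t}$.

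Checking that $\psi$ is an automata homomorphism amounts to verifying three conditions. \emph{Final states}: $f_{s,t} = f_T \o m_{s,t}$ and the analogous identity for $(s\vee c, t)$ both follow from \Cref{lem:est_vs_ehst} together with \Cref{lem:pointed_initial_final}, so $f_{s\vee c, t} \o \psi = f_T \o m_{s\vee c, t} \o \psi = f_T \o m_{s,t} = f_{s,t}$. \emph{Initial states}: dually, the pointedness morphism $i_R$ of a subcoalgebra $r\colon R \monoto F_I^N 0$ is uniquely determined (because $\inl\colon I \to F_I R$ and $F_I^N \initial$ are monic under \Cref{asm} and \Cref{asm2}), and this uniqueness forces $i_{S\vee C} = \iota \o i_S$, whence $\psi \o i_{s,t} = i_{s\vee c, t}$ via \Cref{lem:pointed_initial_final}.

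\emph{Transitions}: the main obstacle. Since $F$ preserves $\E$, $F e_{s,t}$ is epic, so it suffices to verify $\psi \o \delta_{s,t} \o F e_{s,t} = \delta_{s\vee c, t} \o F\psi \o F e_{s,t}$; unfolding the defining square of \Cref{def:hypothesis} reduces this to $\psi \o l_{s,t} = l_{s\vee c, t} \o F\iota$. I will postcompose this last identity with the monomorphism $m_{s\vee c, t}$ and use the relations $m_{s,t} \o \cl_{s,t}^{-1} = m_{F_I s, t}$ and $m_{s\vee c, t} \o \cl_{s\vee c, t}^{-1} = m_{F_I(s\vee c), t}$ (which follow from the defining commutative diagrams of $\cl$), reducing the question to the identity $h_{F_I s, t} \o \inr = h_{F_I(s\vee c), t} \o F_I\iota \o \inr$. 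This in turn is immediate from functoriality, naturality of $\inr$, and $(s\vee c) \o \iota = s$. With $\psi$ confirmed to be an automata homomorphism, \Cref{prop:hom_pres_language} delivers $L_{H_{s,t}} = L_{H_{s\vee c, t}}$, closing the contradiction.
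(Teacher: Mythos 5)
Your proof is correct and follows essentially the same route as the paper: both construct the comparison morphism $H_{s,t}\to H_{s\vee c,t}$ by diagonal fill-in from $h_{s\vee c,t}\circ\iota = h_{s,t}$, verify that it is an automata homomorphism (with the transition case handled identically, by postcomposing with the mono $m_{s\vee c,t}$ and precomposing with the epi $Fe_{s,t}$), and then combine \Cref{prop:hom_pres_language} with \Cref{prop:hypothesis_correct_for_st} to contradict the counterexample property of $c$. The only substantive difference is that your initial- and final-state verifications route through \Cref{lem:est_vs_ehst} and \Cref{lem:pointed_initial_final}, which quietly requires $s$ to be pointed and $(T,\tau)$ to be corecursive --- invariants maintained by the algorithm but not hypotheses of the lemma as stated --- whereas the paper checks these two conditions by direct diagram chases from the definitions of $i_{s,t}$ and $f_{s,t}$ and so needs no such extra assumptions.
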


\begin{proof}
Suppose for the contrary that the pair $(s\vee c, t)$ is closed and consistent. Denote by 
\[i\colon S\monoto S\vee C\quad\text{and}\quad i'\colon C\to S\vee C\]
 the two embeddings, satisfying $(s\vee c)\o i = s$ and $(s\vee c)\o i'= c$. Via diagonal fill-in we obtain a unique $j\colon H_{s,t}\monoto H_{s\vee c, t}$ such that the following diagram commutes:
\[
\xymatrix{
S \ar@{>->}[r]^i \ar@{->>}[d]_{e_{s,t}} & S\vee C \ar@{->>}[d]^{e_{s\vee c, t}} \\
H_{s,t} \ar@{>->}[r]^j \ar@{>->}[d]_{m_{s,t}} & H_{s\vee c, t} \ar@{>->}[dl]^{m_{s\vee c, t}} \\
T &
}
\]
We shall show below that $j$ is an automata homomorphism. In particular, $H_{s,t}$ and $H_{s\vee c,t}$ accept the same language by \Cref{prop:hom_pres_language}. Letting $H=H_{s\vee c, t}$, we compute
\begin{flalign*}
& L_{H_{s,t}}\o j_N\o c &  \\
&= L_{H}\o j_N \o c & \text{since $L_{H_{s,t}}=L_H$} \\
&= f_H\o e_{H} \o j_N \o c & \text{def. $L_H$} \\
&= f_T\o t\o j_K' \o m_H \o e_H \o j_N \o c & \text{by \Cref{lem:pointed_initial_final}} \\
&= f_T\o t\o j_K' \o m_H \o e_H \o j_N \o (s\vee c) \o i' & \text{def. $i'$} \\
&= f_T \o h_{s\vee c,t}^H\o i' & \text{def. $h_{s\vee c,t}^H$} \\
&= f_T\o h_{s\vee c,t}^Q\o i' & \text{by Prop.~\ref{prop:hypothesis_correct_for_st}} \\
&=~~~ \cdots & \\
&= L_Q\o j_N\o c & \text{compute backwards}
\end{flalign*}  
This contradicts the fact that $c$ is a counterexample for $H_{s,t}$. 

\medskip \noindent
To conclude the proof, it only remains to verify our above claim that $j$ is an automata homomorphism.
\begin{enumerate}
\item \emph{$j$ preserves transitions.} Observe first that we have
\begin{equation}\label{eq:s_vs_sc} m_{s,t}\o l_{s,t} = m_{s\vee c, t}\o l_{s\vee c,t}\o Fi, \end{equation}
as shown by the commutative diagram below:
\[
\xymatrix{
FS \ar[dddd]_{Fi} \ar[rrr]^{l_{s,t}} \ar[dr]^\inr & & & H_{s,t} \ar[dd]^{m_{s,t}} \\
& F_I S \ar[r]^{e_{F_Is,t}} \ar[drr]_{h_{F_Is,t}}  \ar[dd]_{F_Ii} & H_{F_Is,t} \ar[dr]^{m_{F_Is,t}} \ar[ur]^{\cl_{s,t}^{-1}} & \\
& & & T \\
& F_I(S\vee C) \ar[urr]^{h_{F_I(s\vee c),t}} \ar[r]_{e_{F_I(s\vee c),t}} & H_{F_I(s\vee c),t} \ar[ur]_{~~m_{F_I(s\vee c),t}} \ar[dr]_{\cl_{s\vee c,t}^{-1}} & \\
F(S\vee C) \ar[ur]_\inr \ar[rrr]_{l_{s\vee c,t}}  & & & H_{s\vee c, t} \ar[uu]_{m_{s\vee c,t}}
} 
\]
Here the left-hand part commutes by naturality of $\inr$, the central triangle commutes by definition of $h_{\dash,t}$ (using that $(s\vee c)\o i = s$), and all remaining parts commute by definition.

Now, consider the following diagram:
\[
\xymatrix@R-1em{
FS \ar[rrr]^{l_{s,t}} \ar[dddd]_{Fi} \ar[dr]^{Fe_{s,t}} & & & H_{s,t} \ar[dd]^{m_{s,t}} \\
& FH_{s,t} \ar[r]^{\delta_{s,t}} \ar[dd]_{Fj} & H_{s,t} \ar[dd]^j \ar@{=}[ur] & \\
& & & T \\
& FH_{s\vee c,t} \ar[r]_{\delta_{s\vee c,t}} & H_{s\vee c,t}  \ar@{=}[dr] & \\
F(S\vee C) \ar[ur]_{Fe_{s\vee c,t}} \ar[rrr]_{l_{s\vee c,t}}  & & & H_{s\vee c,t} \ar[uu]_{m_{s\vee c,t}} 
}
\]
The outward commutes by \eqref{eq:s_vs_sc}, and all parts except the central square commute by definition. It follows that also the central square commutes, because it commutes when precomposed with the epimorphism $Fe_{s,t}$ and postcomposed with the monomorphism $m_{s\vee c,t}$. Thus, $j$ preserves transitions. 
\item \emph{$j$ preserves the initial state.} Observe first that we have 
\begin{equation}\label{eq:s_vs_sc_init} 
m_{s,t}\o i_{s,t} = m_{s\vee c,t}\o i_{s\vee c,t},
\end{equation}
as shown by the commutative diagram below:
\[
\xymatrix{
I \ar@{=}[dddd] \ar[rrr]^{i_{s,t}} \ar[dr]^\inl & & & H_{s,t} \ar[dd]^{m_{s,t}} \\
& F_I S \ar[r]^{e_{F_Is,t}} \ar[drr]_{h_{F_Is,t}}  \ar[dd]_{F_Ii} & H_{F_Is,t} \ar[dr]^{m_{F_Is,t}} \ar[ur]^{\cl_{s,t}^{-1}} & \\
& & & T \\
& F_I(S\vee C) \ar[urr]^{h_{F_I(s\vee c),t}} \ar[r]_{e_{F_I(s\vee c),t}} & H_{F_I(s\vee c),t} \ar[ur]_{~~m_{F_I(s\vee c),t}} \ar[dr]_{\cl_{s\vee c,t}^{-1}} & \\
I \ar[ur]_\inl \ar[rrr]_{i_{s\vee c,t}}  & & & H_{s\vee c, t} \ar[uu]_{m_{s\vee c,t}}
} 
\]
Now consider the following diagram:
\[
\xymatrix@C+2em{
I \ar@{=}[d] \ar[r]^{i_{s,t}} & H_{s,t} \ar@{>->}[dr]^{m_{s,t}} \ar[d]^j & \\
I \ar[r]_{i_{s\vee c,t}} & H_{s\vee c,t} \ar@{>->}[r]_{m_{s\vee c,t}} & T
}
\]
The outward commutes by \eqref{eq:s_vs_sc_init}, and the right-hand triangle by the definition of $j$. Thus the left-hand part commutes, since it does when postcomposed with the monomorphism $m_{s\vee c,t}$. This proves that $j$ preserves the initial state.
\item \emph{$j$ preserves final states.} The proof is analogous to (2).\qedhere
\end{enumerate}
\end{proof}
With the above results at hand, we are ready to prove \Cref{thm:termination}:

 \begin{proof}[Proof of \Cref{thm:termination}]
The algorithm only terminates if a hypothesis $H_{s,t}$ constructed in Step (2) is correct (i.e.
it accepts the same language as the unknown automaton $Q$), in which case $H_{s,t}$ is returned. This automaton is minimal by \Cref{lem:est_vs_ehst}, so $H_{s,t}=\Min{L_Q}$.

Thus, we only need to verify that the algorithm eventually finds a correct hypothesis. For any $F_I$-subcoalgebra $r\colon (R,\rho)\monoto (F_I^M0,F_I^M\initial)$, let $e_r$ and $m_r$ denote the $(\E,\M)$-factorizations of $e_Q\o j_M\o r$. 
\[  
\xymatrix@C-1em{
	R \ar[r]^r \ar@{->>}[drr]_{e_r} & F_I^M0 \ar[rr]^{j_M}  && \mu F_I \ar[r]^{e_Q} & Q\\
	&& Q_r \ar@{>->}[urr]_{m_r} &&
}
\]
Similarly, for any $G_O$-quotient algebra $b\colon (G_O^M1,G_O^M!)\epito (B,\beta)$, let $\ol e_b$ and $\ol m_b$ be the $(\E,\M)$-factorization of $b\o j_M'\o m_Q$.
\[
\xymatrix@C-1em{
Q \ar[r]^{m_Q} \ar@{->>}[drr]_{\ol e_b} & \nu G_O \ar[rr]^{j_M'}  && G_O^M1 \ar[r]^{b} & B\\
 && \ol Q_b \ar@{>->}[urr]_{\ol m_b} &&
}
\]
Let $(s,t)$ and $(s',t')$ be two consecutive pairs appearing in an execution of the algorithm. We show below that the following statements hold:
\begin{enumerate}
\item If $(s',t')$ emerges from $(s,t)$ via ``Extend $s$'', then $m_s<m_{s'}$ and $\ol e_{t}=\ol e_{t'}$.
\item If $(s',t')$ emerges from $(s,t)$ via ``Extend $t$'', then $m_s=m_{s'}$ and  $\ol{e}_t< \ol{e}_{t'}$.
\item If $(s',t')$ emerges from $(s,t)$ by adding a counterexample, then $m_{s}\leq m_{s'}$ and $\ol e_t=\ol e_{t'}$
\end{enumerate} 
Letting $(s^0,t^0), (s^1,t^1), (s^2,t^2),\ldots$ denote the sequence of pairs constructed in an execution of the algorithm, it follows that we obtain two ascending chains
\[m_{s^0}\leq m_{s^1} \leq m_{s^2} \leq \cdots  \quad\text{and}\quad \ol e_{s^0}\leq \ol e_{s^1}\leq \ol e_{s^2} \leq \cdots.\]
of subobjects and quotients of $Q$, respectively. By our assumption that $Q$ is Noetherian, both chains must stabilize, i.e. all but finitely many of the relations $\leq$ are equalities. By (1) and (2), this implies that ``Extend $s$'' and ``Extend $t$'' are called only finitely often. Moreover, whenever a counterexample is added to $s$, this must be immediately followed by a call of ``Extend $s$'' oder ``Extend $t$'' by \Cref{lem:not_closed_consistent}. Thus also Step (2b) is executed only finitely often. This proves that the algorithm necessarily terminates.

\medskip\noindent It remains to establish the above statements (1)--(3). 

\begin{enumerate}
\item An application of ``Extend $s$'' to $(s,t)$ yields the new pair $(s',t')$ with
\[ s'=F_Is \o s_1 \quad\text{and}\quad t'=t.  \]
Thus, we trivially have $\ol e_t = \ol e_{t'}$. Moreover, $m_{s}\leq m_{s'}$ holds by the right-hand triangle in the diagram below, where the morphism $n_{s,s'}$ is obtained via diagonal fill-in:
\[
\xymatrix{
S \ar[d]_{s_0} \ar@{->>}[r]^{e_s} & Q_s \ar@{>->}[r]^{m_s} \ar@{>-->}[d]_{n_{s,s'}}  & Q \\
S' \ar@{->>}[r]_{e_{s'}}  & Q_{s'} \ar@{>->}[ur]_{m_{s'}} &  
}
\]
To prove  $m_s<m_{s'}$, we need to show that $n_{s,s'}$ is not an isomorphism. To this end, consider the unique morphisms $d_s$ and $d_{s'}$ (defined via diagonal fill-in) such that the diagrams below commute:
\[
\xymatrix@C-1em@R-1em{
S \ar[r]^s \ar@{->>}[dd]_{e_{s,t}} \ar@{->>}[dr]^{e_s} & F_I^N0 \ar[r]^{j_N} & \mu F_I \ar[dd]^{e_Q} \\
 & Q_s \ar@{>->}[dr]^{m_s} \ar@{->>}[dl]_{d_s} &  \\
H_{s,t} \ar@{>->}[dd]_{m_{s,t}} & & Q \ar[dl]^{\ol e_t} \ar[dd]^{m_Q} \\
 & \ol Q_t \ar@{>->}[dl]^{\ol m_t} & \\ 
T & G_O^K1  \ar[l]^{t} & \nu G_O \ar[l]^{j_K'}
}
\quad \quad
\xymatrix@C-1em@R-1em{
S' \ar[r]^{s'} \ar@{->>}[dd]_{e_{s',t}} \ar@{->>}[dr]^{e_{s'}} & F_I^{N+1}0 \ar[r]^{j_{N+1}} & \mu F_I \ar[dd]^{e_Q} \\
 & Q_{s'} \ar@{>->}[dr]^{m_{s'}} \ar@{->>}[dl]_{d_{s'}} &  \\
H_{s',t} \ar@{>->}[dd]_{m_{s',t}} & & Q \ar[dl]^{\ol e_t} \ar[dd]^{m_Q} \\
 & \ol Q_t \ar@{>->}[dl]^{\ol m_t} & \\ 
T & G_O^K1  \ar[l]^{t} & \nu G_O \ar[l]^{j_K'}
}
\]
Moreover, observe that we have the following commutative diagram:
\[
\xymatrix{
& & H_{s',t} \ar@{>->}[drr]^{m_{s',t}} & & \\
S' \ar[rrrr]^{h_{s',t}} \ar[urr]^{e_{s',t}} \ar[dr]_{s_1}  & & & & T \\
& F_I S \ar[urrr]_{h_{F_Is,t}}  \ar[rr]_{e_{F_Is,t}}  &  &H_{F_Is,t} \ar@{>->}[ur]_{m_{F_Is,t}} &  
}
\]
By the choice of $s_1$ in ``Extend $s$'', we have $e_{F_Is,t}\o s_1\in \E$. The uniqueness of $(\E,\M)$-factorizations thus implies that, up to isomorphism,
\[ H_{s',t} = H_{F_Is,t},\quad e_{s',t}=e_{F_Is,t}\o s_1,\quad m_{s',t}=m_{F_Is,t}.  \]
We now claim that the following diagram commutes:
\begin{equation}\label{eq:nss}
\xymatrix{
Q_s \ar@{>->}[rrrr]^{n_{s,s'}} \ar@{->>}[ddd]_{d_s} & & & & Q_{s'} \ar@{->>}[ddd]^{d_{s'}} \\
& S \ar[ddl]_{e_{s,t}} \ar[dr]_{h_{s,t}} \ar@{->>}[ul]_{e_s} \ar[rr]^{s_0} &    & S' \ar[ddr]^{e_{s',t}} \ar[ur]^{e_{s'}} \ar[dl]^{h_{s',t}} & \\
& & T & & \\
H_{s,t} \ar[urr]_{m_{s,t}} \ar@{>->}[rrrr]_{\cl_{s,t}} & & & & H_{F_Is,t}=H_{s',t}  \ar@{>->}[ull]^{m_{s',t}} 
}
\end{equation}
All inner parts commute by definition. Thus also the outward commutes, since it does when precomposed with the epimorphism $e_s$ and postcomposed with the monomorphism $m_{s',t}$. 

We are ready to prove our claim that $n_{s,s'}$ is not an isomorphism. Suppose for the contrary that it is. Since $d_{s'}\in \E$, the diagram \eqref{eq:nss} yields $\cl_{s,t}\o d_s = d_{s'}\o n_{s,s'}\in \E$. Thus $\cl_{s,t}\in \E$. One the other hand, by definition of $\cl_{s,t}$ we have $m_{F_Is,t}\o \cl_{s,t} = m_{s,t}\in \M$ and thus $\cl_{s,t}\in \M$. But from $\cl_{s,t}\in \E\cap \M$ it follows that that $\cl_{s,t}$ is an isomorphism \cite[Prop. 14.6]{ahs}, contradicting the fact that the input pair $(s,t)$ of ``Extend $s$'' is not closed.
\item The proof is symmetric to (1).
\item Adding a counterexample $c$ means to to replace the pair $(s,t)$ by the pair $(s',t')$ with
\[ s'=s\vee c \quad\text{and}\quad t'=t.\]
Thus $\ol e_t = \ol e_{t'}$. Letting $i\colon (S,\sigma)\monoto (S\vee C, \sigma\vee \gamma)=(S',\sigma')$ denote the embedding with $s=(s\vee c)\o i$, diagonal fill-in yields a morphism $n_{s,s'}$ making the diagram below commute: 
\[
\xymatrix{
S \ar[d]_{i} \ar@{->>}[r]^{e_s} & Q_s \ar@{>->}[r]^{m_s} \ar@{>-->}[d]_{n_{s,s'}}  & Q \\
S' \ar@{->>}[r]_{e_{s'}}  & Q_{s'} \ar@{>->}[ur]_{m_{s'}} &  
}
\]
This proves that $m_s \leq m_{s'}$.\qedhere
\end{enumerate}
\end{proof}

\section*{Details for \Cref{rem:complexity}}
Let $m$ and $n$ be the height (i.e. the length of the longest strictly ascending chain) of the poset of subobjects and quotients of $Q$, respectively. The proof of \Cref{thm:termination} shows that
\begin{enumerate}
\item ``Extend $s$'' is executed at most $m$ times;
\item ``Extend $t$'' is executed at most $n$ times;
\item Step (2b) is executed at most $m+n$ times.  
\end{enumerate}
Thus, Steps (1a), (1b) and (2b) are executed at most $2m+2n = O(m+n)$ times.

\section*{Details for \Cref{ex:noetherian}}
\begin{enumerate}
\item The statements for $\D=\Set, \Pos, \Vect{\K}$ are clear.
\item $\D=\JSL$: clearly every finite semilattice is Noetherian. Conversely, if $Q$ is a infinite semilattice, choose a sequence 
 \[q_0,q_1,q_2,\ldots\] of elements of $Q$ such that $q_{n+1}$ is not an element of the subsemilattice $\langle q_0,\ldots,q_n\rangle$ of $Q$ generated by $q_0,\ldots, q_n$. Since this subsemilattice is finite (of cardinality at most $2^{n+1}$), such a $q_{n+1}$ can always be chosen. Then
\[ \langle q_0 \rangle \monoto \langle q_0,q_1\rangle \monoto \langle q_0,q_1,q_2\rangle \monoto \ldots \]
is an infinite strictly ascending chain of subsemilattices of $Q$, showing that $Q$ is not Noetherian.
\item $\D=\Nom$: We show that orbit-finite sets have the claimed polynomial
height. Let~$X$ be an orbit-finite nominal set with~$n$ orbits. It is
clear that chains of subobjects, i.e.\ equivariant subsets, of~$X$
have length at most $n$. It remains to show the polynomial bound on
chains of quotients. The number of orbits decreases non-strictly along
such a chain, and can strictly decrease at most~$n$ times, so it
suffices to consider chains of quotients that retain the same number
of orbits. Such quotients are sums of quotients of single-orbit sets,
so it suffices to consider the case where~$X$ has only one
orbit. Then, all elements of~$X$ have supports of the same size~$k$;
since this number decreases non-strictly along a chain of quotients,
and can strictly decrease at most~$k$ times, it suffices to consider
chains of quotients that retain the same support size.

We now use the standard fact that~$X$ is a quotient of $\At^{*k}$, the
$k$-fold separated product of~$\At$; the same, of course, holds for
all quotients of~$X$. A quotient of $\At^{*k}$ whose elements retain
supports of size~$k$ is determined by a subgroup~$G$ of the symmetric
group~$S_k$. (Specifically, the quotient determined by~$G$ identifies
$(a_1,\dots,a_k)$ and $(a_{\pi(1)},\dots,a_{\pi(k)})$ for all
$(a_1,\dots,a_k)\in\At^{*k}$ and $\pi\in G$. Conversely, from a given
quotient~$e:X\epito Y$, we obtain~$G$ as consisting of all $\pi\in S_k$
such that~$e$ identifies $(a_1,\dots,a_k)$ and
$(a_{\pi(1)},\dots,a_{\pi(k)})$ for all
$(a_1,\dots,a_k)\in\At^{*k}$.) The given chain of quotients thus
corresponds to a chain of subgroups of~$S_k$, which for $k\ge 2$ has
length at most $2k-3$~\cite{Babai86}.
\end{enumerate}

\section*{Details for \Cref{rem:coalglogic}}
We demonstrate that the coalgebraic learning algorithm in~\cite{bkr19}
gets stuck when applied to the setting of $\Sigma$-automata in $\Nom$. In
the following, we assume some familiarity with the algorithm and the
notation introduced in \emph{op.~cit.}

A coalgebraic logic giving the semantics of nominal automata can be described in complete analogy to the $\Set$ case \cite[Example 1]{bkr19}. We instantiate the logical framework to
\[
\xymatrix{
\Nom^\op \ar@{=>}[dr]|\delta \ar[d]_{L^\op} & \ar[l]_<<<<<P \Nom \ar[d]^B \\
\Nom^\op & \Nom \ar[l]^<<<<<{P}
}
\]
where
\[ LX=1+\At \times X,\qquad BX = 2\times [\At,X],\qquad P=[\dash,2].\]
The right adjoint of $P$ is $Q=[\dash,2]\colon \Nom^{op}\to \Nom$. 
For each $X\in \Nom$, the map \[\delta_X\colon 1+\At\times [X,2] \to [2\times[\At,X],2]\] sends the unique element of $1$ to the left product projection, and $(a,f)\in \At\times [X,2]$ to $\delta_X(a,f)\in [2\times [\At,X],2]$ with
\[  
\delta_X(a,f)(b,g)=f(g(a))\quad \text{for $b\in 2$, $g\in [\At,X]$}.
\]
We have the initial algebra for $L$ given by $\Phi = \mu L = \At^*$, and the theory map 
\[ {\mathop{th}}^\gamma\colon X\to Q\Phi = [\At^*,2] \]
for a nominal automaton (i.e.\ $B$-coalgebra)~$X$ is just the unique coalgebra homomorphism from $X$ into the final coalgebra $\nu B = [\At^*,2]$ (cf. \Cref{ex:automata}). 

Now consider the nominal language $K\colon \At^*\to 2$ with $K(w)=1$ iff $w$ has even length. We assume that the unknown coalgebra is given by
\[ (\,X\xra{\gamma} BX\,)\quad=\quad (\,\At^* \xra{\langle K,\gamma'\rangle} 2\times [\At,\At^*]\,) \]
with $\gamma'(w)(a)=wa$ for $w\in \At^*$, $a\in \At$. (The state set~$X$ is effectively made known to the learner in advance since the learning algorithm computes subobjects of $X$. Thus, in the typical scenario~$X$ will be orbit-infinite like in the present example, although of course the language~$K$ can be accepted by an orbit-finite automaton.) The algorithm starts with the trivial observation table 
\[ S=\{\epsilon\} \monoto X \quad\text{and}\quad \Psi = \emptyset\monoto \Phi,\]
This table is closed and the induced conjecture is the trivial one-state automaton accepting all words in $\At^*$. Since $a\not\in K$ for $a\in \At$, the teacher provides the (minimal) counterexample $\{\epsilon\}+\At\monoto \Phi$. After adding it to $\Psi$, the new table is
\[ S=\{\epsilon\} \monoto X \quad\text{and}\quad \Psi = \{\epsilon\}+\At\monoto \Phi.\]
The next reachability step computes the set $\Gamma(S)$ of elements of $X$ reachable from $S=\{\epsilon\}$ in a single transition step:
\[ \Gamma(S)=\At. \]
Thus 
\[ S\vee \Gamma(S) = S\cup \Gamma(S) = \{\epsilon\}+\At\monoto X. \]
Viewing the elements of $Q\Psi=[\{\epsilon\}+\At,2]$ as finitely supported subsets of $\Psi=\{\epsilon\}+\At$, we can describe the map
\[ S\vee \Gamma(S) \monoto X \xra{\mathop{th}^\gamma} Q\Psi  \]
as sending $\epsilon$ to $\{\epsilon\}\seq \Psi$ and every $a\in \At$ to $\At\seq \Psi$, i.e. the image of this map is the discrete nominal set
\[ \overline{S} = \{ \{\epsilon\},\, \At \}\cong 2. \]
In order to close the table, Step 6 of the algorithm now requires to choose~a monomorphism $\ol{S}\monoto X$ subject to certain conditions. But clearly there exists no monomorphism from $\ol{S}=2$ to $X=\At^*$ in $\Nom$, i.e.~the algorithm cannot make the required choice.

\section*{Details for \Cref{ex:unpres}}
Our categorical notion of automata presentation involves quotients of $\MT$-algebras. For practical purposes, it is sometimes more convenient to work with the equivalent concept of a congruence:
\begin{remark}\label{rem:quotcong}
\begin{enumerate}
\item Recall that for a monad $\MT$ on $\Set$ given by a finitary signature $\Gamma$ and equations $E$ between $\Gamma$-terms, quotient algebras of a $\MT$-algebra (i.e.~$(\Gamma,E)$-algebra) $A$ correspond bijectively to {congruences} on $A$. Here a \emph{congruence} is an equivalence relation $\equiv$ on $A$ respecting all $\Gamma$-operations: for all $a,a'\in A$ with $a\equiv a'$, one has
\[ \gamma(a_1,\ldots, a_{i-1}, a, a_{i+1}\ldots, a_n) \equiv \gamma(a_1,\ldots, a_{i-1},a',a_{i+1},\ldots,a_n) \]
for $n>0$, $\gamma\in \Gamma_n$, $i\in \{1,\ldots,n\}$ and $a_j\in A$  ($j\neq i$). The bijection identifies a quotient $e\colon A\epito B$ with its kernel, i.e.~ the congruence given by
\[ a\equiv a' \quad\Lra\quad e(a)=e(a'). \]
Thus, if the object $TI$ is equipped with some $\Sigma$-automata structure $\Sigma\times T I \xra{\delta} T I$, the equivalence in \Cref{def:autpres}(3) states precisely that an equivalence relation $\equiv$ on $T I$ corresponding to a $\MT$-refinable quotient is a congruence on $\MT I$ iff for all $w,w'\in T I$ and $a\in \Sigma$, 
\[ w\equiv w' \quad\text{implies}\quad \delta(a,w)\equiv\delta(a,w'). \]
\item An analogous remark applies to monads $\MT$ on $\Set^S$ corresponding to a finitary $S$-sorted signature $\Gamma$ and equations between $\Gamma$-terms: quotient algebras of a $(\Gamma,E)$-algebra $A$ correspond to $S$-sorted congruence relations, i.e. families of equivalence relations $\equiv=(\equiv_s\seq A_s\times A_s)_{s\in S}$ respecting all operations. Thus, if $T I$ is equipped with the structure of a sorted $\Sigma$-automaton $\delta_{s,t}\colon \Sigma_{s,t}\times (T I)_s\to (T I)_t$ ($s,t\in S$), the equivalence in \Cref{def:autpres}(3) states precisely that an $S$-sorted equivalence relation $\equiv$ on $T I$ corresponding to a $\MT$-refinable quotient is a congruence on $\MT I$ iff for all $w,w'\in (T I)_s$ and $a\in \Sigma_{s,t}$, 
\[ w\equiv w' \quad\text{implies}\quad \delta_{s,t}(a,w)\equiv\delta_{s,t}(a,w'). \]
\end{enumerate}
\end{remark}
We will now describe automata presentations for semigroups, Wilke algebras, and general (ordered) $(\Gamma,E)$-algebras, including stabilization algebras. We will see that in all these cases, the equivalence in \Cref{def:autpres}(3) holds for arbitrary, not only $\MT$-refinable, quotients.

\paragraph{Semigroups.} The free semigroup $T_+I = I^+$ has a $\Sigma$-automata presentation $\delta\colon \Sigma\times I^+\to I^+$ given by the alphabet \[\Sigma = \{ \vec{a}\;:\; a\in I \} \cup \{ \vecr{a} \;:\; a\in I \}\] and the transitions
\[\delta(\vec{a},w)=wa\quad\text{and}\quad  \delta(\vecr{a},w)=aw \quad \text{for}\quad w\in I^+,\, a\in I.\]
We show that (1)--(3) of \Cref{def:autpres} (with $F=\Sigma\times \dash$ on $\Set$) are satisfied. (1) is clear by \Cref{rem:fexamples}. For (2), recall from \Cref{ex:automata_sorted} that $\mu F_I = I\times \Sigma^*$. The unique homomorphism $e_{I^+}\colon I\times \Sigma^*\to I^+$ interprets a word in $I\times \Sigma^*$ as a list of instructions for forming a word in $I^+$, e.g. \[e_{I^+}(a\vec{a}\vec{b}\vecr{b}\vec{a}) \;=\; baaba.\] 
Thus, $e_{I^+}$ is surjective: given $a_1\ldots a_n\in I^+$ with $a_i\in I$, we have 
\[ a_1\ldots a_n = e_{I^+}(a_1\vec{a_2}\cdots \vec{a_n})  \]
To show (3), we use \Cref{rem:quotcong}(1): we need to verify that an equivalence relation $\equiv$ on $I^+$ is a monoid congruence iff, for every $w,w'\in I^+$ and $a\in I$,  
\[ w\equiv w' \quad\text{implies}\quad wa\equiv w'a,\; aw\equiv aw'. \]
The ``only if'' direction is clear. For the ``if'' direction, let $w\equiv w'$ and $v\in I^+$; we need to show that $wv\equiv w'v$ and $vw\equiv vw'$. For the first equivalence, let $v=a_1\ldots a_n$. Then we get the chain of implications
\[ w\equiv w' \;\To\; wa_1\equiv w'a_1 \;\To\; \ldots \;\To \; wa_1\ldots a_n \equiv wa_1\ldots a_n, \]
i.e.  $wv\equiv w'v$. The proof of the second equivalence is symmetric.

\paragraph{Wilke algebras.}  The
free  Wilke algebra $T_\infty(I,\emptyset)=(I^+,I^{\mathsf{up}})$ can be
  presented as a two-sorted $\Sigma$-automaton with the sorted
  alphabet
  $\Sigma=(\Sigma_{+,+},\,\Sigma_{+,\omega},\,\Sigma_{\omega,\omega},\emptyset)$
   given by
  \begin{align*}
  	\Sigma_{+,+} &= \{ \vec{a}: a\in I \} \cup \{ \vecr{a} : a\in I \}\\
  \Sigma_{+,\omega}&=\{\omega\}\cup \{ \vec{v}^\omega : v\in I^+ \}\\
  \Sigma_{\omega,\omega}&=\{ \vecru{a}: a\in I \}
  \end{align*}
  and the transitions below, where  $v,w\in I^+$,
  $z\in I^{\mathsf{up}}$, $a\in I$:
  \begin{align*}
  	\delta_{+,+}(\vec{a},w)&=wa, & \delta_{+,+}(\vecr{a},w)&=aw,\\
  \delta_{+,\omega}(\omega,w)&=w^\omega,&
  \delta_{+,\omega}(\vec{v}^\omega,w)&=wv^\omega,\\
  \delta_{\omega,\omega}(\vecru{a},z)&= az. & &
  \end{align*}
We show that (1)--(3) of \Cref{def:autpres} (with $F$ the functor on $\Set^{\{+,\omega\}}$ from \Cref{ex:automata_sorted}) are satisfied. (1) is clear by \Cref{rem:fexamples}. For (2), recall from \Cref{ex:automata_sorted} that the initial algebra $\mu F_I$ consists of sorted words over $\Sigma$ with an additional first letter from $I$. The homomorphism $e_{(I^+,I^{\mathsf{up}})}\colon \mu F_I\to (I^+,I^{\mathsf{up}})$ views such a word as an instruction for forming a word in $(I^+,I^{\mathsf{up}})$, e.g.
\[ e_{(I^+,I^{\mathsf{up}})}(a\vec{b}\vec{a}\omega\vecru{a}\vecru{a})\;=\;aa(aba)^\omega. \] 
Thus $e_{(I^+,I^{\mathsf{up}})}$ is surjective: every finite word $w\in I^+$ is in the image of $e_{(I^+,I^{\mathsf{up}})}$ as in the case of semigroups, and for an ultimately periodic word $(a_1\ldots a_n)(b_1\ldots b_m)^\omega\in I^{\mathsf{up}}$ we have
\[  (a_1\ldots,a_n)(b_1\ldots b_m)^\omega =  e_{(I^+,I^{\mathsf{up}})}(b_1\vec{b_2}\cdots \vec{b_m}\omega \vecr{a_n}\cdots \vecr{a_1}). \]
To show (3), we use \Cref{rem:quotcong}(2): we need to verify that a two-sorted equivalence relation $\equiv$ on $(I^+, I^{\mathsf{up}})$ is a congruence w.r.t. the Wilke algebra structure iff, for each $w,w',v\in I^+$ with $w\equiv w'$ and $a\in I$, one has
\[ aw\equiv aw',\; wa\equiv w'a,\; w^\omega \equiv (w')^\omega,\; wv^\omega \equiv w'v^\omega, \]
and for each $z,z'\in I^{\mathsf{up}}$ with $z\equiv z'$ and $a\in I$ one has $az\equiv az'$. The ``only if'' direction is clear. For the ``if'' direction, we need to show that for all $v,w,w'\in I^+$ and $z,z'\in I^{\mathsf{up}}$,
\begin{itemize}
\item $w\equiv w'$ implies $vw\equiv vw'$,  $wv\equiv w'v$, $w^\omega\equiv (w')^\omega$ and $wz\equiv w'z$;
\item $z\equiv z'$ implies $wz\equiv wz'$.
\end{itemize}
Let us show that  $w\equiv w'$ implies $wz\equiv w'z$; the proofs of the other statements are similar. We have $z=a_1\ldots a_ny^\omega$ with $a_1,\ldots, a_n\in I$ and $y\in I^+$. From $w\equiv w'$ it follows that
\[ wa_1\equiv w'a_1,\; wa_1a_2\equiv w'a_1a_2,\;\cdots,\; wa_1\ldots a_n\equiv w'a_1\ldots a_n, \]
and thus  \[wz=wa_1\ldots a_ny^\omega \equiv w'a_1\ldots a_ny^\omega=w'z.\]

\paragraph{Stabilization algebras.} Suppose that $\MT$ is a monad on $\Set$ or $\Pos$ induced by a finitary signature $\Gamma$ and (in-)equations $E$; see \Cref{sec:preliminaries}. Then $\MT I$ can be presented as the $\Gamma$-automaton $\delta\colon F_{\Gamma}(TI) \to TI$ given by the $\Gamma$-algebra structure on the free $(\Gamma,E)$-algebra $TI$. We show that (1)--(3) of \Cref{def:autpres} are satisfied.

(1) is clear by \Cref{rem:fexamples}. For (2), observe that the initial algebra $\mu (F_\Gamma)_I$ is the algebra $T_\Gamma I$ of $\Gamma$-terms over $I$, and that the unique homomorphism $e_{TI}\colon T_\Gamma I\epito TI$ interprets $\Gamma$-terms in $TI$. Since the $\MT$-algebra $\MT I$ is generated by the set $I$ as a $\Gamma$-algebra, every element of $\MT I$ can be expressed as a $\Gamma$-term over $I$, i.e.~$e_{TI}$ is surjective. (3) is clear: the equivalence just amounts to the statement that if $e$ is a surjective homomorphism of (ordered) $\Gamma$-algebras and its domain satisfies all (in-)equations in $E$, then so does its codomain. 

By instantiating to the monad $\MT=\MT_S$ on $\Pos$, we see that the free stabilization algebra $\MT_S I$ has a $\Gamma$-automata presentation for the signature $\Gamma$ of \Cref{ex:reclang}(3).

\section*{Proof of \Cref{thm:ltolinl}}
	Suppose that $L$ is recognized via $e\colon \MT I \to (A,\alpha)$ and $p\colon A\to O$, where $(A,\alpha)$ is a finite $\MT$-algebra. We may assume that $e\in \E$. (Otherwise consider the $(\E,\M)$-factorization \[\xymatrix{ \MT I \ar@{->>}[r]^{e'} & (A',\alpha') \ar@{>->}[r]^m & (A,\alpha) }\] of $e$. Since $\D_f$ is closed under subobjects, $L$ is recognized by the finite $\MT$-algebra $(A',\alpha')$ via $e'$ and $p\o m$, i.e. we can replace $e$ by $e'$.)
	
	Since $(F,\delta)$ forms a weak automata presentation, the object $A$ can be equipped with an $F$-algebra structure $\delta_A\colon FA\to A$ such that $e\colon (TI,\delta)\epito (A,\delta_A)$ is an $F$-algebra homomorphism. Equipping $TI$ and $A$ with the initial states $\eta_{I}\colon I\to TI$ and $e\o \eta_I\colon I\to A$, respectively, we can view $TI$ and $A$ as $F_I$-algebras and $e$ as an $F_I$-algebra homomorphism. By initiality of $\mu F_I$, it follows that $e_A = e\o e_{TI}$. It follows that the diagram below commutes, which proves that the automaton $(A,\delta_A,e\o \eta_I,p)$ accepts the language $\lin{L}=L\o e_{TI}$. 
	\begin{equation}\label{eq:eti}
	\xymatrix{
		\mu F_I \ar@{->>}[r]^{e_{TI}} \ar[dr]_{e_A} & TI \ar[d]_e \ar[r]^L & O \\
		& A \ar[ur]_p & 
	}
	\end{equation}
Since $A$ is finite, we conclude that $\lin{L}$ is regular. \qed

\section*{Proof of \Cref{thm:synalgconst}}
The proof is illustrated by the diagram below:
\[
\xymatrix{
\mu F_I \ar@{->>}[r]^{e_{TI}} \ar@/^2em/[rr]^{\lin{L}} \ar@{->>}[d]_{e_A} & TI \ar@{->>}[dl]_e \ar@{->>}[d]^{e'} \ar[r]^L & O \\
A \ar@/_4em/[rru]_{f_A} & B \ar[l]^h \ar[ur]_{p'} &
}
\]
	Let $A=\Min{\lin{L}}$ be the minimal automaton for the language $\lin{L}$. Equipping $TI$ with the initial states $\eta_I\colon I\to TI$ and the final states $L\colon TI\to O$, we can view $TI$ as an automaton accepting $\lin{L}=L\o e_{TI}$. Since $e_{TI}\in \E$ (that is, the automaton $TI$ reachable) and $A$ is minimal, there exists a unique automata homomorphism $e\colon TI\epito A$. We now prove the theorem by establishing the following claims:

\medskip\noindent \textbf{Claim 1.} For every finite quotient $\MT$-algebra $e'\colon \MT I \epito (B,\beta)$ that recognizes $L$, there exists a unique $h\colon B\to A$ with $e=h\o e'$. 

\medskip \noindent \emph{Proof.} As in the proof of \Cref{thm:ltolinl}, $B$ can be viewed as a reachable automaton recognizing $\lin{L}$. By minimality of $A$, there is an automata homomorphism $h\colon B\to A$. We have
		\[ h\o e' \o e_{TI} = e\o e_{TI}\]
		because both sides are $F_I$-algebra homomorphisms from $\mu F_I$ to $B$ and $\mu F_I$ is initial. Thus $h\o e'=e$ because $e_{TI}$ is epic.

\medskip\noindent \textbf{Claim 2.} The automaton $A$ can be equipped with $\MT$-algebra structure $(A,\alpha_A)$ such that $e\colon \MT I \epito (A,\alpha_A)$ is a $\MT$-homomorphism.

\medskip\noindent \emph{Proof.} Since $L$ is $\MT$-recognizable, we have $L=p'\o e'$ for some finite quotient $\MT$-algebra $e'\colon \MT\epito (B,\beta)$ and some $p'\colon A\to O$. By Claim 1, $e=h\o e'$ for some $h$, which shows that $e$ is $\MT$-refinable. Since $(F,\delta)$ is an automata presentation, we obtain the desired $\alpha_A$.

\medskip\noindent\textbf{Claim 3.} $e\colon \MT I\epito (A,\alpha_A)$ is a syntactic $\MT$-algebra for $L$. 

\medskip\noindent \emph{Proof.} The homomorphism $e$ recognizes $L$ via $f_A$: we have \[ L\o e_{TI} = \lin{L} = f_A\o e_A = f_A\o e\o e_{TI}\]
and thus $L=f_A\o e$ because $e_{TI}$ is epic. The universal  property of $e$ follows from Claim 1. \qed

\end{document}